\def\ps@pprintTitle{%
 \let\@oddhead\@empty
 \let\@evenhead\@empty
 \def\@oddfoot{}%
 \let\@evenfoot\@oddfoot}
\newcommand\independent{\protect\mathpalette{\protect\independenT}{\perp}}
\def\independenT#1#2{\mathrel{\rlap{$#1#2$}\mkern2mu{#1#2}}}
\newcommand{\specialcell}[2][c]{%
\begin{tabular}[#1]{@{}c@{}}#2\end{tabular}}
\newtheorem{lemma}{Lemma}
\newtheorem{theorem}{Theorem}
\newtheorem{corollary}{Corollary}
\tikzset{
    -Latex,auto,node distance =1 cm and 1 cm,semithick,
    state/.style ={ellipse, draw, minimum width = 0.7 cm},
    point/.style = {circle, draw, inner sep=0.04cm,fill,node contents={}},
    bidirected/.style={Latex-Latex,dashed},
    el/.style = {inner sep=2pt, align=left, sloped}
}
\begin{document}

\begin{frontmatter}
\title{Causal inference with limited resources: proportionally-representative interventions}

\author[1]{Aaron L. Sarvet  \corref{cor1}}
\author[1]{Kerollos N. Wanis}
\author[1,2]{Jessica Young}
\author[3]{Roberto Hernandez-Alejandro}
\author[1,4,5]{Miguel A. Hern\'{a}n}
\author[1,6]{Mats J. Stensrud}

\address[1]{Department of Epidemiology, Harvard T. H. Chan School of Public Health, USA}
\address[2]{Department of Population Medicine, Harvard Medical School and Harvard Pilgrim Health Care Institute, Boston, Massachusetts, USA}
\address[3]{Division of Transplantation and Hepatobiliary Surgery, University of Rochester, Rochester, NY, USA}
\address[4]{Department of Biostatistics, Harvard T.H. Chan School of Public Health, Boston, Massachusetts}
\address[5]{Harvard-MIT Division of Health Sciences and Technology, Cambridge, Massachusetts}
\address[6]{Department of Biostatistics, University of Oslo, Norway}
\cortext[cor1]{\textbf{Contact information for corresponding author:}\\
Aaron L. Sarvet, Department of Epidemiology, Harvard T. H. Chan School of Public Health, USA. \url{asarvet@g.harvard.edu}}

\begin{abstract}
Investigators often evaluate treatment effects by considering settings in which all individuals are assigned a treatment of interest, assuming that an unlimited number of treatment units are available. However, many real-life treatments are of limited supply and cannot be provided to all individuals in the population. For example, patients on the liver transplant waiting list cannot be assigned a liver transplant immediately at the time they reach highest priority because a suitable organ is not likely to be immediately available. In these cases, investigators may still be interested in the effects of treatment strategies in which a finite number of organs are available at a given time, that is, treatment regimes that satisfy resource constraints. Here, we describe an estimand that can be used to define causal effects of treatment strategies that satisfy resource constraints:  proportionally-representative interventions for limited resources. We derive a simple class of inverse probability weighted estimators, and apply one such estimator to evaluate the effect of restricting or expanding utilization of ‘increased risk’ liver organs to treat patients with end-stage liver disease. Our method is designed to evaluate policy-relevant interventions in the setting of finite treatment resources.
\end{abstract}
\end{frontmatter}

\date{December 2019}

\newpage


\section{Introduction}
\label{sec: Introduction}

The average treatment effect is consistently estimated with data from an ideal randomized trial. This effect is identified because the expected outcome in the treatment arm is identical to the expected potential outcome had \textit{everyone} been provided treatment (and likewise for control). Observational studies can provide estimates with the same interpretation, when the data are used to emulate a randomized trial \cite{robins1986new, WhatIf}. Results of these studies are often used to justify decisions by policy makers interested in population outcomes under hypothetical policies. However, these results are not directly relevant when treatment is not available for \textit{everyone} in the target population, for example due to practical limitations on treatment resources.

Limitations on treatment resources represent important conditions in nearly every conceivable health policy setting. Consider studies that aim to assess the effects of liberal vs.\ conservative strategies for surgical blood transfusion  \cite{mazer2017restrictive}, or of care reception at hospitals with high vs.\ low procedural volume \cite{vemulapalli2019procedural}. The corresponding randomized trial implicitly considers infeasible policies, in which: \textit{all} patients follow a liberal transfusion strategy; or \textit{all} patients receive the procedure at a hospital with high procedural volume. Neither study considers policies that are implementable (due to lack of sufficient blood supply, or surgeons), yet their results form the basis of medical guidelines and national policies. For example, implementing an apparently favorable policy may lead to unexpected adverse outcomes due to increases in waiting times that are unavoidable in limited resource settings. These mechanisms are not appreciated by conventional methods.

Studies aiming to inform real-world policy-making should consider treatment strategies that are compatible with real-world resource constraints. We describe a new class of estimands that is relevant for policy makers who are interested in causal questions in settings where treatment resources are limited. We refer to estimands of this class as expected potential outcomes under \textit{proportionally-representative interventions} that constrain treatment resources.. We present an inverse probability weighting estimator for these estimands that is easy to implement with standard statistical software. To fix ideas and facilitate their introduction, the main text presents the observed data structure, identification conditions, and estimation strategies in the following common setting: patients are waiting to receive a single dose of one of two treatments types, and policy makers are considering eliminating the suspected inferior treatment type, versus continuing to use both treatment types as usual (i.e.\ no policy change). Such settings are common in medicine and public health, where e.g.\ patients are faced with decisions between treatment types with limited supply; from the patient's perspective, the decision is reducible to choosing between a plan of accepting the first treatment option that becomes available, and a plan of waiting for the suspected superior treatment option. For example, a patient with non-emergency indications for coronary artery bypass graft surgery might seek recommendations for choosing between procedure reception as soon as possible (including possibly at their local area hospital) or waiting to undergoing the procedure at a regional center of excellence. 

In general, proportionally-representative interventions permit consideration of settings where treatment resources are constrained to any hypothetical level. For example, investigators may be interested in a policy where utilization of the suspected superior treatment type is doubled (relative to extant levels), possibly corresponding to increased investment in the superior treatment resource supply. Therefore, we have included extensions to settings in which resources are arbitrarily constrained. We also present extensions to censoring, provide proofs of key results, and further discuss relationships between these new limited resource estimands and classical estimands. Finally we illustrate the method by estimating the effects of different policies for liver transplantation, where treatment resource limitations are severe.

\section{Data Structure}
\label{sec: Data Structure}

We consider a study in which $n$ individuals are followed for $k \in \{0,1,\dots,K\}$ equally spaced discrete time intervals. The individuals $i \in \{1,\dots, n\}$ are independent and identically distributed at baseline, and thus we generally  omit the $i$ subscript on the random variables.

In each interval $k$, an individual is a candidate for receiving treatments $B_k$ and $H_k$, where $B_k$ is an indicator for receiving the suspected superior treatment (e.g.\ a ``high-quality'' organ transplant) and $H_k$ is an indicator for receiving the suspected inferior treatment (e.g. a ``low-quality'' organ transplant). Let $L_k$ be a vector of the individual's covariates with support $\mathcal{L}_k$ and $Y_k$ a binary survival outcome by the end of interval $k$.

We use over-lines (e.g. $\overline{B}_k$) to indicate the history of variables, underlines (e.g. $\underline{B}_k$) to indicate the future trajectory of variables, and superscripts to indicate potential outcomes under some policy, e.g. $Y_k^{g_z}$ is the potential outcome in interval $k$ under policy $g_z$. We supplement superscripts with the `plus' symbol (+) to differentiate the natural values of treatment under some policy (e.g. $B_k^{g_z}$) from the values that treatment takes immediately following intervention on that variable under that policy (e.g. $B_k^{g_z+}$). This is coherent with the notation in section 5.1 of Richardson and Robins, 2013 \cite{richardson2013single} and see Young et. al (2014) \cite{young2014identification} for a thorough review of the distinction between natural and post-intervention treatment variables. We define a topological order within each interval as $\Big(L_k, B_k, H_k, Y_k\Big)$. 

By definition, all individuals are alive and untreated in interval 0, so $B_0=H_0=Y_0=0$, $L_0$ is equal to the empty set $\emptyset$, and individuals may only possibly receive a single treatment, such that if $B_k=1$ then $\underline{H}_k=\underline{B}_{k+1}=0$, and if $H_k=1$ then $\underline{B}_{k+1}=\underline{H}_{k+1}=0$. For notational convenience, we define the indicator functions $R_k=I\{Y_{k-1}=\overline{H}_{k-1}=\overline{B}_{k-1}=0\}$ and $S_k=I\{\overline{B}_{k}=Y_{k-1}= \overline{H}_{k-1}=0\}$. Thus, by the above definitions, $R_k$ and $S_k$ indicate treatment eligibility (i.e.\ ``being at risk'' of receiving treatment) in an interval, for suspected superior and inferior treatments, respectively. It follows immediately that $R_1=1$.
    
\section{Proportionally-representative interventions}
\label{sec: Regime}

In this section, we describe policy-relevant regimes $g_z$, within the class of proportionally-representative interventions. As elaborated below, proportionally-representative interventions are stochastic interventions defined by observed conditional treatment densities, and by \textit{a priori} user-specified resource constraints. To fix ideas, we present specific policy-relevant regimes $g_1$ and $g_0$, where $g_1$ corresponds to a policy of abolishing the suspected-inferior treatment, and  $g_0$ corresponds to a policy of no intervention.  

First, we define the treatment resource constraints that motivate the policy-relevant regimes. Let $q_k^{g_z}$ and $m_k^{g_z}$, respectively, be the multiplicative factors by which the population utilization of superior and inferior treatment units, respectively, are changed (relative to the unintervened world) in interval $k$ under regime \textit{$g_z$}. For regime $g_1$: 

\begin{align}
    & q_k^{g_1} = 1 \\
    & m_k^{g_1} = 0,
\end{align}

and for regime $g_0$:

\begin{align}
    & q_k^{g_1} = 1 \\
    & m_k^{g_1} = 1,
\end{align}

for all $k \in \{1, \dots, K\}$.     

In words, the constraints are specified such that: 1) the population utilization of suspected superior treatment units used under both regimes $g_1$ and $g_0$ in interval $k$ ($q_k^{g_1}\times P(B_k=1)$ and $q_k^{g_0}\times P(B_k=1)$) are equal to the number of superior treatment units actually used in the absence of intervention ($P(B_k=1)$); and 2) the population utilization of suspected inferior treatment units used under regime $g_1$ in interval $k$ ($m_k^{g_1}\times P(H_k=1)$) is set to 0 (corresponding to a policy in which the use of such treatment units is abolished), whereas the number of such units used under regime $g_0$ in interval $k$ ($m_k^{g_0}\times P(H_k=1)$) is maintained equal to the number of inferior treatment units actually used in the absence of intervention ($P(H_k=1)$, corresponding to a policy of no intervention). 
    
For notational convenience, we re-express the constraints for regime $g_1$ as

\begin{align}
    & P(B_k^{g_1+}=1) =  P(B_k=1) \label{eq; c1.1}\\
    & P(H_k^{g_1+}=1) = 0, \label{eq; c2.1}
\end{align}

and for regime $g_0$ as

\begin{align}
    & P(B_k^{g_0+}=1) =  P(B_k=1) \label{eq; c1.2}\\
    & P(H_k^{g_0+}=1) = P(H_k=1), \label{eq; c2.2}
\end{align}
for all $k \in \{1, \dots, K\}$. 
    
In general, we define regime $g_z$ to be a stochastic intervention on $H_k$ and  $B_k$ for all $k \in \{1,...,K\}$, where $z \in \{0,1\}$ indicates regime, such that intervention distributions are defined as follows

\begin{align}
    &  f_{B_k^{g_z+} \mid R_k^{g_z+}, \overline{L}_k^{g_z}}(1 \mid R_k, \overline{L}_k) 
        =      \alpha_{k}(z) \times f_{B_k \mid R_k, \overline{L}_k}(1 \mid R_k, \overline{L}_k) \label{eq: int Bk} \\
    & f_{H_k^{g_z+} \mid S_k^{g_z+}, \overline{L}_k^{g_z}}(1 \mid S_k, \overline{L}_k) 
        =      \beta_k(z) \times f_{H_k \mid S_k, \overline{L}_k}(1 \mid S_k, \overline{L}_k), \label{eq: int Hk}
\end{align}

with probability 1, where $f_{B_k \mid R_k, \overline{L}_k}(\cdot \mid \cdot)$ and $f_{B_k^{g_z+} \mid R_k^{g_z+}, \overline{L}_k^{g_z}}(\cdot \mid \cdot)$ are the probability density functions for receiving the suspected superior treatment unit under the observed data generating mechanism and under regime $g_z$, respectively, and likewise for $f_{H_k \mid S_k, \overline{L}_k}(\cdot \mid \cdot)$ and $f_{H_k^{g_z+} \mid S_k^{g_z+}, \overline{L}_k^{g_z}}(\cdot \mid \cdot)$, with respect to the suspected inferior treatment. Note that $R_k^{g_z+}=I\{Y_{k-1}^{g_z}= \overline{H}_{k-1}^{g_z+}=\overline{B}_{k-1}^{g_z+}=0\}$ and likewise $S_k^{g_z+}$ are defined in terms of post-intervention (as opposed to natural) treatment variables. Moreover, define $\alpha_k(z)$ and $\beta_k(z)$ as regime $z$-specific scaling functions that satisfy certain resource constraints,

\begin{align}
    &  \alpha_k(z) =\frac{P(B_k^{g_z+}=1)}{P(B_k^{g_z}=1)} \label{eq; alphak},\\ 
    &  \beta_k(z) = \frac{P(H_k^{g_z+}=1)}{P(H_k^{g_z}=1)} \label{eq; betak},
\end{align}

for all $k \in \{1,\dots,K\}$. By plugging in the resource constraints defined by expressions \eqref{eq; c1.1}-\eqref{eq; c2.2}, we have that

\begin{align}
    &  \alpha_k(1) =\frac{P(B_k=1)}{P(B_k^{g_1}=1)} \\
    &  \beta_k(1) = 0,
\end{align}

and

\begin{align}
    &  \alpha_k(0) =\frac{P(B_k=1)}{P(B_k^{g_0}=1)} \\
    &  \beta_k(0) = \frac{P(H_k=1)}{P(H_k^{g_0}=1)},
\end{align}
where we deliberately use the natural values of treatment, $H_k^{g_z}$ and $B_k^{g_z}$. We show in Lemma \ref{lemma; appA} of Appendix A that $\alpha_k(z)$ and $\beta_k(z)$ will always take some value between 0 an 1 under the constraints in \eqref{eq; c1.1}-\eqref{eq; c2.2}.

We remind the reader that regime $g_z$ involves \textit{proportionally-representative interventions} on the conditional likelihoods of treatment reception. Specifically regime $g_z$ assigns an eligible individual treatment in interval $k$ probability equal to $\alpha_k(z) \times P(B_k=1 \mid R_k$, that is, with probability equal to some constant ($\alpha_k(z)$) times the factual likelihood of that individual receiving that treatment given their covariate and treatment history. The proportionally-representative interventions are stochastic, because they randomly assign treatment according to some pre-specified, non-degenerate distribution \cite{young2014identification}. They are representative \cite{young2014identification}, because this distribution is chosen, for each individual, to be a function of the treatment distribution of the observed data generating mechanism, conditional on that individual's treatment and confounder history (e.g. $f_{B_k \mid R_k, \overline{L}_k}(\cdot \mid \cdot)$). Thus, proportionally-representative interventions are considered to be dynamic interventions, with respect to treatment and confounder history, in that the treatment assignment rule for each individual is a function of these variables. Proportionally-representative interventions constrain resources because $\alpha_k(z)$ and $\beta_k(z)$ are chosen specifically so that marginal treatment utilization will be equal in expectation to some value consistent with a set of pre-specified treatment limitations, as in expressions \eqref{eq; c1.1} and \eqref{eq; c1.2}. In Appendix B we prove that the resource constraints are satisfied under these interventions.  

The above expressions represent a subset of proportionally-representative interventions for limited resources: interventions in which treatments are either abolished or constrained such that marginal treatment utilization in the intervened world will be equal in expectation to that in the unintervened world. In general, proportionally-representative interventions could be employed to constrain resource utilization to any arbitrary level, including levels greater than those in the unintervened world, and we provide flexible definitions of these regimes in Appendix A. We also show in Appendix C that, when all treatment types are either abolished, or are assumed unlimited and provided to every individual with a particular covariate and treatment history, then -- for all covariate and treatment histories and all time points -- the joint counterfactual treatment density under such a regime is exactly equal to the analogous density under a particular traditional dynamic deterministic regime. This equivalence demonstrates that any dynamic deterministic regime can be understood as a special (and often unrealistic) case of a proportionally representative intervention, in which treatment resources are either abolished or assumed to be practically unlimited.

\section{Identification:} 
\label{sec: Identification}

To identify expected potential outcomes under regime \textit{$g_z$}, $\mathbb{E}[Y_k^{g_z}]$, for all $k \in \{1, \dots, K\}$ from observed data distributions, the following identification conditions are sufficient:

\subsection{Exchangeability 1}

\begin{align}
    \underline{Y}^{g_z}_t \independent I(B_t^{g_z}=b_t) \mid \overline{L}_t^{g_z}=\overline{l}_t, \overline{Y}_{t-1}^{g_z}=0, \overline{H}_{t-1}^{g_z}=\overline{h}_{t-1}, \overline{B}_{t-1}^{g_z}=\overline{b}_{t-1}\label{ex1B}
\end{align}

for $\{\overline{b}_{t}, \overline{l}_t, \overline{h}_{t-1}  \mid P(\overline{B}_{t}^{g_z+}=\overline{b}_{t},  \overline{L}_{t}^{g_z}=\overline{l}_{t}, \overline{Y}_{t-1}^{g_z}=0, \overline{H}_{t-1}^{g_z+}=\overline{h}_{t-1})>0\}$,  $t\in\{1,\dots,k\}$, and
\\

\begin{align}
    \underline{Y}^{g_z}_t \independent I(H_t^{g_z}=h_t) \mid \overline{B}_{t}^{g_z}=\overline{b}_{t}, \overline{L}_t^{g_z}=\overline{l}_t, \overline{Y}_{t-1}^{g_z}=0, \overline{H}_{t-1}^{g_z}=\overline{h}_{t-1}, \label{ex1H}
\end{align}

for $\{\overline{h}_{t}, \overline{b}_{t}, \overline{l}_{t}  \mid P(\overline{H}_{t}^{g_z+}=\overline{h}_{t},  \overline{B}_{t}^{g_z+}=\overline{b}_{t}, \overline{L}_{t}^{g_z}=\overline{l}_{t}, \overline{Y}_{t-1}^{g_z}=0)>0\}$, $t\in\{1,\dots,k\}$.
\\

These sequential exchangeability conditions are implied by the standard ``no unmeasured confounding'' for the outcomes $\underline{Y}_t$, with respect to past treatment.

\subsection{Exchangeability 2}

\begin{align}
    \underline{B}^{g_z}_t \independent I(B_{t-1}^{g_z}=b_{t-1}) \mid \overline{L}_{t-1}^{g_z}=\overline{l}_{t-1}, \overline{Y}_{t-2}^{g_z}=0, \overline{H}_{t-2}^{g_z}=\overline{h}_{t-2}, \overline{B}_{t-2}^{g_z}=\overline{b}_{t-2}, \label{ex2BB}
\end{align}

for $\{\overline{b}_{t-1}, \overline{l}_{t-1}, \overline{h}_{t-2}  \mid P(\overline{B}_{t-1}^{g_z+}=\overline{b}_{t-1}, \overline{L}_{t-1}^{g_z}=\overline{l}_{t-1}, \overline{Y}_{t-2}^{g_z}=0, \overline{H}_{t-2}^{g_z+}=\overline{h}_{t-2})>0\}$,  $t\in\{1,\dots,k\}$,
\\
and

\begin{align}
    \underline{B}^{g_z}_t \independent I(H_{t-1}^{g_z}=h_{t-1}) \mid \overline{B}_{t-1}^{g_z}=\overline{b}_{t-1}, \overline{L}_{t-1}^{g_z}=\overline{l}_{t-1}, \overline{Y}_{t-2}^{g_z}=0, \overline{H}_{t-2}^{g_z}=\overline{h}_{t-2}, \label{ex2BH}
\end{align}

for $\{\overline{h}_{t-1}, \overline{b}_{t-1}, \overline{l}_{t-1}  \mid P(\overline{H}_{t-1}^{g_z+}=\overline{h}_{t-1}, \overline{B}_{t-1}^{g_z+}=\overline{b}_{t-1}, \overline{L}_{t-1}^{g_z}=\overline{l}_{t-1}, \overline{Y}_{t-2}^{g_z}=0)>0\}$,  $t\in\{1,\dots,k\}$,
\\

These exchangeability conditions are implied by the assumption of ``no unmeasured confounding'' for natural treatments $\underline{B}^{g_z}_t$, with respect to past treatment, as in Young et al. \cite{young2014identification}. The reason why we do not need analogous exchangeability conditions for $\underline{H}^{g_z}_t$ is the specific regimes under consideration, $g_1$ and $g_0$, specified by the constraints of expressions \eqref{eq; c1.1}-\eqref{eq; c2.2}. For proportionally  representative interventions that arbitrarily constrain resources, additional exchangeability conditions for $\underline{H}^{g_z}_t$ are needed, as outlined in Appendix A. 

While we have noted that the above conditions are implied by standard ``no unmeasured confounding'' conditions, Exchangeabilities 1 and 2 are weaker because they are restricted to treatment levels within covariate and treatment histories that are plausible under the regime of interest, \textit{$g_z$}; Exchangeabilities 1 and 2 can together be interpreted as a time-varying generalization to the conditional exchangeability condition C2 in Haneuse and Rotnitzky, 2013 \cite{haneuseEstimationEffectInterventions2013}, who considered feasible interventions on an individual's surgical operating time that depended on that individual's operating time they would have received under the observed data generating mechanism. Haneuse and Rotnitzky noted that the usual ``no unmeasured confounding'' assumptions imply that any individual could have received a surgical operating time of any of the lengths considered within a particular covariate level, and that these surgical operating time were as good as randomized within this group. This assumption is unreasonable when some ranges of operating times are infeasible for a subpopulation defined by the particular covariate level. in the context of the applied example of section \ref{sec: Example}, we highlight why the stronger, traditional exchangeability condition is unreasonable in many settings with limited resources.  

Furthermore, note that Exchangeability 1 and Exchangeability 2 together are implied by the Exchangeability condition of Theorem 31 in Richardson and Robins, 2013 \cite{richardson2013single}. Thus, Exchangeabilities 1 and 2 are easy to check in a Single World Intervention Template (SWIT) for conditional d-separations between the nodes corresponding to $\underline{Y}^{g_z}_k$ and past natural treatment variables $B_k^{g_z}$ and $H_k^{g_z}$. We provide an example SWIT in which these hold in Figure \ref{fig: SWIT1}. 

\subsection{Consistency}

\begin{align}
    & \text{if }  \overline{B}_{t}=\overline{B}_{t}^{g_z+} \text{ and }  \overline{H}_{t}=\overline{H}_{t}^{g_z+} \nonumber\\
    & \text{then } Y_t = Y_t^{g_z}, L_{t+1} = L_{t+1}^{g_z} \text{, and } B_{t+1} = B^{g_z}_{t+1}
    \label{ass: consistency B and Y}  \text{, and: }
\end{align} 

\begin{align}
    & \text{if } \overline{H}_{t}=\overline{H}_{t}^{g_z+} \text{ and } \overline{B}_{t+1}=\overline{B}_{t+1}^{g_z+} \nonumber\\
    & \text{then } H_{t+1} = H^{g_z}_{t+1}
    \label{ass: consistency H},
\end{align}

for all $t\in\{0,\dots,k\}$. The consistency assumptions state that if an individual whose observed treatment history up until some interval  equals  their assigned treatment history under regime $g_z$, then the values of all future observed variables that are non-descendants of future treatments (minimally, the immediately subsequent outcome, covariates, and the natural value of subsequent treatment) are equal to the value they would naturally take had that individual actually followed regime $g_z$.

\subsection{Positivity}

\begin{align}
    & f_{R_t^{g_z+}, \overline{L}_t^{g_z}}(1, \overline{L}_t)>0\text{ and } f_{B_t^{g_z+} \mid R_t^{g_z+}, \overline{L}_t^{g_z}}(B_t \mid 1, \overline{L}_t)>0 \implies \nonumber  \\
    & \quad f_{B_t \mid R_t, \overline{L}_t}(B_t \mid 1, \overline{L}_t)>0\text{, w.p.1} \label{eq: positivityB},
\end{align}

and

\begin{align}
    & f_{S_t^{g_z+}, \overline{L}_t^{g_z}}(1, \overline{L}_t)>0 \text{ and } f_{H_t^{g_z+} \mid S_t^{g_z+}, \overline{L}_t^{g_z}}(H_t \mid 1, \overline{L}_t)>0 \implies \nonumber  \\
    &   \quad f_{H_t \mid S_k, \overline{L}_t}(H_t \mid 1, \overline{L}_t)>0\text{, w.p.1},  \label{eq: positivityH} 
\end{align}

for all $t \in \{1, \dots, t\}$. That is, if there exists in interval $t$ some treatment-eligible individuals with covariate history $\overline{l}_t$ who are assigned treatment $b_t$ under regime ${g_z}$, then there must exist individuals with the same covariate and treatment history under the observed data generating mechanism. Note that most proportionally-representative interventions guarantee this positivity condition, \eqref{eq: positivityB}, since by definition of the regimes in expressions \eqref{eq: int Bk} and \eqref{eq: int Hk},  $f_{B_t \mid R_t, \overline{L}_t}(\cdot \mid \cdot)=0$ implies $f_{B_t^{g_z+} \mid R_t^{g_z}, \overline{L}_t^{g_z}}(\cdot \mid \cdot) = 0$, and therefore lack of positivity would contradict the definition of the regimes.      

\subsection{Identification formulae} \label{subsec: gfom}

When conditions hold, we can identify $\mathbb{E}( Y^{g_z}_K )$ from the non-extended g-formula of Robins (1986) for $Y_K$, $f^{g_z}_{Y_K}(1)$,

\begin{align}
  f^{g_z}_{Y_K}(1) =
   & \sum_{\overline{l}_K} \sum_{\overline{h}_K} \sum_{\overline{b}_K} \sum_{k=1}^K P(Y_k=1 \mid \overline{H}_k=\overline{h}_k, \overline{B}_k=\overline{b}_k, \overline{L}_k=\overline{l}_k, Y_{k-1}=0) \label{eq; gformY} \\
    & \times \prod_{j=1}^{k} \Big\{f_{H_j^{g_z+} \mid \overline{B}_{j}^{g_z+}, \overline{L}_j^{g_z}, Y_{j-1}^{g_z}, \overline{H}_{j-1}^{g_z+}}(h_j \mid \overline{b}_{j}, \overline{l}_j, 0, \overline{h}_{j-1})\nonumber \\
    & \times f_{B_j^{g_z+} \mid \overline{L}_j^{g_z}, Y_{j-1}^{g_z}, \overline{H}_{j-1}^{g_z+}, \overline{B}_{j-1}^{g_z+}}(b_j \mid \overline{l}_j, 0, \overline{h}_{j-1}, \overline{b}_{j-1})\nonumber \\
    & \times P(L_j=l_j \mid Y_{j-1}=0, \overline{H}_{j-1}=\overline{h}_{j-1}, \overline{B}_{j-1}=\overline{b}_{j-1}, \overline{L}_{j-1}=\overline{l}_{j-1}) \nonumber\\
    & \times P(Y_{j-1}=0 \mid \overline{H}_{j-1}=\overline{h}_{j-1}, \overline{B}_{j-1}=\overline{b}_{j-1}, \overline{L}_{j-1}=\overline{l}_{j-1}, Y_{j-2}=0)\Big\} \nonumber,
\end{align}

where  

\begin{align}
    f_{B_j^{g_z+} \mid \overline{L}_j^{g_z}, Y_{j-1}^{g_z}, \overline{H}_{j-1}^{g_z+}, \overline{B}_{j-1}^{g_z+}}(b_j \mid \overline{l}_j, 0, \overline{h}_{j-1}, \overline{b}_{j-1}) =
    & \Big(\alpha_j(z) \times f_{B_j \mid \overline{L}_j, Y_{j-1}, \overline{H}_{j-1}, \overline{B}_{j-1}}(1 \mid \overline{l}_j, 0, \overline{h}_{j-1}, \overline{b}_{j-1})\Big)^{b_j} \nonumber \\
    \times &  \Big(1-\alpha_j(z) \times f_{B_j \mid \overline{L}_j, Y_{j-1},  \overline{H}_{j-1}, \overline{B}_{j-1}}(1 \mid \overline{l}_j, 0, 
\overline{h}_{j-1}, \overline{b}_{j-1})\Big)^{1-b_j} \nonumber .
\end{align}

and 

\begin{align}
    f_{H_j^{g_z+} \mid \overline{B}_{j}^{g_z+}, \overline{L}_j^{g_z}, Y_{j-1}^{g_z}, \overline{H}_{j-1}^{g_z+}}(h_j \mid \overline{b}_{j}, \overline{l}_j, 0, \overline{h}_{j-1}) =
    & \Big(\beta_j(z) \times f_{H_j \mid \overline{B}_{j}, \overline{L}_j, Y_{j-1}, \overline{H}_{j-1}}(1 \mid \overline{b}_{j}, \overline{l}_j, 0, \overline{h}_{j-1})\Big)^{h_j} \nonumber \\
    \times &  \Big(1-\beta_j(z) \times f_{H_j \mid \overline{B}_{j}, \overline{L}_j, Y_{j-1}, \overline{H}_{j-1}}(1 \mid \overline{b}_{j}, \overline{l}_j, 0, \overline{h}_{j-1})\Big)^{1-h_j} \nonumber .
\end{align}

Since all individuals are alive and untreated in interval 0, regardless of regime, we can trivially apply the consistency condition in expression \eqref{eq; c1.1} to find that $\alpha_1(z)$ is identified by $\frac{P(B_1^{g_z+}=1)}{P(B_1=1)}$. Thus $\alpha_j(z)$ and $\beta_{j-1}(z)$ are identified recursively for $j \in \{ 2, \dots, K+1 \}$, respectively, so that

\begin{align}
    \alpha_j(z) =
    \frac{P(B_j^{g_z+}=1)}{
        f^{g_z}_{B_j}(1)
    } \nonumber
\end{align}

and 

\begin{align}
    \beta_{j-1}(z) = 
    \frac{P(H_{j-1}^{g_z+}=1)}{
        f^{g_z}_{H_{j-1}}(1)
    }, \nonumber
\end{align}

where $P(B_j^{g_z+}=1)$ and $P(H_{j-1}^{g_z+}=1)$ are determined by the constraints, and $f^{g_z}_{B_j}(1)$ is the non-extended g-formula for $B_j$,

\begin{align}
    f^{g_z}_{B_j}(1) = 
    & \sum_{\overline{l}_j} P(B_j=1 \mid \overline{L}_j=\overline{l}_j, R_j=1) \label{eq; gformB} \\
    & \times \prod_{m=1}^{j} \Big\{  P(L_m=l_m \mid R_m=1, \overline{L}_{m-1}=\overline{l}_{m-1}) \nonumber\\
    & \times P(Y_{m-1}=0 \mid H_{m-1}=0, S_{m-1}=1, \overline{L}_{m-1}=\overline{l}_{m-1}) \nonumber \\
    & \times  \big(1-\beta_{m-1}(z) \times f_{H_{m-1} \mid S_{m-1}, \overline{L}_{m-1}}(1 \mid 1, \overline{l}_{m-1})\big) \nonumber \\
    & \times  \big(1-\alpha_{m-1}(z) \times f_{B_{m-1} \mid R_{m-1}, \overline{L}_{m-1}}(1 \mid 1, \overline{l}_{m-1})\big) \Big\} \nonumber,  
\end{align}

and $f^{g_z}_{H_j}(1)$ is the non-extended g-formula for $H_j$,

\begin{align}
    f^{g_z}_{H_j}(1)=
    & \sum_{\overline{l}_j} P(H_j=1 \mid \overline{L}_j=\overline{l}_j, S_j=1) \label{eq; gformH} \\
    & \times \prod_{m=1}^{j} \Big\{ \big(1-\alpha_{m}(z) \times f_{B_{m} \mid R_{m}, \overline{L}_{m}}(1 \mid 1, \overline{l}_{m})\big)  \nonumber \\
    & \times  P(L_m=l_m \mid R_m=1, \overline{L}_{m-1}=\overline{l}_{m-1}) \nonumber\\
    & \times P(Y_{m-1}=0 \mid H_{m-1}=0, S_{m-1}=1, \overline{L}_{m-1}=\overline{l}_{m-1}) \nonumber \\
    & \times  \big(1-\beta_{m-1}(z) \times f_{H_{m-1} \mid S_{m-1}, \overline{L}_{m-1}}(1 \mid 1, \overline{l}_{m-1})\big) \Big\}\nonumber.
\end{align}

Note that the sets of densities in $f^{g_z}_{B_j}(1)$ and $f^{g_z}_{H_j}(1)$, excluding the observed conditional densities of receiving the suspected superior, and suspected inferior treatment resource in interval $j$, that is, $P(B_j=1 \mid \overline{L}_j=\overline{l}_j, R_j=1)$, and $P(H_j=1 \mid \overline{L}_j=\overline{l}_j, S_j=1)$, respectively,  are strict subsets of the set of densities in  $f^{g_z}_{Y_K}(1)$. As such, $f^{g_z}_{B_j}(1)$ and $f^{g_z}_{H_j}(1)$ are simply truncated versions of the same g-formula as $f^{g_z}_{Y_K}(1)$ that identify distributions of different outcomes -- the natural values of treatment -- at time $j$.

Furthermore, note that when $z=0$, $P(B_1^{g_0+}=1)=P(B_1=1)$, then $\alpha_1(0)=1$. Thus, $f^{g_0}_{H_1}(1)=P(H_1)$. Since $P(H_1^{g_0+}=1)=P(H_1=1)$, then $\beta_1(0)=1$, as well. Since $P(B_k^{g_0+}=1)=P(B_k=1)$ and $P(H_k^{g_0+}=1)=P(H_k=1)$ for all $k \in \{1,\dots, K\}$, as in the definitions of the constraints in \eqref{eq; c1.2} and \eqref{eq; c2.2}, arguing iteratively, then $\alpha_k(0)=1$ and $\beta_k(0)=1$ for all $k \in \{1,\dots, K\}$, so that $\mathbb{E}[Y_K^{g_0}] = f^{g_z}_{Y_K}(1)=\mathbb{E}[Y_K]$.

We provide a proof of the above identification results in the more general case where treatment resources are arbitrarily constrained and subjects may be lost to follow-up, in Appendix D.

\subsubsection{Alternative representations of the g-formula}\
\\

The g-formula \eqref{eq; gformY} may be expressed in many ways, and its representation is arbitrary when computation is non-parametric. However, alternative representation will impact computational complexity and statistical properties when parametric estimation is chosen in high-dimensional settings with finite samples, as is often the case. A particular representation naturally motivates a class of inverse probability weighted (IPW) estimators that are easily computed with off-the-shelf software. Therefore we describe equivalent representations of g-formula in \eqref{eq; gformY} below, where we define $V$ to be any subset  of $L_1$, including possibly the emptyset, $\emptyset$.

\begin{align}
   f^{g_z}_{Y_K}(1) = & \sum_{v}\sum_{k=1}^K \lambda_{Y,k}^{g_z}(v) \prod_{j=1}^{k-1}[1-\lambda_{Y,j}^{g_z}(v)]f(v), \label{eq; altgformY}
\end{align}

where 

\begin{align}
    & \lambda_{Y,k}^{g_z}(V) = \frac{
        \mathbb{E}\big[Y_k(1-Y_{k-1})W_{H,k}^{g_z}W_{B,k}^{g_z} \mid V\big]
                    }{
        \mathbb{E}\big[(1-Y_{k-1})W_{H,k}^{g_z}W_{B,k}^{g_z}\mid V \big]           
                    } \label{eq; lambda}
\end{align}

and 

\begin{align}
        W_{B,k}^{g_z}=    \prod_{j=1}^{k}\frac{
                     \Big( \alpha_j(z) \times f_{B_{j} \mid R_{j}, \overline{L}_{j}}(1 \mid R_{j}, \overline{L}_{j})\Big)^{B_{j}}  
                    \times \Big(1- \alpha_j(z) \times f_{B_{j} \mid R_{j}, \overline{L}_{j}}(1 \mid R_{j}, \overline{L}_{j})\Big)^{1-B_j}
                }{
                f_{B_j \mid R_j, \overline{L}_j}(B_j \mid R_j, \overline{L}_j)
                } \label{eq; weight B}
\end{align}

and 

\begin{align}
        W_{H,k}^{g_z}=    \prod_{j=1}^{k}\frac{
                     \Big( \beta_j(z) \times f_{H_{j} \mid S_{j}, \overline{L}_{j}}(1 \mid S_{j}, \overline{L}_{j})\Big)^{H_{j}}  
                    \times \Big(1- \beta_j(z) \times f_{H_{j} \mid S_{j}, \overline{L}_{j}}(1 \mid S_{j}, \overline{L}_{j})\Big)^{1-H_j}
                }{
                f_{H_j \mid S_j, \overline{L}_j}(H_j \mid S_j, \overline{L}_j)
                } \label{eq; weight H},
\end{align}

for $k \in \{0, \dots, K\}$. We provide a proof of the equivalence between the representations in \eqref{eq; gformY} and \eqref{eq; altgformY} in Appendix E, which only depends on the positivity assumption. 

Similarly, the g-formula of expression \eqref{eq; gformB} may be expressed as

\begin{align}
   f^{g_z}_{B_j}(1) =  & \pi_{B,j}^{g_z}, \label{eq; altgformB}
\end{align}

where

\begin{align}
    & \pi_{B,j}^{g_z} = \mathbb{E}\big[B_jW_{H,j-1}^{g_z}W_{B,j-1}^{g_z}\big],
\end{align}

and the g-formula of expression \eqref{eq; gformH} as

\begin{align}
   f^{g_z}_{H_j}(1) =  & \pi_{H,j}^{g_z}, \label{eq; altgformH}
\end{align}

where

\begin{align}
    & \pi_{H,j}^{g_z} = \mathbb{E}\big[H_jW_{B,j}^{g_z}W_{H,j-1}^{g_z}\big].
\end{align}

Here, $\lambda_{Y,k}^{g_z}$ represents the discrete time hazard for death in interval $k$, and $\pi_{B,k}^{g_z}$ and $\pi_{H,k}^{g_z}$ are the marginal probabilities of receiving a suspected superior treatment and suspected inferior treatment unit, respectively, under the regime ${g_z}$, characterized by the proportionally-representative interventions of \eqref{eq: int Bk} and \eqref{eq: int Hk} that realistically constrain treatment resources. These alternative expressions will motivate our presentation of a particular marginal structural model (MSM) \cite{hernanMarginalStructuralModels2000}, introduced in Section \ref{sec: ipw}.

\section{Inverse Probability Weighted Estimation of Risk under Proportionally Representative Interventions}
\label{sec: ipw}

In low-dimensional settings, large samples would allow estimation of $\mathbb{E}  ( Y^{g_z}_K )$ by non-parametrically estimating the components of the g-formulae in expressions \eqref{eq; gformY} - \eqref{eq; gformH}, or equivalently the alternative formulations in \eqref{eq; altgformY}, \eqref{eq; altgformB} and \eqref{eq; altgformH}. However, in high dimensional settings, e.g.\ when $L_k$ takes many levels and/or when $K$ is large, non-parametric estimation will often be practically infeasible. One approach to overcoming this issue is to supplement the non-parametric identification assumptions in Section \ref{sec: Identification} with restrictions on the observed data-generating mechanism in the form of parametric modelling assumptions such that parametric estimators can be used.

In particular, it is possible to impose parametric modelling assumptions on some or all of the components of expression \eqref{eq; gformY} ($f^{g_z}_{Y_K}(1)$), e.g.\ using the parametric g-formula estimator of Robins \cite{robins1986new}, which would likely include modelling assumptions for the $K$ conditional treatment densities that comprise the proportionally-representative intervention distributions for regime ${g_z}$, \eqref{eq: int Bk} and \eqref{eq: int Hk}. These estimators are often computed via Monte Carlo simulations using estimated model coefficients to approximate the estimator's high-dimensional integral/sum, but are prone to bias from model-misspecification due to the large number of conditional densities for which parametric assumptions are made. 

Alternatively, we might impose modelling assumptions on $\lambda_{Y,k}^{g_z}(V)$ of expression \eqref{eq; altgformY}, corresponding to an MSM as in Young 2018 \cite{youngInverseProbabilityWeighted2018}, and on the components of $W_{B,k}^{g_z}$ and $W_{H,k}^{g_z}$, defined in expressions \eqref{eq; weight B} and \eqref{eq; weight H}. Without knowledge about the structural functions of the observed data generating mechanism, as is almost always the case, this approach would usually be more desirable than a parametric g-formula estimator for proportionally representative interventions, since the union of the sets of components of $W_{B,K}^{g_z}$ and $W_{H,k}^{g_z}$, respectively, defined by the $2\times K$ conditional treatment densities, is a strict subset of the set of components of the g-formula for which modelling assumptions must be made in the parametric g-formula. While MSMs are also subject to bias from model misspecification and can be inefficient, the strictly smaller number of models that must be specified, and relative computational ease of such an estimator, involving only a triplet of simple regression models (for outcome, and the pair of treatments), provide specific advantages. 

\subsection{Marginal Structural Models for Proportionally Representative Interventions}\label{subsec; MSM}\ 
\\

A MSM parameterizes the contrast of marginal means under different regimes indexed by $z$. The joint consideration of laws under different regimes requires joint consideration of pseudo-data from generating mechanisms corresponding to these regimes. When a single observed dataset can be partitioned into subsets following the different regimes under consideration, then the joint consideration naturally follows by re-weighting each individual in the single observed dataset according to the regime under which their observed treatment was consistent. Alternatively, in the setting of proportionally-representative interventions, many individuals receive treatment consistent with both regimes $g_0$ and $g_1$; that is, there is a non-empty intersection between the terms in $f^{g_0}_{Y_K}(1)$ and $f^{g_1}_{Y_K}(1)$ (the g-formulae for $Y_K$ under regimes $g_0$ and $g_1$, respectively). A general approach that allows joint consideration of pseudo-data from worlds under both of these regimes, is to consider perfectly cloned copies of the observed data (one for each regime under consideration), and introduce an administrative variable, $Z$, which indexes the particular regime that the cloned copy corresponds to. We use this approach for the remainder of the manuscript, and therefore we re-express $\lambda_{Y,k}^{g_z}(V)$, with respect to the cloned dataset,as

\begin{align}
    & \lambda_{Y,k}^{g_z}(V) = \lambda_{Y,k}(Z, V) = \frac{
        \mathbb{E}\big[Y_k(1-Y_{k-1})W_{H,k}(Z)W_{B,k}(Z) \mid V, Z\big]
                    }{
        \mathbb{E}\big[(1-Y_{k-1})W_{H,k}(Z)W_{B,k}(Z)\mid V, Z \big]           
                    },  \label{eq; altgformYZ}
\end{align}

where 

\begin{align}
        W_{B,k}(Z) =  \prod_{j=1}^{k}\frac{
                     \Big( \alpha_j(Z) \times f_{B_{j} \mid R_{j}, \overline{L}_{j}}(1 \mid R_{j}, \overline{L}_{j})\Big)^{B_{j}}  
                    \times \Big(1- \alpha_j(Z) \times f_{B_{j} \mid R_{j}, \overline{L}_{j}}(1 \mid R_{j}, \overline{L}_{j})\Big)^{1-B_j}
                }{
                f_{B_j \mid R_j, \overline{L}_j}(B_j \mid R_j, \overline{L}_j)
                } \label{eq; weight BZ}
\end{align}

and

\begin{align}
        W_{H,k}(Z)=    \prod_{j=1}^{k}\frac{
                     \Big( \beta_j(Z) \times f_{H_{j} \mid S_{j}, \overline{L}_{j}}(1 \mid S_{j}, \overline{L}_{j})\Big)^{H_{j}}  
                    \times \Big(1- \beta_j(Z) \times f_{H_{j} \mid S_{j}, \overline{L}_{j}}(1 \mid S_{j}, \overline{L}_{j})\Big)^{1-H_j}
                }{
                f_{H_j \mid S_j, \overline{L}_j}(H_j \mid S_j, \overline{L}_j)
                } \label{eq; weight HZ}.
\end{align}

Similarly, we reexpress $\pi_{B,j}^{g_z}$ and $\pi_{H,j}^{g_z}$, respectively, as

\begin{align}
    & \pi_{B,j}(Z) = \mathbb{E}\big[B_jW_{H,j-1}(Z)W_{B,j-1}(Z)\big] \label{eq; altgformBZ}
\end{align}

and 

\begin{align}
    & \pi_{H,j}(Z) = \mathbb{E}\big[H_jW_{B,j}(Z)W_{H,j-1}(Z)\big]. \label{eq; altgformHZ}
\end{align}

Now, assume that there exists a real-valued parameter $\psi^{*}$, such that, for a user-specified $V$, the following holds for all values of $k$, $Z$, and $V$,

\begin{align}
    & h\{\gamma(k,Z,V; \psi)\}  = \frac{
        \mathbb{E}\big[Y_k(1-Y_{k-1})W_{H,k}(Z)W_{B,k}(Z) \mid V, Z\big]
                    }{
        \mathbb{E}\big[(1-Y_{k-1})W_{H,k}(Z)W_{B,k}(Z)\mid V, Z \big]           
                    }, \label{eq; MSM}
\end{align}
where $\psi=\psi^{*}$; $h\{\cdot\}$ is a known link function constrained between 0 and 1 (e.g.\ the logit or probit functions); and $\gamma(\cdot)$ is defined to be some function of $k$, $Z$, and $V$ that is differentiable with respect to $\psi$ and is not a function of $Z$ when $\psi=0$ (so that the sharp null hypothesis of no effect of regime $g_1$ versus $g_0$ on survival at any time $k$, $Y_k$, along with exchangeability conditions \eqref{ex1B}-\eqref{ex2BH}, implies $\psi^*=0$).  Then, \eqref{eq; MSM} is an MSM for the discrete-time hazard of death at $k$, conditional on V, and under regimes indexed by $Z$, defined by the proportionally-representative interventions  of \eqref{eq: int Bk} and \eqref{eq: int Hk} that realistically constrain treatment resources.

Given the MSM in \eqref{eq; MSM} holds, \eqref{eq; altgformY} may be re-written as 

\begin{align}
   & \sum_{v}\sum_{k=1}^K h\{\gamma(k,z,v; \psi)\} \prod_{j=1}^{k-1}[1-h\{\gamma(k,z,v; \psi)\}]f(v), \label{eq; altgformYMSM}
\end{align},

when $\psi=\psi^*$.

\subsection{Inverse Probability Weighted Estimation} \label{subsec: IPWest} \
\\ 

Let $\hat{\psi}$ be the solution to the estimating equation

\begin{align}
   & \sum_{i=1}^n\sum_{z}\sum_{k=1}^K U_{i,k}(\psi, \hat{\eta}_B, \hat{\eta}_H)=0, \label{eq; esteq}
\end{align},

with respect to $\psi$, where 

\begin{align}
   U_k(\psi, \hat{\eta}_B, \hat{\eta}_H) = & [Y_k-h\{\gamma(k,Z,V; \psi)\}] \label{eq; esteqU}\\
                           & \times (1-Y_{k-1})W_{B,k}(Z,\hat{\eta}_B)W_{H,k}(Z, \hat{\eta}_H). \nonumber 
\end{align}

Here, let

\begin{align}
        W_{B,k}(Z, \hat{\eta}) =  \prod_{j=1}^{k}\frac{
                     \Big( \alpha_j(Z, \hat{\eta}) \times f_{B_{j} \mid R_{j}, \overline{L}_{j}}(1 \mid R_{j}, \overline{L}_{j}; \hat{\eta}_B)\Big)^{B_{j}}  
                    \times \Big(1- \alpha_j(Z, \hat{\eta}) \times f_{B_{j} \mid R_{j}, \overline{L}_{j}}(1 \mid R_{j}, \overline{L}_{j}; \hat{\eta}_B)\Big)^{1-B_j}
                }{
                f_{B_j \mid R_j, \overline{L}_j}(B_j \mid R_j, \overline{L}_j; \hat{\eta}_B)
                } \label{eq; estweightB},
\end{align}

and 

\begin{align}
        W_{H,k}(Z, \hat{\eta})=    \prod_{j=1}^{k}\frac{
                     \Big( \beta_j(Z, \hat{\eta}) \times f_{H_{j} \mid S_{j}, \overline{L}_{j}}(1 \mid S_{j}, \overline{L}_{j}; \hat{\eta}_H)\Big)^{H_{j}}  
                    \times \Big(1- \beta_j(Z, \hat{\eta}) \times f_{H_{j} \mid S_{j}, \overline{L}_{j}}(1 \mid S_{j}, \overline{L}_{j}; \hat{\eta}_H)\Big)^{1-H_j}
                }{
                f_{H_j \mid S_j, \overline{L}_j}(H_j \mid S_j, \overline{L}_j; \hat{\eta}_H)
                }, \label{eq; estweightH}
\end{align}

where $\eta \equiv \{\eta_B, \eta_H\}$, $f_{B_j \mid R_j, \overline{L}_j}(B_j \mid R_j, \overline{L}_j; \eta_B)$ and $f_{H_j \mid S_j, \overline{L}_j}(H_j \mid S_j, \overline{L}_j; \eta_H)$ are models for the denominators of \eqref{eq; estweightB} and \eqref{eq; estweightH}, and $\hat{\eta}_B$ and $\hat{\eta}_H$ are the MLEs of $\eta_B$ and $\eta_H$, respectively.

Here, $\alpha_1(z,\hat{\eta})$ is estimated  by

\begin{align}
    \frac{P(B_1^{g_Z+}=1)}{\frac{1}{n}\sum_{i=1}^nB_{i,1}} \label{eq; alpha1est},
\end{align}

and $\alpha_j(Z, \hat{\eta})$ and $\beta_{j-1}(Z, \hat{\eta})$ are obtained recursively for $j = 2, \dots, K+1$, respectively, from

\begin{align}
    \alpha_j(Z, \hat{\eta}) =
    \frac{P(B_j^{g_Z+}=1)}{\hat{\pi}_{B,j}(Z, \hat{\eta})
    } \label{eq; alphajest}
\end{align}

and 

\begin{align}
    \beta_{j}(Z, \hat{\eta}) = 
    \frac{P(H_j^{g_Z+}=1)}{
        \hat{\pi}_{B,j}(Z, \hat{\eta})
    } \label{eq; betajest}
\end{align}

where 

\begin{align}
    & \hat{\pi}_{B,j}(Z, \hat{\eta}) = \frac{1}{n}\sum_{i=1}^n\big[B_{i,j}W_{H,i,j-1}(Z_i, \hat{\eta})W_{B, i, j-1}(Z_i, \hat{\eta})\big]
\end{align}

and 

\begin{align}
    & \hat{\pi}_{H,j}(Z, \hat{\eta}) = \frac{1}{n}\sum_{i=1}^n\big[H_{i,j}W_{B, i,j}(Z_i, \hat{\eta})W_{H, i,j-1}(Z_i, \hat{\eta})\big].
\end{align}

Here, $P(B_j^{g_Z+}=1)$ and $P(H_j^{g_Z+}=1)$ are defined by the user-specified proportionally-representative interventions, $\pi_{B,j}(Z, \eta)$ and $\pi_{H,j}(Z, \eta)$ are models for $\pi_{B,j}(Z)$ and $\pi_{H,j}(Z)$, and $\hat{\pi}_{B,j}(Z, \hat{\eta})$ and $\hat{\pi}_{H,j}(Z, \hat{\eta})$ are their MLEs.

Following Robins (1999) \cite{robins2000marginal}, if (i) the MSM of \eqref{eq; MSM} is correctly specified; and (ii), the models $f_{B_j \mid R_j, \overline{L}_j}(B_j \mid R_j, \overline{L}_j; \eta_B)$ and $f_{H_j \mid S_j, \overline{L}_j}(H_j \mid S_j, \overline{L}_j; \eta_H)$ of the denominators for \eqref{eq; estweightB} and \eqref{eq; estweightH} are correctly specified, then we have 

\begin{align}
   \mathbb{E}[U_k(\psi^*, \eta_B^*, \eta_H^*)] = 0 
\end{align}

for all $k$, with $\eta_B^*$ and $\eta_H^*$ the true values of $\eta_B$ and $\eta_H$ and the IPW estimator $\hat{\psi}$ consistent and asymptotically normal for $\psi^*$.

Here, we assume pooled logistic models for $h\{\gamma(k,Z,V; \psi)\}$, $f_{B_k \mid R_k, \overline{L}_j}(B_k \mid 1, \overline{L}_k; \eta_B)$ and $f_{H_k \mid S_k, \overline{L}_j}(H_k \mid 1, \overline{L}_k; \eta_H)$, that is,

\begin{align}
   h\{\gamma(k,Z,V; \psi)\} = \text{expit}\{\gamma(k,Z,V; \psi)\} \label{eq; pooledlogitY}
\end{align} 

and

\begin{align}
   f_{B_k \mid R_k, \overline{L}_k}(1 \mid 1, \overline{L}_k; \eta_B) = \text{expit}\{\phi_B(k, \overline{L}_k; \eta_B)\} \label{eq; pooledlogitB}
\end{align}

and

\begin{align}
   f_{H_k \mid S_k, \overline{L}_k}(1 \mid 1, \overline{L}_k; \eta_H) = \text{expit}\{\phi_H(k, \overline{L}_k; \eta_H)\} \label{eq; pooledlogitH}
\end{align}

with $\phi_B$ and $\phi_H$ specified functions of $(k, \overline{L}_k)$, differentiable with respect to $\eta_B$ and $\eta_H$, respectively, and $\text{expit}(\cdot)=\frac{exp(\cdot)}{1+exp(\cdot)}$. Note that $f_{B_k \mid R_k, \overline{L}_j}(0 \mid 0, \overline{L}_k; \eta_B) = f_{H_k \mid S_k, \overline{L}_j}(0 \mid 0, \overline{L}_k; \eta_B) = 1$ by definition (that is, previously treated or deceased individuals will be untreated in interval $k$, with probability 1).

A solution to \eqref{eq; esteq} is obtained through the following algorithm applied to a cloned subject-interval dataset with administrative variable Z, constructed such that each subject will have $K^*$ lines indexed by $k=1,\dots, K^*$, where $K^*=K$ if $Y_K=0$, else $K^*=\text{min}(\{j \mid Y_j=1\})$, so that $K^*=K$ when a subject is alive at the end of follow-up or equals the interval number during which a subject dies during follow-up. 

\subsubsection{IPW estimation algorithm for $\psi$} \label{subsubsec: IPW alg}

\begin{enumerate}
    \item [1.] Using subject-interval records with $Z=1$ and $R_k=1$, obtain $\hat{\eta}_B$ by fitting pooled logistic regression model \eqref{eq; pooledlogitB} with dependent variable $B_k$ and independent variables a specified function of $k=0,\dots, K$ and $\overline{L}_k$, corresponding to the choice of $\phi_B(\cdot)$.
    \item [2.] Similarly, using subject-interval records with $Z=1$ and $S_k=1$, obtain $\hat{\eta}_H$ by fitting a pooled logistic regression model \eqref{eq; pooledlogitH} with dependent variable $H_k$ and independent variables a specified function of $k=0,\dots, K$ and $\overline{L}_k$, corresponding to the choice of $\phi_H(\cdot)$.
    \item[3.] Set $\alpha_{0}(z, \hat{\eta})$ and $\beta_{0}(z, \hat{\eta})$ to 1. Using subject-interval records with $Z=1$, obtain $\alpha_1(1, \hat{\eta})$ by evaluating the ratio in \eqref{eq; alpha1est}: dividing $P(B_1^{g_1+}=1)$ (defined by the intervention), by the proportion of individuals with $B_1=1$, $\frac{1}{n}\sum_{i=1}^nB_{i,1}$. Likewise, using subject-interval records with $Z=0$, obtain $\alpha_1(0, \hat{\eta})$.
    \item[4.] For each subject's line 1, attach the suspected-superior treatment weight, $W_{B,1}$, calculated as 
    
    $$\frac{
                     \Big( \alpha_1(Z, \hat{\eta}) \times \text{expit}\{\phi_B(1, \overline{L}_1; \hat{\eta_B})\}\}\Big)^{B_{1}}  
                    \times \Big(1- \alpha_1(Z, \hat{\eta}) \times \text{expit}\{\phi_B(1, \overline{L}_1; \hat{\eta_B})\}\Big)^{1-B_1}
                }{
                \Big(\text{expit}\{\phi_B(1, \overline{L}_1; \hat{\eta_B})\}\Big)^{B_{1}}  
                    \times \Big(1-  \text{expit}\{\phi_B(1, \overline{L}_1; \hat{\eta_B})\}\Big)^{1-B_1}
                }$$
    \item[5.] Using subject-interval records with $Z=1$, obtain $\beta_1(1, \hat{\eta})$ by evaluating the ratio in \eqref{eq; betajest}: dividing $P(H_1^{g_1+}=1)$ (defined by the intervention), by the \textit{weighted} proportion of individuals with $H_1=1$, $\frac{1}{n}\sum_{i=1}^nH_{i,1}W_{B,i,1}$. Likewise, obtain $\beta_1(0, \hat{\eta})$.
    \item[6.] For each subject's line 1, attach the suspected-inferior treatment weight, $W_{H,1}$, calculated as
    
    $$\Bigg[\frac{
                     \Big( \beta_1(Z, \hat{\eta}) \times \text{expit}\{\phi_H(1, \overline{L}_1; \hat{\eta_H})\}\Big)^{H_{1}}  
                    \times \Big(1- \alpha_1(Z, \hat{\eta}) \times \text{expit}\{\phi_H(1, \overline{L}_1; \hat{\eta_H})\}\Big)^{1-H_1}
                }{
                \Big(\text{expit}\{\phi_H(1, \overline{L}_1; \hat{\eta_H})\}\Big)^{H_{1}}  
                    \times \Big(1-  \text{expit}\{\phi_H(1, \overline{L}_1; \hat{\eta_H})\}\Big)^{1-H_1}
                }\Bigg]^{S_1}$$
    
    \item[7.] For $z\in\{0, 1\}$, iterate from $k=2,\dots K$:
    \begin{enumerate}
       \item [7.1.] Using subject-interval records on line $k$ with $Z=z$, obtain $\alpha_{k}(z, \hat{\eta})$ by evaluating the ratio in \eqref{eq; alphajest}, dividing $P(B_{k}^{g_z+}=1)$ (defined by the intervention) by the weighted proportion of individuals with $B_{k}=1$, $\frac{1}{n}\sum_{i=1}^n\big[B_{i,k}W_{B, i,k-1}W_{H, i,k-1}\big]$.
       \item[7.2.] Using subject-interval records on line $k$ with $Z=z$, attach the suspected-superior treatment weight, $W_{B,k}$, calculated as
       
        $$\prod_{j=1}^k\Bigg[\frac{
                     \Big( \alpha_j(z, \hat{\eta}) \times \text{expit}\{\phi_B(j, \overline{L}_j; \hat{\eta_B})\}\Big)^{B_{j}}  
                    \times \Big(1- \alpha_j(z, \hat{\eta}) \times \text{expit}\{\phi_B(j, \overline{L}_j; \hat{\eta_B})\}\Big)^{1-B_j}
                }{
                \Big(\text{expit}\{\phi_B(j, \overline{L}_j; \hat{\eta_B})\}\Big)^{B_{j}}  
                    \times \Big(1-  \text{expit}\{\phi_B(j, \overline{L}_j; \hat{\eta_B})\}\Big)^{1-B_j}
                }\Bigg]^{R_j}$$  
                
        \item [7.3.] Using subject-interval records on line $k$ with $Z=z$, obtain $\beta_{k}(z, \hat{\eta})$ by evaluating the ratio in \eqref{eq; betajest}, dividing $P(H_{k}^{g_z+}=1)$ (defined by the intervention) by the weighted proportion proportion of individuals with $H_{k}=1$, $\frac{1}{n}\sum_{i=1}^n\big[H_{i,k}W_{B, i,k}W_{H, i,k-1}\big]$.
        \item[7.4.] Using subject-interval records on line $k$ with $Z=z$, attach the suspected-superior treatment weight, $W_{H,k}$, calculated as:  
            $$\prod_{j=1}^k\Bigg[\frac{
                     \Big( \beta_j(z, \hat{\eta}) \times \text{expit}\{\phi_H(j, \overline{L}_j; \hat{\eta_H})\}\Big)^{H_{j}}  
                    \times \Big(1- \alpha_j(z, \hat{\eta}) \times \text{expit}\{\phi_H(j, \overline{L}_j; \hat{\eta_H})\}\Big)^{1-H_j}
                }{
                \Big(\text{expit}\{\phi_H(j, \overline{L}_j; \hat{\eta_H})\}\Big)^{H_{j}}  
                    \times \Big(1-  \text{expit}\{\phi_H(j, \overline{L}_j; \hat{\eta_H})\}\Big)^{1-H_j}
                }\Bigg]^{S_j}$$         
    \end{enumerate}
\item[8.] Using all subject-interval records in the cloned dataset, obtain $\hat{\psi}$ by fitting a \textit{weighted} pooled logistic regression model, with weights $W_{B,k}$ and $W_{H,k}$ defined in the previous steps, dependent variable $Y_k$ and independent variables a specified function of $k=1,\dots,K$ and $(Z, V)$ corresponding to the choice of $\gamma(\cdot)$.
\end{enumerate}\
\\

Our final IPW estimate of the g-formula for the risk of death by $K$ under regime $g_z$, $f^{g_z}_{Y_K}(1)$ defined by the proportionally-representative interventions of \eqref{eq: int Bk} and \eqref{eq: int Hk} that constrain resources can then be obtained by the plug-in estimator

\begin{align}
   & \sum_{i=1}^{n}\sum_{k=1}^K \text{expit}\{\gamma(k,z,V_i; \hat{\psi})\} \prod_{j=1}^{k-1}[1-\text{expit}\{\gamma(k,z,V_i; \hat{\psi})\}] \label{eq; plugin}.
\end{align}

We provide extensions of the above algorithm to settings with arbitrary  resource constraints and loss-to-follow-up in Appendix F.

\section{Data Analysis}
\label{sec: Example}

Patients in need of a liver transplant can only receive a transplant if a suitable organ is available. The treatment (receive a transplant) is clearly a limited resource and the constraint imposed by organ availability should be satisfied in a policy-relevant study. Although the proportion of organ donors in the U.S. has risen considerably over the past two decades \cite{abaraCharacteristicsDeceasedSolid2019}, many of these organs are classified as ‘increased risk’ organs because their utilization might result in a higher probability of unintended transmission of HIV, hepatitis B, and/or hepatitis C to recipients \cite{seemPHSGuidelineReducing2013}.  Therefore, policy makers might be interested in learning how a restrictive policy of using only ‘standard risk’ organs (a suspected superior treatment resource) would impact survival of wait-listed patients compared with the current practice of using both `standard risk’ and `increased risk' organs (a suspected inferior treatment resource). Given the limited number of total livers available for transplantation, implementation of any restrictive policy will necessarily increase the length of time that candidates wait for a transplant. So, in other words, a first question of interest is whether increasing the time candidates are waiting for a standard-risk organ, and therefore the number of candidates who die while on the waiting list, is outweighed by a suspected longer survival among those who ultimately receive such a transplant. If this is false, policy makers might want to consider increasing utilization of increased risk organs. Already, some clinicians hesitate to use ‘increased risk’ organs \cite{kumarSurveyIncreasedInfectious2016,kucirkaProviderUtilizationHighRisk2009,volkPHSIncreasedRisk2017}, so it is possible that the system could increase the utilization of these organs by simply discarding less organs. 
Policy makers might be interested in learning how a policy that increases the utilization of ‘increased risk’ organs (a suspected inferior treatment resource) would impact survival of wait-listed patients compared with the current practice. So, in other words, a second question of interest is whether decreasing the time candidates wait for a standard-risk organ is outweighed by a suspected shorter survival among those who ultimately receive ‘increased risk’ organ transplants.

Motivated by this policy relevant question, we used data from the Scientific Registry of Transplant Recipients (SRTR) to study the causal effect of different transplantation policies. The SRTR data system is a repository of information for all candidates ever added to the United States Organ Procurement and Transplantation Network (OPTN) waiting list. The data were restricted to those from individuals aged 18 or older with no prior history of liver transplantation, who were eligible for liver transplantation and were added to the OPTN waiting list to receive a liver organ between 2005 to 2015, and were followed until  death,  loss  to  follow-up  (as  reported  by  individual transplant programs), or May 31st, 2016, whichever occurs first. The SRTR includes  data on wait-list candidate mortality via linkage to the National Death Index \cite{kim2019optn}.  Over the study period, n=93,812 transplant candidates were added to the waitlist, of whom 51,322 received livers from deceased donors. Data were coarsened into discrete 30-day intervals, where k=1 corresponds to an individual's first 30-day interval upon entering the waitlist. Data are used to estimate the 10-year ($Y_K$ where $K=120$) cumulative incidence of death under the following regimes:

\begin{enumerate}
    \item [$g_0.$] Current practice: ``standard risk'' and ``increased risk'' organs are utilized at current levels,
    \item [$g_1.$] Restrictive practice: ``standard risk'' organs are utilized at current levels and ``increased risk'' organ utilization is abolished,
    \item [$g_2.$] Expansive practice A: ``standard risk'' organs are utilized at current levels and ``increased risk'' organ utilization is increased by $25\%$,
    \item [$g_3.$] Expansive practice B: ``standard risk'' organs are utilized at current levels and ``increased risk'' organ utilization is increased by $50\%$,
\end{enumerate}

where each of these regimes targets outcomes under the ``natural course'', that is, under hypothetical interventions that abolish censoring, and for simplicity, under interventions that abolish utilization of transplants other than ``standard risk'' or ``increased risk'' organs (organs from living donors,cardiac-death donors, donors who are not HIV, hepatitis C, and hepatitis B seronegative, or donors with unknown risk status). We used the IPW estimation algorithm described in Section \ref{subsubsec: IPW alg}, adapted for censoring weights and weights for ``other'' transplant types, as in Appendix F.  

We defined $B_k$ to indicate reception of a ``standard risk'' organ, and $H_k$ to indicate reception of an ``increased risk'' organ, in interval $k$. We defined $V$ to be the empty set, $\emptyset$. For $k = 2,\dots, K$, interval $k$ confounders $L_k$ included waiting-list priority in the form of model for end-stage liver disease [MELD] score, MELD score exception, and urgent-need status. Interval 1 confounders $L_1$ additionally included year of listing to the waiting list, gender, race, age, height, weight,  willingness to accept a less optimal organ (i.e. a liver segment, a organ from an incompatible blood type donor, or a donor with hepatitis B or C), need for life support, functional status, primary diagnosis leading to liver failure, history of complications or procedures related to liver failure (i.e.\ spontaneous bacterial peritonitis, portal vein thrombosis, transjugular intrahepatic portosystemic shunt). 

Regimes are comprised of proportionally representative interventions that assign treatment according to the general expressions in Appendix A (for which expressions in \eqref{eq: int Bk} and \eqref{eq: int Hk} are a special case), and limit treatment resource utilization according to the following constraints, for $k=1,\dots K$:

\begin{enumerate}
    \item [$g_0.$] Current practice: $P(B_k^{g_0+}=1)=P(B_k=1)$ and $P(H_k^{g_0+}=1)=P(H_k=1)$,
    \item [$g_1.$] Restrictive practice: $P(B_k^{g_1+}=1)=P(B_k=1)$ and $P(H_k^{g_1+}=1)=0$,
    \item [$g_2.$] Expansive practice A: $P(B_k^{g_2+}=1)=P(B_k=1)$ and $P(H_k^{g_0+}=1)=\text{min}\big(1.25 \times P(H_k=1), P(S_k^{g_2}=1)\big)$,
    \item [$g_3.$] Expansive practice B: $P(B_k^{g_3+}=1)=P(B_k=1)$ and $P(H_k^{g_3+}=1)=\text{min}\big(1.5 \times P(H_k=1), P(S_k^{g_3}=1)\big)$.
\end{enumerate}

All regimes, $g_z$ included hypothetical interventions to abolish censoring - that is, we targeted potential outcomes under proportionally representative interventions that would occur during the natural course where no individual was censored. We discuss extensions to censoring and resulting procedures in appendices D and F.

\subsection{Marginal Structural Models}

We assumed the following functional form for $\gamma$ of the MSM in \eqref{eq; MSM}. Specifically,

\begin{align}
   \gamma(k,Z,V; \psi) = \psi_0 + \psi_1^Tg(k) + \psi_2Z + \psi_3Zk \label{eq; gammaspec},
\end{align} 

and we specify the weight models in Appendix G. 

\subsection{Results}

The utilization of ‘standard risk’ and ‘increased risk’ organs under the hypothetical treatment regimes over a follow-up period of 10 years are shown in figures \ref{fig: graftuse1} and \ref{fig: graftuse23}. The utilization of `standard risk' organs is identical in the weighted data under all treatment strategies, which visually confirms that the treatment resource constraints are satisfied.

The estimated 10-year cumulative incidence of death is 59.3$\%$ (95$\%$ confidence interval [CI]: 58.8 to 59.9$\%$) under regime $g_0$, corresponding to the current practice of using both ‘standard risk’ and ‘increased risk’ organs, in contrast to 64.9$\%$ (95$\%$ CI: 62.7 to 67.9$\%$) under regime $g_1$, corresponding to the restrictive practice of using only ‘standard risk’ organs. That is, an estimated difference of 5.6 percentage points (95$\%$ CI: 3.4 to 8.6 percentage points). The cumulative incidence curves for death under regimes $g_0$ and $g_1$ are displayed in figure \ref{fig: cuminc1}.

The estimated 10-year cumulative incidence of death is 58.4$\%$ (95$\%$ CI: 57.8 to 58.9$\%$) under regime $g_2$, corresponding to the expansive practice of the increasing utilization of `increased risk’ organs by a factor of 1.25, an incidence 1.0 percentage points lower than what would be observed under regime $g_0$ (95$\%$ CI: -1.4, -0.7 percentage points), and the estimated cumulative incidence is 57.4$\%$ (95$\%$ CI: 56.6 to 58.1$\%$) under regime $g_3$, corresponding to the expansive practice of the increasing utilization of `increased risk’ organs by a factor of 1.5. That is, an incidence 1.9 percentage points lower than what would be observed under regime $g_0$ (95$\%$ CI: -- to --$\%$ percentage points). The cumulative incidence for death under regimes $g_0$, $g_2$, and $g_3$ are displayed in figure \ref{fig: cuminc23}. 

In other words, we estimated that, despite the concerns regarding infectious disease transmission and organ inferiority, a policy of abolishing utilization of ‘increased risk’ organs would have increased the cumulative incidence of death by 5.6 percentage points at 10 years after addition to the waiting list, and that increasing utilization of ‘increased risk’ organs would actually increase overall survival, compared to current practice.

All analyses were performed using \texttt{R} version 3.5.2 and \href{https://github.com/KerollosWanis/finite_resources_transplant}{\color{blue}code is available} to reproduce these results, as well as to flexibly estimate the effects of other proportionally-representative interventions on `standard risk' and `increased risk' organ utilization. Ninety-five percent confidence intervals were obtained from the $97.5^{th}$ and $2.5^{th}$ percentiles of the distribution of point estimates obtained by repeating the IPW algorithm on 500 nonparametric bootstrap samples.

\section{Conclusion}
  
In this article we have presented a novel class of policy-relevant estimands: expected potential outcomes under \textit{proportionally-representative interventions} that constrain treatment resources. These estimands are policy-relevant because they (i) incorporate substantive knowledge to specify limits on treatment utilization that are feasibly achieved under an actual policy and (ii) coarsely preserve features of the observed joint distribution between treatment and covariates - specifically, the rank-order of conditional treatment probabilities, which would naturally be unperturbed in settings where the intervention is a manipulation of treatment resource scarcity. 

Our estimands stand in contrast to the classical population-level estimand - equal to the average treatment effect of deterministic regimes - which we show is a special case of proportionally-representative interventions where treatment resources are assumed (often absurdly) to be practically unlimited. To facilitate the concerns of policy-makers, we provide simple inverse probability-weighted estimators for proportionally-representative interventions, implementable with off the shelf-software. These estimators are consistent under mildly stronger identification assumptions than those typically needed for most causal estimands. 

We demonstrated the use of the proportionally representative interventions in a study of organ transplantation policies, where treatment resource limitations are severe. Evidence from our marginal structural models supports the continued (and possible expanded) use of the suspected inferior treatment resource (the so-called ``increased-risk'' liver grafts), suggesting that the increased scarcity imposed by elimination of these grafts outweighs the suspected inferiority of receiving these grafts, with respect to the cumulative incidence of death in the population of transplant eligible individuals.

Our class of estimands generalizes previously used unlimited resource estimands and are easy to implement. However, the \textit{representative} nature of the stochastic interventions may not always correspond to a policy that is of substantive interest. Specifically, policy makers might be interested in effects of alternative prioritization mechanisms, for example, hypothetical policies corresponding to alternative algorithms for rank determination on the national waiting list for liver transplants. Such estimands fall outside of the class defined by proportionally-representative interventions because the \textit{representative} nature of the stochastic interventions is tantamount to preserving the natural (observed) prioritization of individuals with different covariate values under the regimes considered. We will study extensions of our results to a broader class of realistic, policy-relevant estimands in future work. 

\clearpage

\clearpage


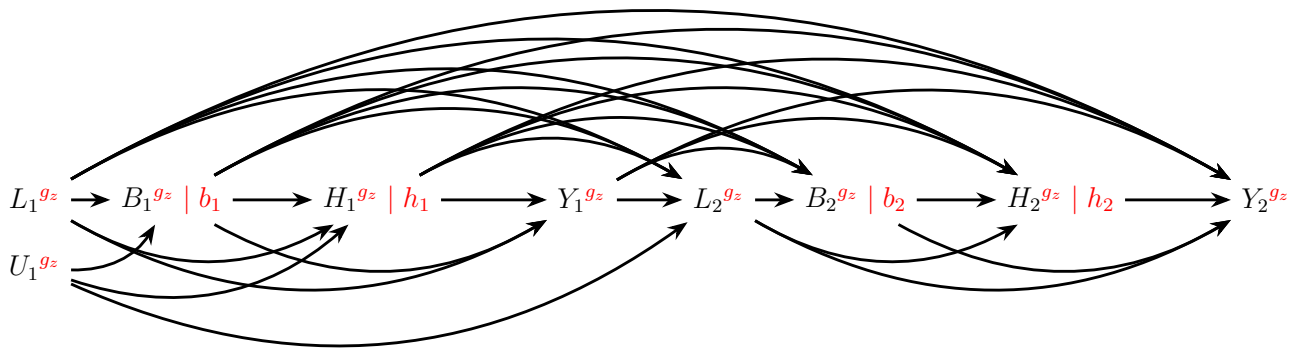
\begin{figure}\scalebox{.9}{
\centering
\begin{tikzpicture}
\begin{scope}[every node/.style={thick,draw=none}]
    \node (L1) at (0,0) {$L_1 \color{red} ^{g_z}$};
    \node (U1) at (0,-1) {$U_1 \color{red} ^{g_z}$};
	\node (B1) at (2,0) {$B_1 \color{red} ^{g_z} \mid b_1$};
    \node (H1) at (5,0) {$H_1 \color{red} ^{g_z} \mid h_1$};
    \node (Y1) at (8,0) {$Y_1 \color{red}  ^{g_z}$};
    \node (L2) at (10,0) {$L_2 \color{red} ^{g_z}$};
	\node (B2) at (12,0) {$B_2 \color{red} ^{g_z} \mid b_2$};
    \node (H2) at (15,0) {$H_2 \color{red} ^{g_z} \mid h_2$};
    \node (Y2) at (18,0) {$Y_2 \color{red} ^{g_z}$};
\end{scope}

\begin{scope}[>={Stealth[black]},
              every node/.style={fill=white,circle},
              every edge/.style={draw=black,very thick}]
    \path [->] (B1) edge (H1);
    \path [->] (B1) edge[bend right] (Y1);
    \path [->] (H1) edge (Y1);
    \path [->] (L1) edge[bend right] (Y1);
    \path [->] (L1) edge (B1);
    \path [->] (L1) edge[bend right] (H1);
    \path [->] (U1) edge[bend right] (B1);
    \path [->] (U1) edge[bend right] (H1);
    \path [->] (U1) edge[bend right] (L2);
    \path [->] (Y1) edge (L2);
    \path [->] (Y1) edge[bend left] (B2);
    \path [->] (Y1) edge[bend left] (H2);
    \path [->] (Y1) edge[bend left] (Y2);
    \path [->] (B1) edge[bend left] (L2);
    \path [->] (B1) edge[bend left] (B2);
    \path [->] (B1) edge[bend left] (H2);
    \path [->] (B1) edge[bend left] (Y2);
    \path [->] (H1) edge[bend left] (L2);
    \path [->] (H1) edge[bend left] (B2);
    \path [->] (H1) edge[bend left] (H2);
    \path [->] (H1) edge[bend left] (Y2);
    \path [->] (L1) edge[bend left] (L2);
    \path [->] (L1) edge[bend left] (B2);
    \path [->] (L1) edge[bend left] (H2);
    \path [->] (L1) edge[bend left] (Y2);
    
    \path [->] (B2) edge (H2);
    \path [->] (B2) edge[bend right] (Y2);
    \path [->] (H2) edge (Y2);
    \path [->] (L2) edge[bend right] (Y2);
    \path [->] (L2) edge (B2);
    \path [->] (L2) edge[bend right] (H2);
\end{scope}
\end{tikzpicture}
}
\caption{SWIT (K=2)in which Exchangeability 1 and 2 hold.}

\label{fig: SWIT1}
\end{figure}
\clearpage

\begin{figure}[ht]
\centering
\includegraphics[width=1\textwidth]{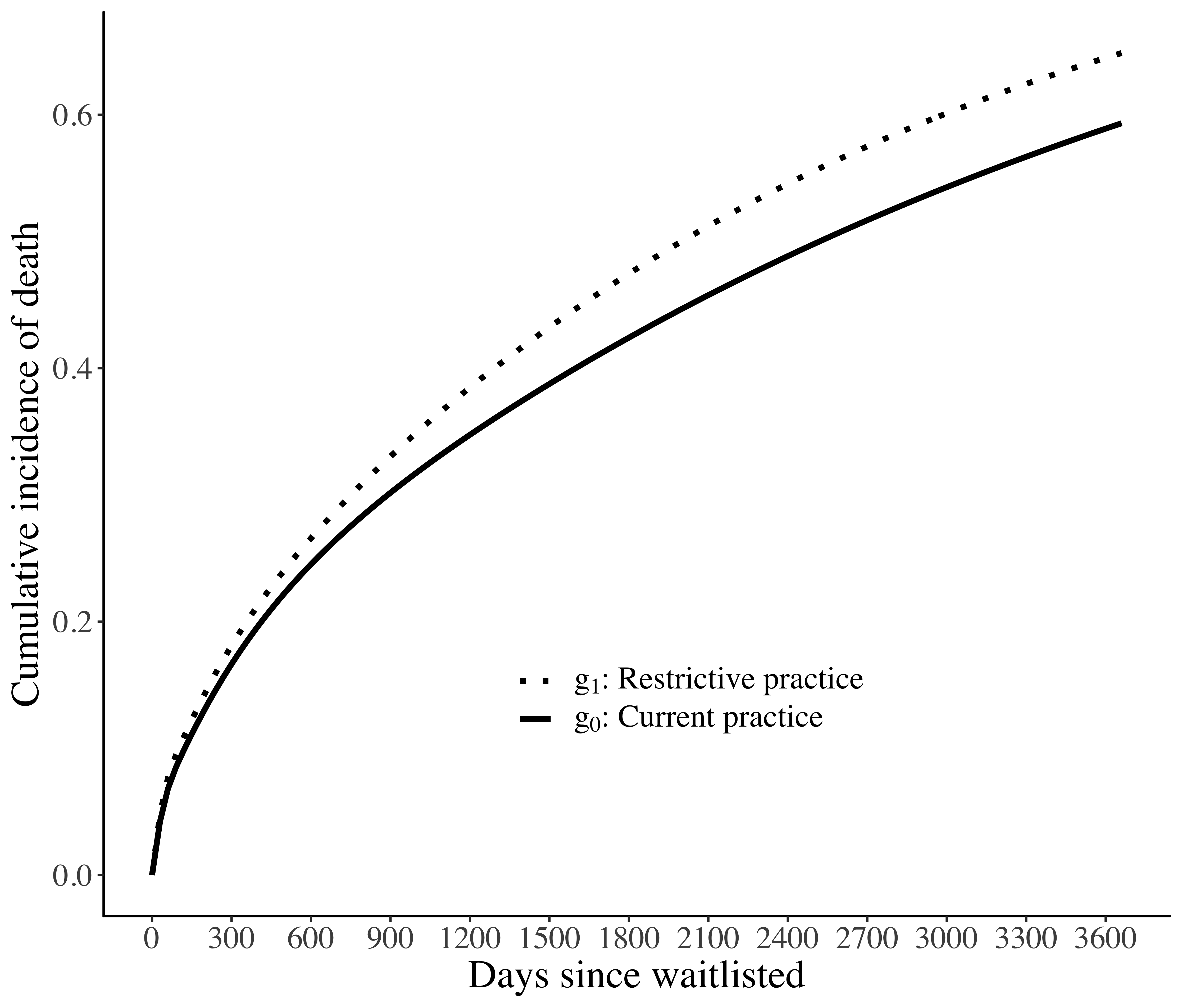}
\caption{Potential outcomes under regimes $g_0$ and $g_1$} \label{fig: cuminc1}
\end{figure} 

\begin{figure}[ht]
\centering
\includegraphics[width=1\textwidth]{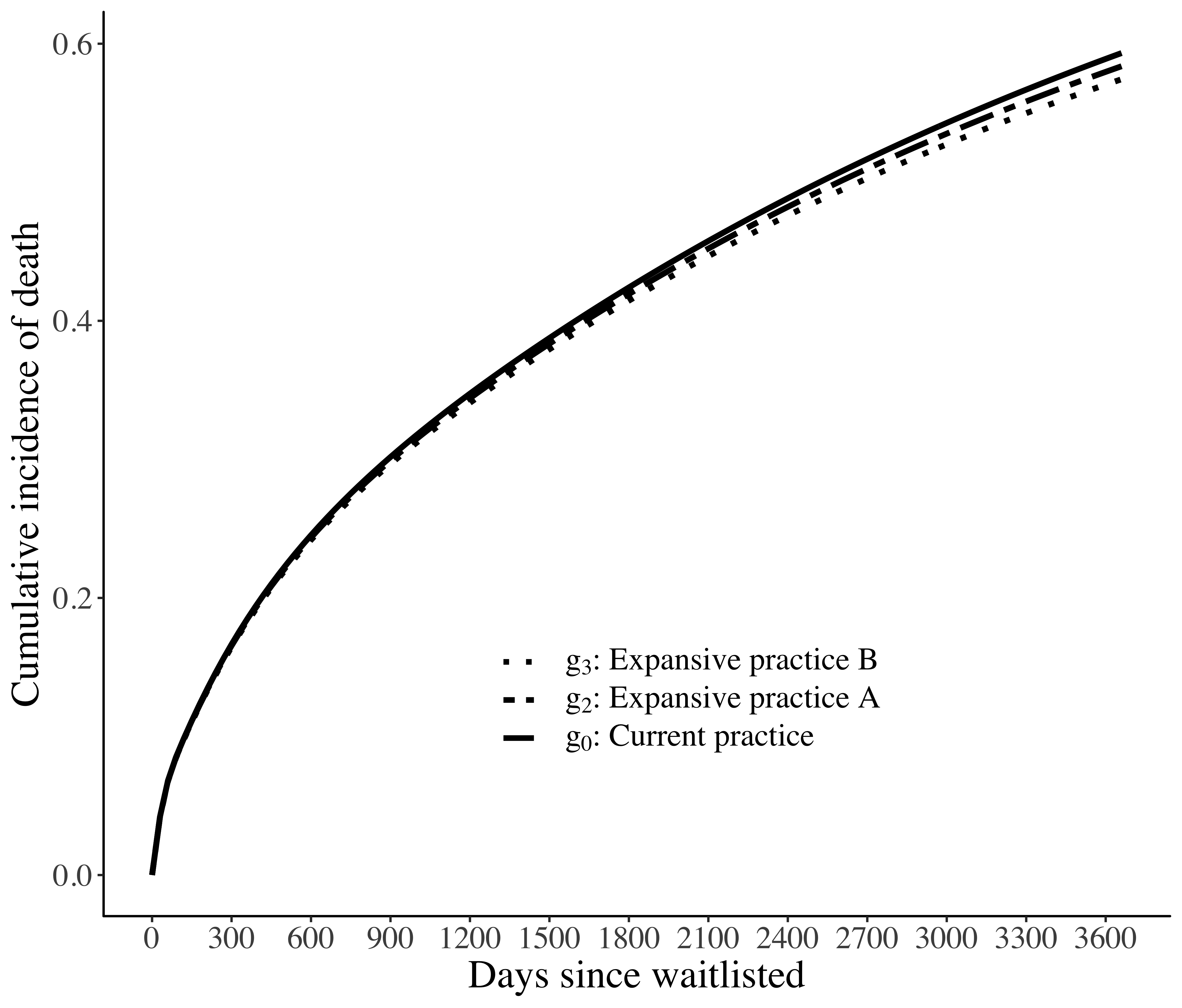}
\caption{Potential outcomes under regimes $g_0$, $g_2$, and $g_3$} \label{fig: cuminc23}
\end{figure} 

\clearpage

\begin{figure}[ht]
\centering
\includegraphics[width=1\textwidth]{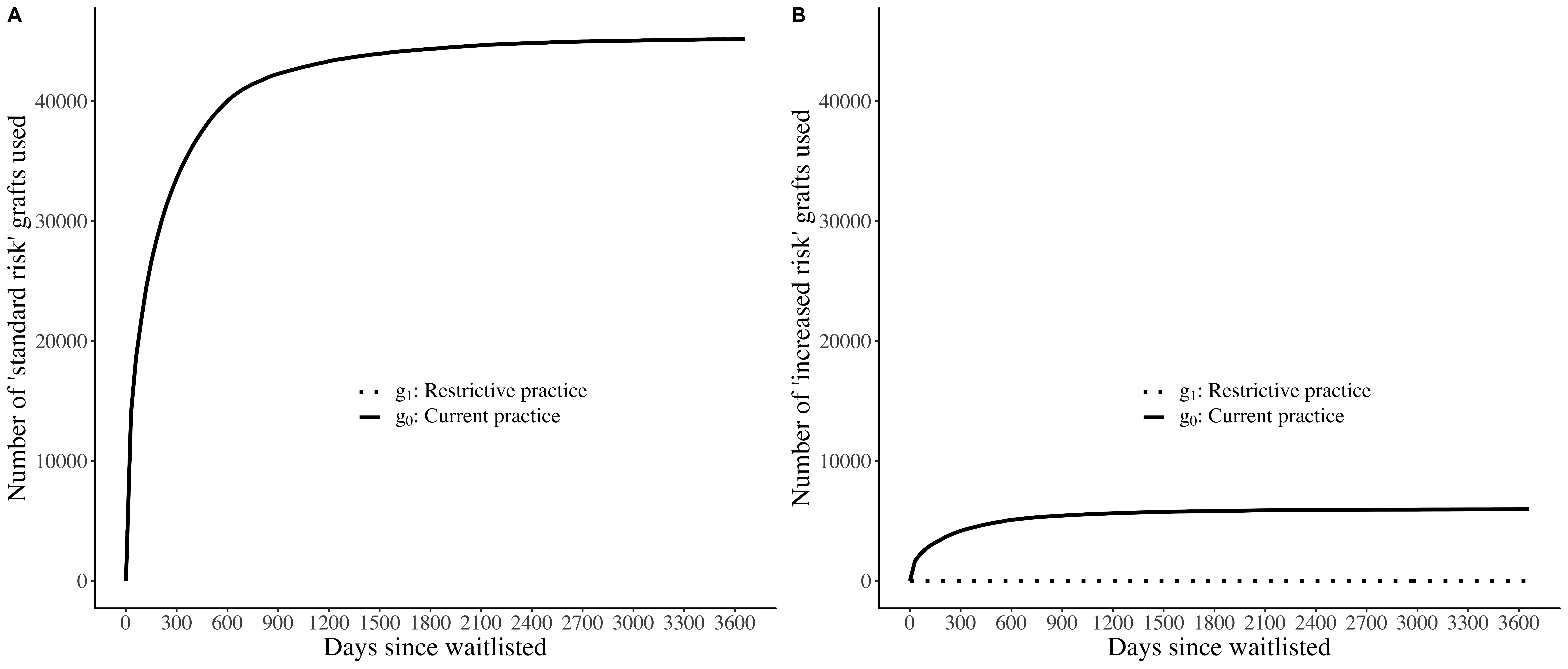}
\caption{Organ utilization under regimes $g_0$ and $g_1$} \label{fig: graftuse1}
\end{figure} 

\begin{figure}[ht]
\centering
\includegraphics[width=1\textwidth]{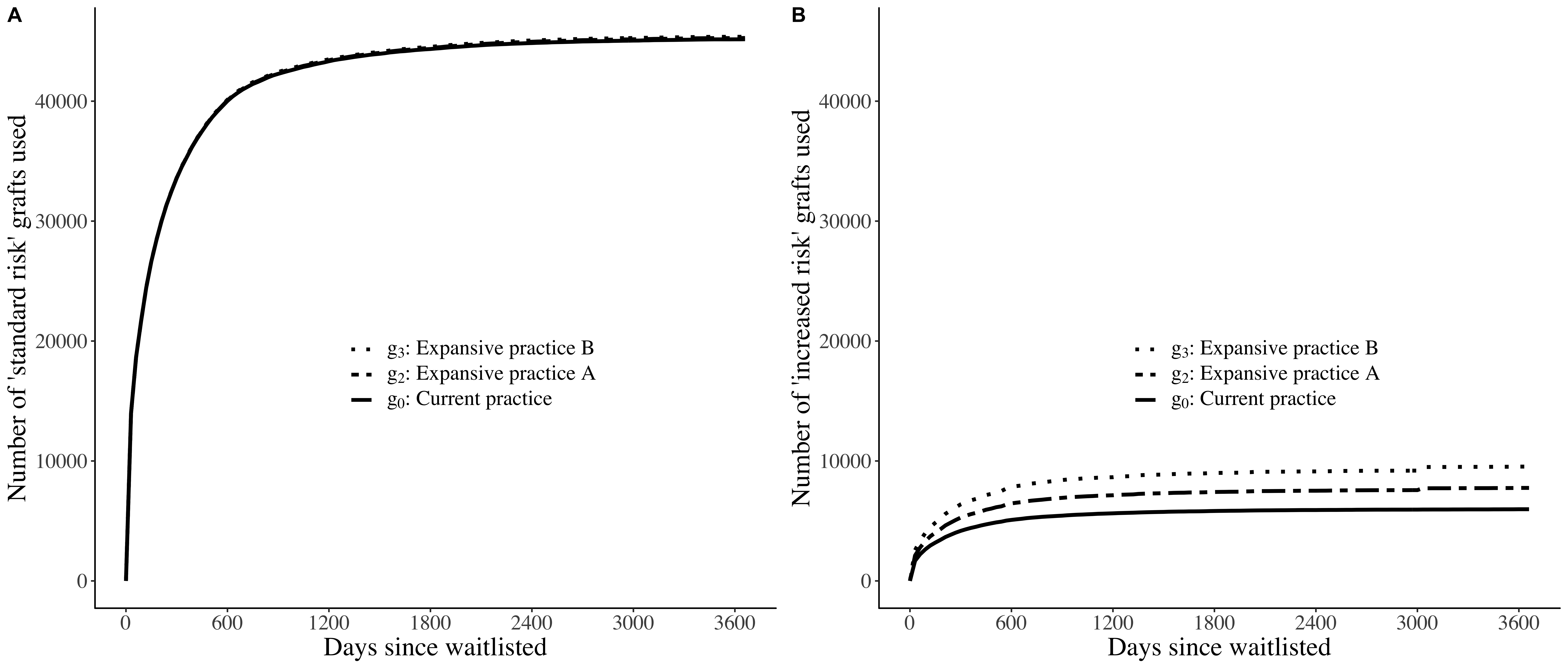}
\caption{Organ utilization under regimes $g_0$, $g_2$, and $g_3$} \label{fig: graftuse23}
\end{figure} 
\clearpage


\bibliographystyle{ieeetr}

\bibliography{Limited2.bib}
\clearpage


\section{Appendix A: Extension to arbitrary resource constraints}

In the main text, we present regimes and corresponding identification and estimation results for a subset of possible proportionally representative interventions, where treatment resources are either eliminated or are constrained to utilization equal to that under the observed data generating mechanism.  In this Appendix, we generalize results for the full class of proportionally representative interventions, where resource utilization is constrained to any arbitrary level. 

\subsection{Regimes}

As before, regimes are determined by the user-defined treatment resource constraints $q_k^{g_z}$ and $m_k^{g_z}$. However, they must be adapted to respect the additional constraint imposed by the finite treatment population - a regime could not possibly use more treatment units than there are treatment-eligible individuals under that regime in a given interval. As such, constraints are re-expressed as:

\begin{align}
    Q_k^{g_z} = \text{min}\Big(q_k^{g_z}, \frac{P(R_k^{g_z+}=1)}{P(B_k=1)} \Big),
\end{align}

and
    
\begin{align}
    M_k^{g_z} = \text{min}\Big(m_k^{g_z}, \frac{P(S_k^{g_z+}=1)}{P(H_k=1)}\Big)
\end{align}
    
for all $k \in \{1, \dots, K\}$. 

In words, we have adapted the constraints so that the resources are utilized to the highest possible extent that does not exceed the user-specified resource limitation: either that limit, or the number of treatment eligible individuals if that limit exceeds the latter.   
    
For notational convenience, we re-express the constraints in terms of probabilities and in terms of variables $\aleph^{g_z}_{B,k}$ and $\aleph^{g_z}_{H,k}$, that indicate whether the user-specified resource limitations exceed the expected number of treatment eligible individuals under regime $g_z$ within a particular interval:

\begin{align}
    P(B_k^{g_z+}=1) = 
        & \Bigg(q_k^{g_z}\times P(B_k=1)\Bigg)^{\aleph^{g_z}_{B,k}}  \label{eq; czB}\\
        \times & \Bigg(P(R_k^{g_z+}=1)\Bigg)^{1-  \aleph^{g_z}_{B,k}}
\end{align}
 
and:
    
\begin{align}
    P(H_k^{g_z+}=1) = 
        & \Bigg(m_k^{g_z}\times P(H_k=1)\Bigg)^{\aleph^{g_z}_{H,k}} \label{eq; czH} \\
        \times & \Bigg(P(S_k^{g_z+}=1)\Bigg)^{1-  \aleph^{g_z}_{H,k}}
\end{align}
    
In particular, $\aleph^{g_z}_{H,k}$ and $\aleph^{g_z}_{B,k}$ are precisely defined as
    
        \begin{align}
    \aleph^{g_z}_{B,k} =  I\Big(q_k^{g_z}\times P(B_k=1) \leq   P(R_k^{g_z+}=1)\Big) \label{eq; alephB}
        \end{align}
    
and 
    
        \begin{align}
    \aleph^{g_z}_{B,k} =  I\Big(m_k^{g_z}\times P(H_k=1) \leq   P(S_k^{g_z+}=1) \Big) \label{eq; alephH}
        \end{align}

In words, the marginal probability of post-intervention suspected-inferior treatment utilization under regime $g_z$, $P(B_k^{g_Z+}=1)$, is specified to be equal to $q_k^{g_z}\times P(B_k=1)$, or the marginal probability of suspected-inferior treatment \textit{eligibility} under regime $g_z$, $P(R_k^{g_Z+}=1)$, if the latter exceeds the former. These were not needed in the main text, since $\aleph^{g_z}_{B,k}$ and $\aleph^{g_z}_{H,k}$ are guaranteed to evaluate to 1 for all $k \in \{1,\dots, K\}$ under regimes $g_1$ and $g_0$ specified in expressions \eqref{eq; c1.1} - \eqref{eq; c2.2}, as we proove in Lemma \eqref{lemma; appA} at the end of Appendix A. 

Now, as in expressions \eqref{eq; c1.1} - \eqref{eq; c2.2} for particular regimes $g_1$ and $g_0$, we have fully specified the marginal resource constraints to be realized under an arbitrary regime $g_z$. However, since all of the conditions of Lemma \eqref{lemma; appA} are not met under an arbitrary regime $g_z$, specifically that $P(B^{g_z+}_k=1)$ and $P(H^{g_z+}_k=1)$ are not less than or equal to $P(B_k=1)$ and $P(H_k=1)$, respectively, for all $k \in \{1,\dots, K\}$, then $\alpha_k(z)$ and $\beta_k(z)$ as defined in expressions \eqref{eq; alphak} and \eqref{eq; betak} in the main text may not be less than or equal to 1 for all $k \in \{1,\dots, K\}$. By consequence, there is no guarantee that the conditional intervention  density function as defined in expressions  \eqref{eq: int Bk} and \eqref{eq: int Hk} in the main text will always evaluate to some number between 0 and 1 and thus may not be a valid probability density function. As such, we redefine the conditional intervention distributions of the proportionally representative interventions to preclude this possibility.

First, we define additional variables $\beth^{g_z}_{B,k}$ and $\beth^{g_z}_{H,k}$, which indicate when the pre-specified post-intervention marginal treatment probability exceeds the natural marginal treatment probability, under regime $g_z$ for both the suspected inferior and superior treatments:

        \begin{align}
    \beth^{g_z}_{B,k} =  I\Big(P(B_k^{g_z}=1) > P(B_k^{g_z+}=1)\Big)  \label{eq; bethB}
        \end{align}
    
    and 
    
        \begin{align}
    \beth^{g_z}_{H,k} =  I\Big(P(H_k^{g}=1) > P(H_k^{g_z+}=1)\Big) \label{eq; bethH} 
        \end{align}
    
We remind the reader that, under a particular regime, the natural treatment values are the values we would observe at the moment before treatment intervention.  Therefore, for example, $P(B_2^{g_z}=1)$ is the probability of receiving a suspected superior treatment unit under regime $g_z$, where we have intervened on suspected superior and suspected inferior treatment in interval 1, but we have not yet intervened on suspected superior treatment in interval 2.

Thus we the define the arbitrary regime $g_z$ to be a stochastic intervention on $H_k$ and  $B_k$ for all $k \in \{1,...,K\}$, such that intervention distributions are defined as follows: 
    
\begin{align}
    f_{B_k^{g_z+} \mid R_k^{g_z+}, \overline{L}_k^{g_z}}(1 \mid R_k, \overline{L}_k) 
        =      & \big(\alpha_{k}(z, R_k) \times f_{B_k \mid R_k, \overline{L}_k}(1 \mid R_k, \overline{L}_k)\big)
                    ^{\beth^{g_z}_{B,k}} \label{eq; newintB} \\
        \times & \big(1- \alpha_{k}(z, R_k) \times f_{B_k \mid R_k, \overline{L}_k}(0 \mid R_k, \overline{L}_k)\big)
                    ^{1-\beth^{g_z}_{B,k}} \nonumber,
\end{align}

and

\begin{align}
    f_{H_k^{g_z+} \mid S_k^{g_z+}, \overline{L}_k^{g_z}}(1 \mid S_k, \overline{L}_k) 
        =      & \big(\beta_{k}(z, S_k) \times f_{H_k \mid S_k, \overline{L}_k}(1 \mid S_k, \overline{L}_k)\big)
                    ^{\beth^{g_z}_{H,k}} \label{eq; newintH}\\
        \times & \big(1- \beta_{k}(z, S_k) \times f_{H_k \mid S_k, \overline{L}_k}(0 \mid S_k, \overline{L}_k)\big)
                    ^{1-\beth^{g_z}_{H,k}} \nonumber,
\end{align}

with probability 1, where $f_{B_k \mid R_k, \overline{L}_k}(\cdot \mid \cdot)$ and $f_{B_k^{g_z+} \mid R_k^{g_z+}, \overline{L}_k^{g_z}}(\cdot \mid \cdot)$ are defined as in the main text. Additionally, $\alpha_k(z)$ and $\beta_k(z)$ are regime-specific specific variables defined to satisfy resource constraints, adapted for the arbitrary regime $g_z$ as follows:

\begin{align}
    \alpha_k(z, R_k) =\Bigg[ & \Bigg(\frac{P(B^{g_z+}_k=1)}{P(B_k^{g_z}=1)}\Bigg)
                        ^{\beth^{g_z}_{B,k}} \label{eq; alphanew}\\ 
           \times & \Bigg(\frac{1-\frac{P(B^{g_z+}_k=1)}{P(R^{g_z+}_k=1)}}{1-\frac{ P(B^{g_z}_k=1)}{P(R_k^{g_z+}=1)}}\Bigg)
                        ^{(1-\beth^{g_z}_{B,k})} \nonumber \\
           \times & \Bigg(\aleph^{g_z}_{B,k}\Bigg) \Bigg]^{R_k}, \nonumber
\end{align}

and
    
\begin{align}
    \beta_k(z, S_k) =\Bigg[ & \Bigg(\frac{P(H^{g_z+}_k=1)}{P(H_k^{g_z}=1)}\Bigg)
                        ^{\beth^{g_z}_{H,k}} \label{eq; betanew}\\
           \times & \Bigg(\frac{1-\frac{P(H^{g_z+}_k=1)}{P(S^{g_z+}_k=1)}}{1-\frac{ P(H^{g_z}_k=1)}{P(S_k^{g_z+}=1)}}\Bigg)
                        ^{(1-\beth^{g_z}_{H,k})} \nonumber \\
           \times & \Bigg(\aleph^{g_z}_{H,k}\Bigg) \Bigg]^{S_k}, \nonumber
\end{align}
    
for all $k \in \{1,\dots,K\}$.

Here, we see that $\alpha_k(z, R_k)$ and $\beta_k(z, S_k)$ are random variables only with respect to $R_k$ and $S_k$, respectively. Thus, $\alpha_k(z, R_k)$ is a scaling function that:

\begin{enumerate}
    \item [1.] Evaluates to 1 whenever $R_k=0$ (that is, it does not scale one's probability of receiving or not receiving the suspected-superior treatment if one is not eligible to receive the suspected-superior treatment), else: 
    \item [2.] Evaluates to 0 whenever $\aleph^{g_z}_{B,k}=0$ (in which case, $\beth^{g_z}_{H,k}=0$), and so the conditional probabilities of \textit{not} receiving an organ are set to 0, else:
    \item [3.] Evaluates to some number between 1 and 0, and are applied to the conditional probabilities of receiving or not receiving the suspected-superior treatment, depending on the value of $\beth^{g_z}_{B,k}$. We leave proof of this particular claim to the reader, as it follows naturally from definitions of the indicator variables.
\end{enumerate}

In the context of the expression of the weights above,  $\beth^{g_z}_{B,k}$ indicates whether or not constraint satisfaction demands an intervention that scales up or scales down the natural likelihood of suspected-superior treatment reception, whereas  $\aleph^{g_z}_{B,k}$ indicates whether or not constraint satisfaction demands an intervention that assigns the suspected-superior treatment to \textit{all} eligible individuals. 

To provide additional intuition about the expressions for $\alpha_{k}(z, R_k)$ we note that $P(B_k^{g_z+}=1)=P(B_k^{g_z+}=1, R_k^{g_z+}=1)$, due to the determinism between $B_k^{g_z+}$ and $R_k^{g_z+}$. And so:
    
\begin{align}
            & \frac{P(B_k^{g_z+}=1)}{P(B_k^{g_z}=1)} 
            = \frac{P(B_k^{g_z+}=1 \mid R_k^{g_z+}=1)}{P(B_k^{g_z}=1 \mid R_k^{g_z+}=1)}  \nonumber
\end{align}

    and:
    
\begin{align}
            & \frac{1- \frac{P(B_k^{g_z+}=1)}{P(R_k^{g_z+}=1)}}{1-\frac{ P(B_k^{g_z}=1)}{P(R_k^{g_z+}=1)}}  
            =  \frac{P(B_k^{g_z+}=0 \mid R_k^{g_z+}=1)}{P(B_k^{g_z}=0 \mid R_k^{g_z+}=1)} \nonumber
\end{align}

Therefore, when $\aleph^{g_z}_{B,k}=1$ then $\alpha_{k}(z, R_k)$ evaluates to a ratio of the conditional probabilities of post-intervention and natural treatment reception under regime $g_z$.

\subsection{Identification}

In addition to identification conditions \eqref{ex1B} - \eqref{eq: positivityH}, we require the following exchangeability conditions for $H_k^{g_z}$, analogous to Exchangeability 2 conditions \eqref{ex2BB} and \eqref{ex2BH}:

\begin{align}
    \underline{H}^{g_z}_t \independent I(B_{t}^{g_z}=b_{t}) \mid \overline{L}_{t}^{g_z}=\overline{l}_{t}, \overline{Y}_{t-1}^{g_z}=0, \overline{H}_{t-1}^{g_z}=\overline{h}_{t-1}, \overline{B}_{t-1}^{g_z}=\overline{b}_{t-1}, \label{ex2HB}
\end{align}

for $\{\overline{b}_{t}, \overline{l}_{t}, \overline{h}_{t-1}  \mid P(      
    \overline{B}_{t-1}^{g_z+}=\overline{b}_{t}, 
    \overline{L}_{t}^{g_z}=\overline{l}_{t}, 
    \overline{Y}_{t-1}^{g_z}=0, 
    \overline{H}_{t-1}^{g_z+}=\overline{h}_{t-1})>0\}$,  $t\in\{1,\dots,k\}$,
\\

and:

\begin{align}
    \underline{H}^{g_z}_t \independent I(H_{t-1}^{g_z}=h_{t-1}) \mid  \overline{B}_{t-1}^{g_z}=\overline{b}_{t-1}, \overline{L}_{t-1}^{g_z}=\overline{l}_{t-1}, \overline{Y}_{t-2}^{g_z}=0, \overline{H}_{t-2}^{g_z}=\overline{h}_{t-2},\label{ex2HH}
\end{align}

for $\{\overline{h}_{t-1},  \overline{B}_{t-1}^{g_z}=\overline{b}_{t-1},  \overline{l}_{t-1}  \mid P(\overline{H}_{t-1}^{g_z+}=\overline{h}_{t-1}        
        \overline{B}_{t-1}^{g_z+}=\overline{b}_{t-1}, 
        \overline{L}_{t-1}^{g_z} =\overline{l}_{t-1}, 
        \overline{Y}_{t-2}^{g_z}=0)>0\}$,  $t\in\{1,\dots,k\}$.
\\

These additional exchangeability conditions are required to identify $\beta_k(z, S_k)$, which cannot be trivially identified as evaluating to either 0 or 1, as in regimes $g_1$ and $g_0$ described in the main text.

\subsubsection{Identification formulae}

As in Section \ref{subsec: gfom}, when all identification conditions hold, including \eqref{ex2HB} and \eqref{ex2HH}, we can identify $\mathbb{E}( Y^{g_z}_K )$ from the same non-extended g-formula of Robins (1986) for $Y_K$, $f^{g_z}_{Y_K}(1)$: equal to \eqref{eq; gformY}, except now $\alpha_k(z, R_k)$ and $\beta_k(z, S_k)$ are identified by the functionals in expressions \eqref{eq; alphanew} and  \eqref{eq; betanew}, and $\aleph$ and $\beth$ indicator functions are identified by the functionals in expressions \eqref{eq; alephB} - \eqref{eq; bethH}, except replacing $P(B_k^{g_z}=1)$ with $f^{g_z}_{B_k}(1)$ of \eqref{eq; gformB}, $P(H_k^{g_z}=1)$ with $f^{g_z}_{H_k}(1)$ of \eqref{eq; gformH}, and $P(R_k^{g_z+}=1)$ with $f^{g_z}_{R_k}(1)$ and $P(S_k^{g_z+}=1)$ with $f^{g_z}_{S_k}(1)$, where $f^{g_z}_{R_k}(1)$ and $f^{g_z}_{S_k}(1)$ are the g-formulae for $R_k$ and $S_k$, respectively:

\begin{align}
    f^{g_z}_{R_k}(1) = 
    & \sum_{\overline{l}_{k-1}} P(Y_{k-1}=0 \mid \overline{H}_{k-1}=0, \overline{B}_{k-1}=0, \overline{L}_{k-1}=\overline{l}_{k-1}, Y_{k-2}=0) \label{eq; gformR} \\
    & \times  \prod_{m=1}^{k-1} \Big\{f_{H_{m}^{g_z+} \mid S_{m}^{g_z+}, \overline{L}_{m}^{g_z}}(0 \mid 1, \overline{l}_{m}) \nonumber \\
    & \times  f_{B_{m}^{g_z+} \mid R_{m}^{g_z+}, \overline{L}_{m}^{g_z}}(0 \mid 1, \overline{l}_{m})  \nonumber \\
    & \times P(L_m=l_m \mid R_m=1, \overline{L}_{m-1}=\overline{l}_{m-1}) \nonumber\\
    & \times P(Y_{m-1}=0 \mid H_{m-1}=0, S_{m-1}=1, \overline{L}_{m-1}=\overline{l}_{m-1})\Big\} \nonumber,
\end{align}

and

\begin{align}
    f^{g_z}_{S_k}(1) = 
    & \sum_{\overline{l}_{k}} f_{B_{k}^{g_z+} \mid R_{k}^{g_z+}, \overline{L}_{k}}(1 \mid 1, \overline{l}_{k}) \label{eq; gformS} \\
    & \times   P(L_k=l_k \mid R_k=1, \overline{L}_{k-1}=\overline{l}_{k-1}) \\
    & \times \prod_{m=1}^{k-1} \Big\{P(Y_{m}=0 \mid H_{m}=0, S_{m}=1, \overline{L}_{m}=\overline{l}_{m}) \nonumber \\
    & \times  f_{H_{m}^{g_z+} \mid S_{m}^{g_z+}, \overline{L}_{m}^{g_z}}(0 \mid 1, \overline{l}_{m}) \nonumber \\
    & \times  f_{B_{m}^{g_z+} \mid R_{m}^{g_z+}, \overline{L}_{m}}(0 \mid 1, \overline{l}_{m})  \nonumber \\
    & \times   P(L_m=l_m \mid R_m=1, \overline{L}_{m-1}=\overline{l}_{m-1})\Big\} \nonumber.
\end{align}

\subsubsection{Alternative g-formulae representation}

As in the main text, the g-formula can be represented as in expressions \eqref{eq; altgformY} and \eqref{eq; lambda}, but where $W^{g_z}_{B,k}$ and $W^{g_z}_{H,k}$ are expressed as follows:

\begin{align}
        W_{B,k}^{g_z}=    \prod_{j=1}^{k}
            \begin{bmatrix*}[l]
            &  \begin{pmatrix*}[l]  \frac{
                     \Big( \alpha_j(z, R_j) \times f_{B_{j} \mid R_{j}, \overline{L}_{j}}(1 \mid R_{j}, \overline{L}_{j})\Big)^{B_{j}}  
                    \times \Big(1- \alpha_j(z, R_j) \times f_{B_{j} \mid R_{j}, \overline{L}_{j}}(1 \mid R_{j}, \overline{L}_{j})\Big)^{1-B_j}
                }{
                f_{B_j \mid R_j, \overline{L}_j}(B_j \mid R_j, \overline{L}_j)
                } 
                \end{pmatrix*}^{\beth^{g_z}_{B,k}} \\
    \times &  \begin{pmatrix*}[l]  \frac{
                     \Big(1 - \alpha_j(z, R_j) \times f_{B_{j} \mid R_{j}, \overline{L}_{j}}(0 \mid R_{j}, \overline{L}_{j})\Big)^{B_{j}}  
                    \times \Big(\alpha_j(z, R_j) \times f_{B_{j} \mid R_{j}, \overline{L}_{j}}(0 \mid S_{j}, \overline{L}_{j})\Big)^{1-B_j}
                }{
                f_{B_j \mid R_j, \overline{L}_j}(B_j \mid R_j, \overline{L}_j)
                } 
                \end{pmatrix*}^{(1-\beth^{g_z}_{B,k})} \\
            \end{bmatrix*} \label{eq; newweight B}.
\end{align}

and 

\begin{align}
        W_{H,k}^{g_z}=    \prod_{j=1}^{k}
            \begin{bmatrix*}[l]
            &  \begin{pmatrix*}[l]  \frac{
                     \Big( \beta_j(z, S_j) \times f_{H_{j} \mid S_{j}, \overline{L}_{j}}(1 \mid S_{j}, \overline{L}_{j})\Big)^{H_{j}}  
                    \times \Big(1- \beta_j(z, S_j) \times f_{H_{j} \mid S_{j}, \overline{L}_{j}}(1 \mid S_{j}, \overline{L}_{j})\Big)^{1-H_j}
                }{
                f_{H_j \mid S_j, \overline{L}_j}(H_j \mid S_j, \overline{L}_j)
                } 
                \end{pmatrix*}^{\beth^{g_z}_{H,k}} \\
    \times &  \begin{pmatrix*}[l]  \frac{
                     \Big(1 - \beta_j(z, S_j) \times f_{H_{j} \mid S_{j}, \overline{L}_{j}}(0 \mid S_{j}, \overline{L}_{j})\Big)^{H_{j}}  
                    \times \Big(\beta_j(z, S_j) \times f_{H_{j} \mid S_{j}, \overline{L}_{j}}(0 \mid S_{j}, \overline{L}_{j})\Big)^{1-H_j}
                }{
                f_{H_j \mid S_j, \overline{L}_j}(H_j \mid S_j, \overline{L}_j)
                } 
                \end{pmatrix*}^{(1-\beth^{g_z}_{H,k})} \\
            \end{bmatrix*} \label{eq; newweight H}.
\end{align}

The g-formulae for $B_k$ and $H_k$ are expressed as in \eqref{eq; altgformB} and \eqref{eq; altgformB}, and the additional g-formulae for $R_k$ and $S_k$ are expressed as:

\begin{align}
   f^{g_z}_{R_j}(1) =  & \pi_{R,j}^{g_z}, \label{eq; altgformR}
\end{align}

where

\begin{align}
    & \pi_{R,j}^{g_z} = \mathbb{E}\big[R_jW_{H,j-1}^{g_z}W_{B,j-1}^{g_z}\big]
\end{align}

and the g-formula of expression \eqref{eq; gformH} as:

\begin{align}
   f^{g_z}_{S_j}(1) =  & \pi_{S,j}^{g_z}, \label{eq; altgformS}
\end{align}

where

\begin{align}
    & \pi_{S,j}^{g_z} = \mathbb{E}\big[S_jW_{B,j}^{g_z}W_{H,j-1}^{g_z}\big].
\end{align}

\subsection{Inverse Probability Weighted Estimation of Risk under Proportionally Representative Interventions}

\subsubsection{Marginal Structural Models}
\label{subsec; AppAMSM}

Consider a cloned dataset as in subsection \ref{subsec; MSM} of the main text, except there are as many cloned copies as their are regimes under consideration, where $\mathcal{Z}$ is the support of $Z$, and whose instantiations $z$ index a particular regime $g_z$. Then $\lambda_{Y,k}^{g_z}(V)$, is likewise re-expressed as in \eqref{eq; altgformYZ} where

\begin{align}
        W_{B,k}(Z)=    \prod_{j=1}^{k}
            \begin{bmatrix*}[l]
            &  \begin{pmatrix*}[l]  \frac{
                     \Big( \alpha_j(Z, R_j) \times f_{B_{j} \mid R_{j}, \overline{L}_{j}}(1 \mid R_{j}, \overline{L}_{j})\Big)^{B_{j}}  
                    \times \Big(1- \alpha_j(Z, R_j) \times f_{B_{j} \mid R_{j}, \overline{L}_{j}}(1 \mid R_{j}, \overline{L}_{j})\Big)^{1-B_j}
                }{
                f_{B_j \mid R_j, \overline{L}_j}(B_j \mid R_j, \overline{L}_j)
                } 
                \end{pmatrix*}^{\beth_{B,k}(Z)} \\
    \times &  \begin{pmatrix*}[l]  \frac{
                     \Big(1 - \alpha_j(Z, R_j) \times f_{B_{j} \mid R_{j}, \overline{L}_{j}}(0 \mid R_{j}, \overline{L}_{j})\Big)^{B_{j}}  
                    \times \Big(\alpha_j(Z, R_j) \times f_{B_{j} \mid R_{j}, \overline{L}_{j}}(0 \mid R_{j}, \overline{L}_{j})\Big)^{1-B_j}
                }{
                f_{B_j \mid R_j, \overline{L}_j}(B_j \mid R_j, \overline{L}_j)
                } 
                \end{pmatrix*}^{(1-\beth_{B,k}(Z))} \\
            \end{bmatrix*} .
\end{align}

and 

\begin{align}
        W_{H,k}(Z)=    \prod_{j=1}^{k}
            \begin{bmatrix*}[l]
            &  \begin{pmatrix*}[l]  \frac{
                     \Big( \beta_j(Z, S_j) \times f_{H_{j} \mid S_{j}, \overline{L}_{j}}(1 \mid S_{j}, \overline{L}_{j})\Big)^{H_{j}}  
                    \times \Big(1- \beta_j(Z, S_j) \times f_{H_{j} \mid S_{j}, \overline{L}_{j}}(1 \mid S_{j}, \overline{L}_{j})\Big)^{1-H_j}
                }{
                f_{H_j \mid S_j, \overline{L}_j}(H_j \mid S_j, \overline{L}_j)
                } 
                \end{pmatrix*}^{\beth_{H,k}(Z)} \\
    \times &  \begin{pmatrix*}[l]  \frac{
                     \Big(1 - \beta_j(Z, S_j) \times f_{H_{j} \mid S_{j}, \overline{L}_{j}}(0 \mid S_{j}, \overline{L}_{j})\Big)^{H_{j}}  
                    \times \Big(\beta_j(Z, S_j) \times f_{H_{j} \mid R_{j}, \overline{L}_{j}}(0 \mid S_{j}, \overline{L}_{j})\Big)^{1-H_j}
                }{
                f_{H_j \mid S_j, \overline{L}_j}(H_j \mid S_j, \overline{L}_j)
                } 
                \end{pmatrix*}^{(1-\beth_{H,k}(Z))} \\
            \end{bmatrix*} .
\end{align}

Then, $\pi_{B,j}(Z)$ and $\pi_{H,j}(Z)$ are defined as in \eqref{eq; altgformBZ} and \eqref{eq; altgformHZ} and $\pi_{R,j}(Z)$ and $\pi_{S,j}(Z)$ are defined analogously. $\aleph$ and $\beth$ indicator functions for the MSM are defined as in \eqref{eq; alephB} - \eqref{eq; bethH}, except replacing $f_{W_j}^{g_z}(1)$ g-formulae expressions with $\pi_{W,j}(Z)$ expressions, for arbitrary variable $W_j$.

Finally, the MSM is written as in \eqref{eq; MSM} and, given the MSM holds, the g-formula for $Y_K$ is again re-written as in \eqref{eq; altgformYMSM}.

\subsubsection{Inverse Probability Weighted Estimation} \label{subsec: IPWarb}

Estimating equations are the same as in expressions \eqref{eq; esteq} and \eqref{eq; esteqU}, where estimated weights $W_{B,k}(Z,\hat{\eta}_B)$ and $W_{H,k}(Z, \hat{\eta}_H)$ and their components are analogously defined for the arbitrary intervention, again swapping $f_{B_j \mid R_j, \overline{L}_j}(B_j \mid R_j, \overline{L}_j; \hat{\eta}_B)$ for $f_{B_j \mid R_j, \overline{L}_j}(B_j \mid R_j, \overline{L}_j)$ and $f_{H_j \mid S_j, \overline{L}_j}(H_j \mid R_j, \overline{L}_j; \hat{\eta}_H)$ for $f_{H_j \mid S_j, \overline{L}_j}(H_j \mid R_j, \overline{L}_j)$ in all places. Estimators for $\hat{\pi}_{B,j}(Z, \hat{\eta})$ and $\hat{\pi}_{H,j}(Z, \hat{\eta})$ are defined as in Section \ref{subsec: IPWest}, and estimators for $\hat{\pi}_{R,j}(Z, \hat{\eta})$ and $\hat{\pi}_{S,j}(Z, \hat{\eta})$ are defined analagously  as follows:

\begin{align}
    & \hat{\pi}_{R,j}(Z, \hat{\eta}) = \frac{1}{n}\sum_{i=1}^n\big[R_{i,j}W_{H,i,j-1}(Z_i, \hat{\eta})W_{B, i, j-1}(Z_i, \hat{\eta})\big]
\end{align}

and 

\begin{align}
    & \hat{\pi}_{S,j}(Z, \hat{\eta}) = \frac{1}{n}\sum_{i=1}^n\big[S_{i,j}W_{B, i,j}(Z_i, \hat{\eta})W_{H, i,j-1}(Z_i, \hat{\eta})\big].
\end{align}.

Then, as before, if (i) the MSM is correctly specified; and (ii), the models $f_{B_j \mid R_j, \overline{L}_j}(B_j \mid R_j, \overline{L}_j; \eta_B)$ and $f_{H_j \mid S_j, \overline{L}_j}(H_j \mid S_j, \overline{L}_j; \eta_H)$ are correctly specified, then 

\begin{align}
   \mathbb{E}[U_k(\psi^*, \eta_B^*, \eta_H^*)] = 0 
\end{align}

for all $k$, with $\eta_B^*$ and $\eta_H^*$ the true values of $\eta_B$ and $\eta_H$ and the IPW estimator $\hat{\psi}$ consistent and asymptotically normal for $\psi^*$. Assuming the same models for $h\{\gamma(k,Z,V; \psi)\}$, $f_{B_k \mid R_k, \overline{L}_j}(B_k \mid 1, \overline{L}_k; \eta_B)$ and $f_{H_k \mid S_k, \overline{L}_j}(H_k \mid 1, \overline{L}_k; \eta_H)$ as in expressions \eqref{eq; pooledlogitY} - \eqref{eq; pooledlogitH}, we can solve the estimating equation with the following generalized algorithm, applied to a cloned subject-interval dataset, constructed as before:

\subsubsection*{Generalized IPW estimation algorithm for $\psi$} \label{subsubsec: IPW alg arb}

\begin{enumerate}
    \item [1.] Using subject-interval records with $Z=1$ and $R_k=1$, obtain $\hat{\eta}_B$ by fitting pooled logistic regression model \eqref{eq; pooledlogitB} with dependent variable $B_k$ and independent variables a specified function of $k=0,\dots, K$ and $\overline{L}_k$, corresponding to the choice of $\phi_B(\cdot)$.
    
    \item [2.] Similarly, using subject-interval records with $Z=1$ and $S_k=1$, obtain $\hat{\eta}_H$ by fitting a pooled logistic regression model \eqref{eq; pooledlogitH} with dependent variable $H_k$ and independent variables a specified function of $k=0,\dots, K$ and $\overline{L}_k$, corresponding to the choice of $\phi_H(\cdot)$.
    
    \item[3.] For all $z \in \mathcal{Z}$, set $\alpha_{0}(z, 1, \hat{\eta})$ and $\beta_{0}(z, 1, \hat{\eta})$ to 1. Obtain $\alpha_1(z, 1, \hat{\eta})$, $\aleph_{B,1}(z)$, and $\beth_{B,1}(z)$ by evaluating the estimated analogues of expression \eqref{eq; alphanew}, \eqref{eq; alephB}, and \eqref{eq; bethB}, noting that $P(B_1^{g_z+}=1)$ is defined by the intervention, and taking $\hat{\pi}_{B,1}(z, \hat{\eta}$ to be the proportion of individuals with $B_1=1$, $\frac{1}{n}\sum_{i=1}^nB_{i,1}$, and noting that $\hat{\pi}_{R,1}(z, \hat{\eta})=1$ by definition.
    
    \item[4.] For each subject's line 1, attach the suspected-superior treatment weight, $W_{B,1}$, calculated as: 
    
    $$
       \begin{bmatrix*}[l]
            &  \begin{pmatrix*}[l]  \frac{
                     \Big( \alpha_1(Z, R_1) \times \text{expit}\{\phi_B(1, \overline{L}_1; \hat{\eta_B})\}\Big)^{B_{1}}  
                    \times \Big(1- \alpha_j(Z, R_1) \times \text{expit}\{\phi_B(1, \overline{L}_1; \hat{\eta_B})\}\Big)^{1-B_1}
                }{
                \Big( \text{expit}\{\phi_B(1, \overline{L}_1; \hat{\eta_B})\}\Big)^{B_{1}}  
                    \times \Big(1- \text{expit}\{\phi_B(1, \overline{L}_1; \hat{\eta_B})\}\Big)^{1-B_1}
                } 
                \end{pmatrix*}^{\beth_{B,1}(Z)} \\
    \times &  \begin{pmatrix*}[l]  \frac{
                     \Big(1 - \alpha_1(Z, R_1) \times (1-\text{expit}\{\phi_B(1, \overline{L}_1; \hat{\eta_B})\})\Big)^{B_{1}}  
                    \times \Big(\alpha_j(Z, R_1) \times (1-\text{expit}\{\phi_B(1, \overline{L}_1; \hat{\eta_B})\})\Big)^{1-B_1}
                }{
                \Big( \text{expit}\{\phi_B(1, \overline{L}_1; \hat{\eta_B})\}\Big)^{B_{1}}  
                    \times \Big(1- \text{expit}\{\phi_B(1, \overline{L}_1; \hat{\eta_B})\}\Big)^{1-B_1}
                } 
                \end{pmatrix*}^{(1-\beth_{B,1}(Z))} \\
            \end{bmatrix*}.
$$

    \item[5.] For all $z \in \mathcal{Z}$, $s_1 \in \{0, 1\}$, then obtain $\beta_1(z, s_1, \hat{\eta})$, $\aleph_{H,1}(z)$, and $\beth_{H,1}(z)$.
    \item[6.] For each subject's line 1, attach the suspected-inferior treatment weight, $W_{H,1}$, calculated as: 
    
    $$
       \begin{bmatrix*}[l]
            &  \begin{pmatrix*}[l]  \frac{
                     \Big( \beta_1(Z, S_1) \times \text{expit}\{\phi_H(1, \overline{L}_1; \hat{\eta_H})\}\Big)^{H_{1}}  
                    \times \Big(1- \beta_1(Z, S_1) \times \text{expit}\{\phi_H(1, \overline{L}_1; \hat{\eta_H})\}\Big)^{1-H_1}
                }{
                \Big( \text{expit}\{\phi_H(1, \overline{L}_1; 
                    \hat{\eta_H})\}Big)^{H_{1}}  
                    \times \Big(1- \text{expit}\{\phi_H(1, \overline{L}_1; \hat{\eta_H})\}\Big)^{1-H_1}
                } 
                \end{pmatrix*}^{\beth_{H,1}(Z)} \\
    \times &  \begin{pmatrix*}[l]  \frac{
                     \Big(1 - \beta_1(Z, S_1) \times (1-\text{expit}\{\phi_H(1, \overline{L}_1; \hat{\eta_H})\})\Big)^{B_{1}}  
                    \times \Big(\beta_1(Z, S_1) \times (1-\text{expit}\{\phi_H(1, \overline{L}_1; \hat{\eta_H})\})\Big)^{1-H_1}
                }{
                \Big( \text{expit}\{\phi_H(1, \overline{L}_1; \hat{\eta_H})\}\Big)^{H_{1}}  
                    \times \Big(1- \text{expit}\{\phi_H(1, \overline{L}_1; \hat{\eta_H})\}\Big)^{1-H_1}
                } 
                \end{pmatrix*}^{(1-\beth_{H,1}(Z))} \\
            \end{bmatrix*}.
$$
    
    \item[7.] Iterate from $k=2,\dots K$:
    \begin{enumerate}
       \item [7.1.] For all $z \in \mathcal{Z}$, $r_k \in \{0, 1\}$, obtain $\alpha_{k}(z, r_k, \hat{\eta})$, $\aleph_{B,k}(z)$, and $\beth_{B,k}(z)$.
       \item[7.2.] Using subject-interval records on line $k$, attach the suspected-superior treatment weight, $W_{B,k}$, calculated as: 
       
$$ \prod_{j=1}^k
       \begin{bmatrix*}[l]
            &  \begin{pmatrix*}[l]  \frac{
                     \Big( \alpha_j(Z, R_j) \times \text{expit}\{\phi_B(j, \overline{L}_j; \hat{\eta_B})\}\Big)^{B_{j}}  
                    \times \Big(1- \alpha_j(Z, R_j) \times \text{expit}\{\phi_B(j, \overline{L}_j; \hat{\eta_B})\}\Big)^{1-B_j}
                }{
                \Big( \text{expit}\{\phi_B(j, \overline{L}_j; \hat{\eta_B})\}\Big)^{B_{j}}  
                    \times \Big(1- \text{expit}\{\phi_B(j, \overline{L}_j; \hat{\eta_B})\}\Big)^{1-B_j}
                } 
                \end{pmatrix*}^{\beth_{B,j}(Z)} \\
    \times &  \begin{pmatrix*}[l]  \frac{
                     \Big(1 - \alpha_j(Z, R_j) \times (1-\text{expit}\{\phi_B(j, \overline{L}_j; \hat{\eta_B})\})\Big)^{B_{j}}  
                    \times \Big(\alpha_j(Z, R_j) \times (1-\text{expit}\{\phi_B(j, \overline{L}_j; \hat{\eta_B})\})\Big)^{1-B_j}
                }{
                \Big( \text{expit}\{\phi_B(j, \overline{L}_j; \hat{\eta_B})\}\Big)^{B_{j}}  
                    \times \Big(1- \text{expit}\{\phi_B(j, \overline{L}_j; \hat{\eta_B})\}\Big)^{1-B_j}
                } 
                \end{pmatrix*}^{(1-\beth_{B,j}(Z))} \\
            \end{bmatrix*}.
$$
                
        \item [7.3.] For all $z \in \mathcal{Z}$, $s_k \in \{0, 1\}$, obtain $\beta_{k}(z, s_k, \hat{\eta})$, $\aleph_{H,k}(z)$, and $\beth_{H,k}(z)$.
        \item[7.4.] Using subject-interval records on line $k$, attach the suspected-superior treatment weight, $W_{H,k}$, calculated as:  
    
    $$
    \prod_{j=1}^k
       \begin{bmatrix*}[l]
            &  \begin{pmatrix*}[l]  \frac{
                     \Big( \beta_j(Z, S_j) \times \text{expit}\{\phi_H(j, \overline{L}_j; \hat{\eta_H})\}\Big)^{H_{j}}  
                    \times \Big(1- \beta_j(Z, S_j) \times \text{expit}\{\phi_H(j, \overline{L}_j; \hat{\eta_H})\}\Big)^{1-H_j}
                }{
                \Big( \text{expit}\{\phi_H(j, \overline{L}_j; 
                    \hat{\eta_H})\}Big)^{H_{j}}  
                    \times \Big(1- \text{expit}\{\phi_H(j, \overline{L}_j; \hat{\eta_H})\}\Big)^{1-H_j}
                } 
                \end{pmatrix*}^{\beth_{H,j}(Z)} \\
    \times &  \begin{pmatrix*}[l]  \frac{
                     \Big(1 - \beta_j(Z, S_j) \times (1-\text{expit}\{\phi_H(j, \overline{L}_j; \hat{\eta_H})\})\Big)^{B_{j}}  
                    \times \Big(\beta_j(Z, S_j) \times (1-\text{expit}\{\phi_H(j, \overline{L}_j; \hat{\eta_H})\})\Big)^{1-H_j}
                }{
                \Big( \text{expit}\{\phi_H(j, \overline{L}_j; \hat{\eta_H})\}\Big)^{H_{j}}  
                    \times \Big(1- \text{expit}\{\phi_H(j, \overline{L}_j; \hat{\eta_H})\}\Big)^{1-H_j}
                } 
                \end{pmatrix*}^{(1-\beth_{H,j}(Z))} \\
            \end{bmatrix*}.
$$

    \end{enumerate}
\item[8.] Using all subject-interval records in the cloned dataset, obtain $\hat{\psi}$ by fitting a \textit{weighted} pooled logistic regression model, with weights $W_{B,k}$ and $W_{H,k}$ defined in the previous steps, dependent variable $Y_k$ and independent variables a specified function of $k=1,\dots,K$ and $(Z, V)$ corresponding to the choice of $\gamma(\cdot)$.
\end{enumerate}\
\\

Our final IPW estimate of the g-formula for the risk of death by $K$ under regime $g_z$, $f^{g_z}_{Y_K}(1)$ defined by the arbitrary proportionally-representative interventions that constrain resources can then be obtained by the plug-in estimator of expression \eqref{eq; plugin} in the main text.

\subsection{Appendix A proofs}

\begin{lemma} \label{lemma; appA}
If $P(B^{g_z+}_j=1) \le P(B_j=1)$ and $P(H^{g_z+}_j=1) \le P(H_k=1)$ for all $j \in \{1,\dots, k\}$,  $\alpha_j(z)$ and $\beta_j(z)$ are defined as in expressions \eqref{eq; alphak} and \eqref{eq; betak}, and the identification conditions of expressions \eqref{ex1B}-\eqref{eq: positivityH} hold, then $P(B_{k+1}=1) \le P(R_{k+1}^{g_z+}=1)$ and $P(H_{k+1}=1) \le P(S_{k+1}^{g_z+}=1)$ for all $k \in \{1,\dots, K\}$.
\end{lemma}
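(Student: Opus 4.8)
I would prove the statement by induction on $k$, but with a strengthened inductive hypothesis: at each stage I carry, in addition to the two asserted inequalities, the facts that $\alpha_j(z)\le 1$ and $\beta_j(z)\le 1$ and the \emph{refined} comparisons $P(R_{k+1}^{g_z+}=1,\overline{L}_{k+1}^{g_z}=\overline{l}_{k+1})\ge P(R_{k+1}=1,\overline{L}_{k+1}=\overline{l}_{k+1})$ together with the analogous one for $S_{k+1}$, holding for every $\overline{l}_{k+1}$. The asserted inequalities on their own are too weak to propagate, so building these extra clauses into the hypothesis is the organizing idea.

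The engine of the argument is the elementary observation that $\alpha_j(z)\le 1$ forces the regime's \emph{non}-reception probability to dominate the observed one, $1-\alpha_j(z)f_{B_j \mid R_j, \overline{L}_j}(1\mid 1,\overline{l}_j)\ge 1-f_{B_j \mid R_j, \overline{L}_j}(1\mid 1,\overline{l}_j)$, and symmetrically for $H$ when $\beta_j(z)\le 1$. Under the assumed identification conditions \eqref{ex1B}--\eqref{eq: positivityH}, $f^{g_z}_{R_{k+1}}(1)$ and $f^{g_z}_{S_{k+1}}(1)$ admit the g-formula representations \eqref{eq; gformR} and \eqref{eq; gformS}; the corresponding pointwise-in-$\overline{l}_{k+1}$ quantities $P(R_{k+1}^{g_z+}=1,\overline{L}_{k+1}^{g_z}=\overline{l}_{k+1})$ and $P(R_{k+1}=1,\overline{L}_{k+1}=\overline{l}_{k+1})$ have sequential factorizations that agree in every factor except that each observed non-reception probability $f_{B_m \mid R_m, \overline{L}_m}(0\mid 1,\overline{l}_m)$, $f_{H_m \mid S_m, \overline{L}_m}(0\mid 1,\overline{l}_m)$ ($m\le k$) is replaced by its $g_z$-counterpart. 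Since all factors are non-negative and, by the induction hypothesis at earlier stages, each such replacement can only enlarge a factor, the refined $R$-comparison follows factor by factor; the $S$-comparison follows similarly, after noting that $S_{k+1}=1$ is the event $\{R_{k+1}=1\}\cap\{B_{k+1}=0\}$ and invoking $\alpha_{k+1}(z)\le 1$ for the extra factor.

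The conclusions then fall out by chaining trivial inclusions. Writing $P(B_{k+1}^{g_z}=1)=\sum_{\overline{l}_{k+1}}f_{B_{k+1} \mid R_{k+1}, \overline{L}_{k+1}}(1\mid 1,\overline{l}_{k+1})\,P(R_{k+1}^{g_z+}=1,\overline{L}_{k+1}^{g_z}=\overline{l}_{k+1})$ for the identified natural-value probability (this is \eqref{eq; gformB}, which relies on Exchangeability~2), and the same expression for $P(B_{k+1}=1)$ with $P(R_{k+1}=1,\overline{L}_{k+1}=\overline{l}_{k+1})$ as the weight, the refined comparison gives $P(B_{k+1}=1)\le P(B_{k+1}^{g_z}=1)$; since $\{B_{k+1}^{g_z}=1\}\subseteq\{R_{k+1}^{g_z+}=1\}$, this yields the first asserted inequality $P(B_{k+1}=1)\le P(R_{k+1}^{g_z+}=1)$. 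The intermediate step $P(B_{k+1}=1)\le P(B_{k+1}^{g_z}=1)$, combined with the hypothesis $P(B_{k+1}^{g_z+}=1)\le P(B_{k+1}=1)$ and definition \eqref{eq; alphak}, gives $\alpha_{k+1}(z)\le 1$, restoring that clause of the hypothesis; the inferior-treatment side ($\beta_{k+1}(z)\le 1$ by \eqref{eq; betak}, and $P(H_{k+1}=1)\le P(S_{k+1}^{g_z+}=1)$ via $P(H_{k+1}=1)\le P(H_{k+1}^{g_z}=1)\le P(S_{k+1}^{g_z+}=1)$) goes through identically. The base case ($k=1$) is a direct computation: $R_1^{g_z+}\equiv 1$ gives $P(B_1^{g_z}=1)=P(B_1=1)$, hence $\alpha_1(z)\le 1$; then $S_1^{g_z+}=I\{B_1^{g_z+}=0\}$ together with $\alpha_1(z)\le 1$ and disjointness of $\{B_1=1\}$ and $\{H_1=1\}$ gives $P(H_1^{g_z}=1)\ge P(H_1=1)$ and hence $\beta_1(z)\le 1$.

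I expect the main obstacle to be the bookkeeping created by the within-interval asymmetry between $R$ and $S$: because $S_{k+1}$ (eligibility for the inferior treatment in interval $k+1$) is a deterministic function of the post-intervention variable $B_{k+1}^{g_z+}$, the inferior-treatment bound at stage $k+1$ genuinely depends on having already established $\alpha_{k+1}(z)\le 1$, so the induction must be sequenced (superior before inferior within each interval) and one must verify that no step silently assumes what it is proving. A separate minor point is the treatment of degenerate denominators such as $P(B_k^{g_z}=1)=0$ (making $\alpha_k(z)$ a $0/0$ ratio), which I would dispatch with the usual convention that the regime then assigns that treatment with probability $0$, consistent with the positivity condition \eqref{eq: positivityB}.
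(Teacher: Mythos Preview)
Your proposal is correct and follows essentially the same route as the paper: both argue inductively that $\alpha_j(z)\le 1$ and $\beta_j(z)\le 1$ by comparing, term-by-term over $\overline{l}$, the g-formula expansion of $P(B_k^{g_z}=1)$ (resp.\ $P(H_k^{g_z}=1)$) against the corresponding factorization of $P(B_k=1)$ (resp.\ $P(H_k=1)$), using that the earlier-stage bounds $\alpha_m,\beta_m\le 1$ enlarge each non-reception factor; the asserted inequalities then follow via $P(B_k=1)\le P(B_k^{g_z}=1)\le P(R_k^{g_z+}=1)$. Your version is more explicit---packaging the pointwise-in-$\overline{l}$ comparison into the induction hypothesis and sequencing the within-interval superior/inferior steps---but the substance is the same.
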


\begin{proof}
 Since $P(B^{g_z+}_1=1) \le P(B_1=1)$ and $P(B^{g_z}_1=1)=P(B_1=1)$ by consistency (since $B^{g_z}_1$ occurs prior to any intervention) then $\alpha_1(z) \le 1$. Then, comparing the terms in

$$f^{g_z}_{H_1}(1) =\sum_{\overline{l}_1} P(H_1=1 \mid \overline{L}_1=\overline{l}_1, S_1=1)  \big(1-\alpha_{1}(z) \times f_{B_{1} \mid R_{1}, \overline{L}_{1}}(1 \mid 1, \overline{l}_{1})\big) P(L_1=l_1)$$

with the terms in

$$P(H_1=1) =\sum_{\overline{l}_1} P(H_1=1 \mid \overline{L}_1=\overline{l}_1, S_1=1)  \big(1- f_{B_{1} \mid R_{1}, \overline{L}_{1}}(1 \mid 1, \overline{l}_{1})\big) P(L_1=l_1)$$

We see that $f^{g_z}_{H_1}(1) \le P(H_1=1)$ whenever $\alpha_1(z) \le 1$. Since $\alpha_1(z) \le 1$, as shown above, then $\beta_1(z) \le 1$.

Similarly, comparing the terms in $f^{g_z}_{B_k}(1)$ with the terms in

\begin{align}
    P(B_k=1) = 
    & \sum_{\overline{l}_k} P(B_k=1 \mid \overline{L}_k=\overline{l}_k, R_k=1)   \nonumber\\
    & \times \prod_{m=1}^{k} \Big\{  P(L_m=l_m \mid R_m=1, \overline{L}_{m-1}=\overline{l}_{m-1}) \nonumber\\
    & \times P(Y_{m-1}=0 \mid H_{m-1}=0, S_{m-1}=1, \overline{L}_{m-1}=\overline{l}_{m-1}) \nonumber \\
    & \times  \big(1-f_{H_{m-1} \mid S_{m-1}, \overline{L}_{m-1}}(1 \mid 1, \overline{l}_{m-1})\big) \nonumber \\
    & \times  \big(1-f_{B_{m-1} \mid R_{m-1}, \overline{L}_{m-1}}(1 \mid 1, \overline{l}_{m-1})\big) \Big\} \nonumber,  
\end{align}

and likewise comparing the terms in $f^{g_z}_{H_k}(1)$, we see that $f^{g_z}_{B_k}(1) \le P(B_k=1)$ whenever $\alpha_m(z) \le 1$ and $\beta_m(z) \le 1$ for $m=1,\dots, k-1$. Similarly, $f^{g_z}_{H_k}(1) \le P(H_k=1)$. Arguing, iteratively, as before, from $k=2,...,K$, and using identification conditions of expressions \eqref{ex1B}-\eqref{eq: positivityH} so that $f^{g_z}_{H_k}(1) = P(H_k^{g_z}=1)$ and $f^{g_z}_{B_k}(1) = P(B_k^{g_z}=1)$, we see that $\alpha_k(z) \le 1$ and $\beta_k(z) \le 1$ for all $k \in \{2,...,K\}$. 

Then, since the above suffices to show that $P(B_k^{g_z+}=1) \le P(B_k^{g_z}=1)$ and $P(H_k^{g_z+}=1) \le P(H_k^{g_z}=1)$ for all $k \in \{1,...,K\}$, and since $P(B_k^{g_z}=1) \le P(R_k^{g_z+}=1)$ and $P(H_k^{g_z}=1) \le P(S_k^{g_z+}=1)$ for all $k \in \{1,...,K\}$ by definition, then it follows that $P(B_k^{g_z+}=1) \le P(R_k^{g_z+}=1)$ and $P(H_k^{g_z+}=1) \le P(S_k^{g_z+}=1)$ for all $k \in \{1,...,K\}$. 
\end{proof}.

\begin{corollary}. \label{corollary; appA} If $P(B^{g_z+}_k=1) \le P(B_k=1)$ and $P(H^{g_z+}_k=1) \le P(H_k=1)$ for all $k \in \{1,\dots, K\}$,  $\alpha_k(z)$ and $\beta_k(z)$ are defined as in expressions \eqref{eq; alphak} and \eqref{eq; betak}, and the identification conditions of expressions \eqref{ex1B}-\eqref{eq: positivityH} hold, then $0 \ge P(B_k^{g_z+}=1) \le 1$ and $0 \ge P(H_k^{g_z+}=1) \le 1$ for all $k \in \{1,\dots, K\}$. 
\end{corollary}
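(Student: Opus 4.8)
The plan is to obtain the conclusion directly from the work already carried out inside the proof of Lemma \ref{lemma; appA}, so that the corollary is essentially bookkeeping on top of that lemma. The proof of Lemma \ref{lemma; appA} establishes, under precisely the hypotheses assumed here — the inequalities $P(B^{g_z+}_k=1) \le P(B_k=1)$ and $P(H^{g_z+}_k=1) \le P(H_k=1)$ for all $k$, the definitions \eqref{eq; alphak}--\eqref{eq; betak}, and the identification conditions \eqref{ex1B}--\eqref{eq: positivityH} — that $\alpha_k(z) \le 1$ and $\beta_k(z) \le 1$ for every $k \in \{1,\dots,K\}$; this is the content of the iterative comparison of g-formula terms ($f^{g_z}_{B_k}(1) \le P(B_k=1)$ and $f^{g_z}_{H_k}(1) \le P(H_k=1)$), combined with the identities $f^{g_z}_{B_k}(1)=P(B_k^{g_z}=1)$ and $f^{g_z}_{H_k}(1)=P(H_k^{g_z}=1)$ supplied by the identification conditions. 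So the only genuinely new ingredient I would need to add is the matching lower bound.

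First I would note that non-negativity is immediate from the ratio form of the definitions \eqref{eq; alphak} and \eqref{eq; betak}: $\alpha_k(z)=P(B_k^{g_z+}=1)/P(B_k^{g_z}=1)$ and $\beta_k(z)=P(H_k^{g_z+}=1)/P(H_k^{g_z}=1)$ are quotients of probabilities, and the positivity conditions \eqref{eq: positivityB}--\eqref{eq: positivityH} guarantee the relevant denominators are strictly positive (whenever the corresponding post-intervention event has positive probability), so $\alpha_k(z) \ge 0$ and $\beta_k(z) \ge 0$. Combining this with the previous paragraph yields $0 \le \alpha_k(z) \le 1$ and $0 \le \beta_k(z) \le 1$ for all $k$, which is the statement referenced in the main text.

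Then I would push these bounds through to the objects named in the corollary. Since $P(B_k^{g_z+}=1)=\alpha_k(z)\,P(B_k^{g_z}=1)$ is the product of a number in $[0,1]$ with a probability, it lies in $[0,1]$; likewise $P(H_k^{g_z+}=1)=\beta_k(z)\,P(H_k^{g_z}=1)\in[0,1]$. The same reasoning applied conditionally shows that the right-hand sides of \eqref{eq: int Bk} and \eqref{eq: int Hk}, being $\alpha_k(z)$ (resp. $\beta_k(z)$) times a conditional probability, take values in $[0,1]$ with probability one, so those expressions are legitimate conditional density functions under these hypotheses. I do not anticipate a real obstacle: the substantive recursion lives in Lemma \ref{lemma; appA}, and the rest is routine. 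The one point deserving a sentence is the degenerate case where a denominator $P(B_k^{g_z}=1)$ or $P(H_k^{g_z}=1)$ equals zero; there the corresponding scaling function is immaterial (any value produces the density placing all mass on ``no treatment''), the constraints force the associated post-intervention marginal to zero, and every claimed inequality holds vacuously, so that case is dispatched in one line.
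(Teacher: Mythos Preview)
Your proposal is correct and follows essentially the same approach as the paper. The paper's proof is even more terse: it simply says the result ``follows trivially from Theorem 1 [i.e., Lemma~\ref{lemma; appA}], since $P(S_k^{g_z+}=1)\le 1$ and $P(R_k^{g_z+}=1)\le 1$ and since all terms used to iteratively define $P(B^{g_z+}_k=1)$ and $P(H^{g_z+}_k=1)$ are positive,'' whereas you make the intermediate step through $0\le\alpha_k(z),\beta_k(z)\le 1$ explicit (which is in fact the content the main text actually cites this result for) and also verify the conditional intervention densities are valid --- so your write-up is, if anything, more complete than the paper's.
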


\begin{proof} 
This follows trivially from Theorem 1, since $P(S_k^{g_z+}=1) \le 1$ and $P(R_k^{g_z+}=1) \le 1$ and since all terms used to iteratively define $P(B^{g_z+}_k=1)$ and $P(H^{g_z+}_k=1)$ are positive.
\end{proof}. 

\clearpage

\section{Appendix B: Proving constraint satisfaction}

The constraint is satisfied for all intervals $t \in \{1,\dots, K\}$ if 
 \par\vspace{.5\topskip}\noindent\makebox[\textwidth]{\begin{minipage}{8in}
    \begin{equation*}
        \begin{split}
\mathbb{E}\Bigg[W^{g_z}_{B, t}W^{g_z}_{H, t-1}B_t\Bigg] = P(B^{g_z+}_t=1)
        \end{split}
    \end{equation*}
\end{minipage}}\vspace{\topskip}\par

and 

 \par\vspace{.5\topskip}\noindent\makebox[\textwidth]{\begin{minipage}{8in}
    \begin{equation*}
        \begin{split}
\mathbb{E}\Bigg[W^{g_z}_{H, t}W^{g_z}_{H, t}H_t\Bigg] = P(H^{g_z+}_t=1)
        \end{split}
    \end{equation*}
\end{minipage}}\vspace{\topskip}\par

for all $t \in \{1,\dots, K\}$

We demonstrate constraint satisfaction for joint intervention on $B_1$ and $H_1$, and leave the rest of the proof for the reader. 

\subsection{Suspected inferior treatment utlization in interval 1 under regime $g_z$, after intervention (i.e., $B_1^{g_z+}$)} \ \\

Substitute expression for weights:
{\tiny 
\begin{equation*}
        \begin{split}
& \mathbb{E}\Bigg[W^{g_z}_{B, 1}B_1\Bigg]  \\
        = & \mathbb{E}\begin{bmatrix*}[l]
              &  \Bigg(\frac{
                    \Big(\alpha_1(z, R_1)P(B_1=1 \mid R_1, L_1)\Big)^{B_1}
                    \Big(1-\alpha_1(z, R_1)P(B_1=1 \mid R_1, L_1)\Big)^{(1-B_1)}
        }{          \Big(P(B_1=1 \mid R_1, L_1)\Big)^{B_1}
                    \Big(1-P(B_1=1 \mid R_1, L_1)\Big)^{(1-B_1)}}\Bigg)^{\beth^{g_z}_{B,1}} \\
        \times 
              & \Bigg(\frac{
                    \Big(1-\alpha_1(z, R_1)(1-P(B_1=1 \mid R_1, L_1))\Big)^{B_1}
                    \Big(\alpha_1(z, R_1)(1-P(B_1=1 \mid R_1, L_1))\Big)^{(1-B_1)}
        }{          \Big(P(B_1=1 \mid R_1, L_1)\Big)^{B_1} 
                    \Big(1-P(B_1=1 \mid R_1, L_1)\Big)^{(1-B_1)}}\Bigg)^{(1-\beth^{g_z}_{B,1})} \\
        \times 
              & B_1
    \end{bmatrix*}
        \end{split}
    \end{equation*}
}%

Substitute expression for $\alpha_1(z, R_1)$:
{\tiny 
\begin{equation*}
        \begin{split}
        =  \mathbb{E}\begin{bmatrix*}[l]
        & \begin{pmatrix*}[l]
         &  \begin{pmatrix*}[l] 
                 \Bigg[ \Bigg(
                    \frac{
                         q^{g_z}_1 \times P(B_1=1)
                          }{
                        P(B_1^{g_z}=1)
                          }
                \Bigg)^{\beth^{g_z}_{B,1}}                           
                \Bigg(
                    \frac{
                        1- 
                        \frac{
                            q^{g_z}_1 \times P(B_1=1)
                              }{
                            P(R_1^{g_z+}=1)
                              }
                          }{
                        P(B^{g_z}_1= 0\mid R^{g_z+}_1=1)
                          }
                \Bigg)^{(1- \beth^{g_z}_{B,1})}      
                \Bigg(
                    \aleph^{g_z}_{B,1}
                \Bigg) \Bigg]^{R_1}
                P(B_1=1 \mid R_1,L_1)
           \end{pmatrix*}^{B_1}       \\
       \times 
          & \begin{pmatrix*}[l] 
             1- 
                \Bigg[ \Bigg(
                    \frac{
                         q^{g_z}_1 \times P(B_1=1)
                         }{
                        P(B_1^{g_z}=1)
                         }
                \Bigg)^{\beth^{g_z}_{B,1}}                           
                \Bigg(
                    \frac{
                        1- 
                        \frac{
                            q^{g_z}_1 \times P(B_1=1)
                              }{
                            P(R_1^{g_z+}=1)
                              }                         }{
                        P(B^{g_z}_1= 0\mid R^{g_z+}_1=1)
                         }
                \Bigg)^{(1- \beth^{g_z}_{B,1})  }     
                \Bigg(
                    \aleph^{g_z}_{B,1}
                \Bigg) \Bigg]^{R_1}
                P(B_1=1 \mid R_1,  L_1)
            \end{pmatrix*} ^{(1-B_1)}    \\
       \times & 
            \frac{
                   1
                 }{          
                P(B_1=1 \mid R_1,  L_1)^{B_1}
                P(B_1=0 \mid R_1,  L_1)^{(1-B_1)}}
        \end{pmatrix*}^{\beth^{g_z}_{B,1}} \\
 \times & \begin{pmatrix*}[l]
           & \begin{pmatrix*}[l]1- 
                \Bigg[\Bigg(
                    \frac{
                        q^{g_z}_1 \times P(B_1=1)
                          }{
                        P(B_1^{g_z}=1)
                          }
                 \Bigg)^{\beth^{g_z}_{B,1}}                         \times
                 \Bigg(
                    \frac{
                        1- 
                        \frac{
                            q^{g_z}_1 \times P(B_1=1)
                              }{
                            P(R_1^{g_z+}=1)
                              }                         }{
                        P(B^{g_z}_1= 0\mid R^{g_z+}_1=1)
                         }
                 \Bigg)^{(1- \beth^{g_z}_{B,1})  } \times  
                 \Bigg(
                    \aleph^{g_z}_{B,1}
                 \Bigg) \Bigg]^{R_1}
                 P(B_1=0 \mid R_1,  L_1)
            \end{pmatrix*}^{B_1} \\
    \times & \begin{pmatrix*}[l] 
                \Bigg[\Bigg(
                    \frac{
                         q^{g_z}_1 \times P(B_1=1)
                        }{
                        P(B_1^{g_z}=0)
                        }\Bigg)^{\beth^{g_z}_{B,1}}                 \times
                \Bigg(
                    \frac{
                        1- 
                        \frac{
                            q^{g_z}_1 \times P(B_1=1)
                              }{
                            P(R_1^{g_z+}=1)
                              }                         }{
                        P(B^{g_z}_1= 0\mid R^{g_z+}_1=1)
                         }
                \Bigg)^{(1- \beth^{g_z}_{B,1})  }   \times  
                \Bigg(
                    \aleph^{g_z}_{B,1}
                \Bigg) \Bigg]^{R_1}
                P(B_1=0 \mid R_1,  L_1)
            \end{pmatrix*}^{(1-B_1)} \\
    \times & \frac{
                    1
                 }{ 
                    P(B_1=1 \mid R_1,  L_1)^{B_1}
                    P(B_1=0 \mid R_1,  L_1)^{(1-B_1)}
                 }
    \end{pmatrix*}^{(1-\beth^{g_z}_{B,1})} \\
  \times  &   B_1 \end{bmatrix*} \\
        \end{split}
    \end{equation*}
}%

Evaluating total expectation, conditional on $R_1$ and $L_1$. Since $B_1$ is binary, terms exponentiated by $(1-B_1)$ drop out of the sum, and since $B_1 =1$ implies $R_1=1$, then terms exponentiated by $R_1$ are retained and the $R_1$ exponent is dropped. Furthermore, terms exponentiated by the indicator functions $(1-\beth^{g_z}_{B,1})$ that are themselves exponentiated by the indicator functions $\beth^{g_z}_{B,1}$, and vice versa, are removed, as are the inner indicator functions that themselves agree with the outer indicator functions. Finally, the alpha term is removed from terms themselves exponentiated by $\beth^{g_z}_{B,1}$ (since $\beth^{g_z}_{B,1}=1$ implies $\aleph^{g_z}_{B,1}=1$).

{\tiny 
\begin{equation*}
        \begin{split}
        =  \sum_{l_1} \begin{Bmatrix*}[l]
        & \begin{pmatrix*}[l]
                    \frac{
                    \Bigg(
                        \frac{
                             q^{g_z}_1 \times P(B_1=1)
                              }{
                            P(B_1^{g_z}=1)
                              }
                    \Bigg)\times  
                    P(B_1=1 \mid R_1=1,l_1) 
                 }{
                    P(B_1=1 \mid R_1=1,  l_1)
                 }
        \end{pmatrix*}^{\beth^{g_z}_{B,1}} \\
 \times & \begin{pmatrix*}[l]
 \frac{1-
                 \Bigg(
                    \frac{
                        1- 
                        \frac{
                            q^{g_z}_1 \times P(B_1=1)
                              }{
                            P(R_1^{g_z+}=1)
                              }                         
                         }{
                        P(B^{g_z}_1= 0\mid R^{g_z+}_1=1)
                         }
                 \Bigg) \times  
                 \Bigg(
                    \aleph^{g_z}_{B,1}
                 \Bigg) 
                 P(B_1=0 \mid R_1=1,  l_1)
                 }{ 
                    P(B_1=1 \mid R_1=1,  l_1)
                 }
    \end{pmatrix*}^{(1-\beth^{g_z}_{B,1})} \\
   \times  &    P(B_1=1 \mid R_1=1, l_1)P(R_1=1 \mid l_1) P(L_1=l_1) \end{Bmatrix*} \\
        \end{split}
    \end{equation*}
}%

Since by consistency $B_1^{g_z}=B_1$ and $R_1^{g_z+}=R_1=1$, then,
{\tiny 
\begin{equation*}
        \begin{split}
        =  \sum_{l_1} \begin{Bmatrix*}[l]
        & \begin{pmatrix*}[l]
            \frac{
                    \Bigg(
                        \frac{
                            q^{g_z}_1 \times P(B_1=1)
                              }{
                            P(B_1=1)
                              }
                    \Bigg)\times  
                    P(B_1=1 \mid l_1) 
                 }{
                    P(B_1=1 \mid l_1)
                 }
        \end{pmatrix*}^{\beth^{g_z}_{B,1}} \\
 \times & \begin{pmatrix*}[l]
            \frac{1-
                 \Bigg(
                    \frac{
                        1- q^{g_z}_1 \times P(B_1=1)  
                         }{
                        P(B_1= 0)
                         }
                 \Bigg) \times  
                 \Bigg(
                    \aleph^{g_z}_{B,1}
                 \Bigg) 
                 P(B_1=0 \mid   l_1)
                 }{ 
                    P(B_1=1 \mid  l_1)
                 }
    \end{pmatrix*}^{(1-\beth^{g_z}_{B,1})} \\
   \times  &    P(B_1=1 \mid  l_1) P(L_1=l_1) \end{Bmatrix*} \\
        \end{split}
    \end{equation*}
}%

Re-arranging and cancelling terms,

{\tiny 
\begin{equation*}
        \begin{split}
        =  \sum_{l_1} \begin{Bmatrix*}[l]
        & \begin{pmatrix*}[l]
            q^{g_z}_1 \times 
        \end{pmatrix*}^{\beth^{g_z}_{B,1}} \\
 \times & \begin{pmatrix*}[l]
            \frac{1-
                 \Bigg(
                    \frac{
                        1- q^{g_z}_1 \times P(B_1= 1)
                         }{
                        P(B_1= 0)
                         }
                 \Bigg) \times  
                 \Bigg(
                    \aleph^{g_z}_{B,1}
                 \Bigg) 
                 P(B_1=0 \mid l_1)
                 }{ 
                    P(B_1=1 \mid  l_1)
                 }
    \end{pmatrix*}^{(1-\beth^{g_z}_{B,1})} \\
   \times  &    P(B_1=1 \mid l_1) P(L_1=l_1) \end{Bmatrix*} \\
        \end{split}
    \end{equation*}
}%

Noting that when $\beth^{g_z}_{B,1}=0$ and $\aleph^{g_z}_{B,1}=1$: 

{\tiny 
\begin{equation*}
        \begin{split}
        \mathbb{E}\Bigg[W^{g_z}_{B, 1}B_1\Bigg] =  \sum_{l_1} & \begin{Bmatrix*}[l]
        & \begin{pmatrix*}[l]
            1-
                 \Bigg(
                    \frac{
                        1- q^{g_z}_1 \times P(B_1= 1)
                         }{
                        P(B_1= 0)
                         }
                 \Bigg) \times  
                 P(B_1=0 \mid  l_1)        \end{pmatrix*} \\
 \times &  P(L_1=l_1) 
 \end{Bmatrix*} \\
         =  \sum_{l_1} & \begin{Bmatrix*}[l]
        & 
             P(L_1=l_1)-
                 \Bigg(
                    \frac{
                        1- q^{g_z}_1 \times P(B_1= 1)
                         }{
                        P(B_1= 0)
                         }
                 \Bigg) \times  
                 P(B_1=0 \mid l_1) P(L_1=l_1)        \\
 \end{Bmatrix*} \\
         =  & 1 - (1- q^{g_z}_1 \times P(B_1= 1))      \\
         = & q^{g_z}_1 \times P(B_1= 1)
        \end{split}
    \end{equation*}
}%

Therefore, 

\begin{itemize}
    \item [1.]  If $\beth^{g_z}_{B,1}=1$  then $\mathbb{E}\Bigg[W^{g_z}_{B, 1}B_1\Bigg]=q^{g_z}_1 \times \mathbb{E}[B_1]$ and 
    \item [2.]  If $\beth^{g_z}_{B,1}=0$ and $\aleph^{g_z}_{B,1}=1$ then $\mathbb{E}\Bigg[W^{g_z}_{B, 1}B_1\Bigg] =q^{g_z}_1 \times \mathbb{E}[B_1]$
    \item [3.]  If $\beth^{g_z}_{B,1}=0$ and $\aleph^{g_z}_{B,1}=0$ then $\mathbb{E}\Bigg[W^{g_z}_{B, 1}B_1\Bigg]= \mathbb{E}[R_1]=\mathbb{E}[R_1^{g_z+}]$
\end{itemize}

And so the constraint is satisfied.

\subsection{Natural suspected inferior treatment utilization in interval 1 under regime $g_z$ (i.e., $H_1^{g_z}$)}

Substitute expression for weights and $\alpha_1(z, R_1)$:

{\tiny 
\begin{equation*}
        \begin{split}
        & \mathbb{E}\Bigg[W^{g_z}_{B, 1}H_1\Bigg]  \\
        = & \mathbb{E}\begin{bmatrix*}[l]
        & \begin{pmatrix*}[l]
         &  \begin{pmatrix*}[l] 
                 \Bigg[ \Bigg(
                    \frac{
                         q^{g_z}_1 \times P(B_1=1)
                          }{
                        P(B_1^{g_z}=1)
                          }
                \Bigg)^{\beth^{g_z}_{B,1}}                           
                \Bigg(
                    \frac{
                        1- 
                        \frac{
                            q^{g_z}_1 \times P(B_1=1)
                              }{
                            P(R_1^{g_z+}=1)
                              }
                          }{
                        P(B^{g_z}_1= 0\mid R^{g_z+}_1=1)
                          }
                \Bigg)^{(1- \beth^{g_z}_{B,1})  }      
                \Bigg(
                    \aleph^{g_z}_{B,1}
                \Bigg) \Bigg]^{R_1}
                P(B_1=1 \mid R_1,L_1)
           \end{pmatrix*}^{B_1}       \\
       \times 
          & \begin{pmatrix*}[l] 
             1- 
                \Bigg[ \Bigg(
                    \frac{
                         q^{g_z}_1 \times P(B_1=1)
                         }{
                        P(B_1^{g_z}=1)
                         }
                \Bigg)^{\beth^{g_z}_{B,1}}                           
                \Bigg(
                    \frac{
                        1- 
                        \frac{
                            q^{g_z}_1 \times P(B_1=1)
                              }{
                            P(R_1^{g_z+}=1)
                              }                         }{
                        P(B^{g_z}_1= 0\mid R^{g_z+}_1=1)
                         }
                \Bigg)^{(1- \beth^{g_z}_{B,1})  }     
                \Bigg(
                    \aleph^{g_z}_{B,1}
                \Bigg) \Bigg]^{R_1}
                P(B_1=1 \mid R_1,  L_1)
            \end{pmatrix*} ^{(1-B_1)}    \\
       \times & 
            \frac{
                   1
                 }{          
                P(B_1=1 \mid R_1,  L_1)^{B_1}
                P(B_1=0 \mid R_1,  L_1)^{(1-B_1)}}
        \end{pmatrix*}^{\beth^{g_z}_{B,1}} \\
 \times & \begin{pmatrix*}[l]
           & \begin{pmatrix*}[l]1- 
                \Bigg[\Bigg(
                    \frac{
                        P(B_1^{g_z}=1)
                          }{
                        P(B_1^{g_z}=1)
                          }
                 \Bigg)^{\beth^{g_z}_{B,1}}                         \times
                 \Bigg(
                    \frac{
                        1- 
                        \frac{
                            q^{g_z}_1 \times P(B_1=1)
                              }{
                            P(R_1^{g_z+}=1)
                              }                         }{
                        P(B^{g_z}_1= 0\mid R^{g_z+}_1=1)
                         }
                 \Bigg)^{(1- \beth^{g_z}_{B,1})  } \times  
                 \Bigg(
                    \aleph^{g_z}_{B,1}
                 \Bigg) \Bigg]^{R_1}
                 P(B_1=0 \mid R_1,  L_1)
            \end{pmatrix*}^{B_1} \\
    \times & \begin{pmatrix*}[l] 
                \Bigg[\Bigg(
                    \frac{
                         q^{g_z}_1 \times P(B_1=1)
                        }{
                        P(B_1^{g_z}=0)
                        }\Bigg)^{\beth^{g_z}_{B,1}}                 \times
                \Bigg(
                    \frac{
                        1- 
                        \frac{
                            q^{g_z}_1 \times P(B_1=1)
                              }{
                            P(R_1^{g_z+}=1)
                              }                         }{
                        P(B^{g_z}_1= 0\mid R^{g_z+}_1=1)
                         }
                \Bigg)^{(1- \beth^{g_z}_{B,1})  }   \times  
                \Bigg(
                    \aleph^{g_z}_{B,1}
                \Bigg) \Bigg]^{R_1}
                P(B_1=0 \mid R_1,  L_1)
            \end{pmatrix*}^{(1-B_1)} \\
    \times & \frac{
                    1
                 }{ 
                    P(B_1=1 \mid R_1,  L_1)^{B_1}
                    P(B_1=0 \mid R_1,  L_1)^{(1-B_1)}
                 }
    \end{pmatrix*}^{(1-\beth^{g_z}_{B,1})} \\
  \times  &   H_1 \end{bmatrix*} \\
        \end{split}
    \end{equation*}
}%

Evaluating total expectation, conditional on  $B_1$, $R_1$ and $L_1$. Since  $H_1$ is binary, and since $H_1=1$ implies $B_1 =0$ and $R_1=1$, then terms exponentiated by $B_1$ drop out of the sum, and terms exponentiated by $R_1$ are retained and the $R_1$ exponent is dropped. As in previous steps, terms exponentiated by indicator functions are removed or retained, as appropriate. 

{\tiny 
\begin{equation*}
        \begin{split}
        =  \sum_{l_1} \begin{Bmatrix*}[l]
        & \begin{pmatrix*}[l]
                    \frac{1-
                    \Bigg(
                        \frac{
                             q^{g_z}_1 \times P(B_1=1)
                              }{
                            P(B_1^{g_z}=1)
                              }
                    \Bigg) 
                    P(B_1=1 \mid R_1=1,l_1) 
                 }{
                    P(B_1=0 \mid R_1=1,  l_1)
                 }
        \end{pmatrix*}^{\beth^{g_z}_{B,1}} \\
 \times & \begin{pmatrix*}[l]
 \frac{
                 \Bigg(
                    \frac{
                        1- 
                        \frac{
                            q^{g_z}_1 \times P(B_1=1)
                              }{
                            P(R_1^{g_z+}=1)
                              }                         
                         }{
                        P(B^{g_z}_1= 0\mid R^{g_z+}_1=1)
                         }
                 \Bigg) \times  
                 \Bigg(
                    \aleph^{g_z}_{B,1}
                 \Bigg) 
                 P(B_1=0 \mid R_1=1,  l_1)
                 }{ 
                    P(B_1=0 \mid R_1=1,  l_1)
                 }
    \end{pmatrix*}^{(1-\beth^{g_z}_{B,1})} \\
   \times  & P(H_1=1 \mid B_1=0, R_1=1, l_1) P(B_1=0 \mid R_1=1, l_1)P(R_1=1 \mid l_1) P(L_1=l_1) \end{Bmatrix*} \\
        \end{split}
    \end{equation*}
}%

Since by consistency $B_1^{g_z}=B_1$ and $R_1^{g_z+}=R_1=1$, and re-arranging and cancelling terms:

{\tiny 
\begin{equation*}
        \begin{split}
        =  \sum_{l_1} & \begin{Bmatrix*}[l]
        & \begin{pmatrix*}[l]
            1-q^{g_z}_1 \times  P(B_1=1 \mid  l_1)
        \end{pmatrix*}^{\beth^{g_z}_{B,1}} \\
 \times & \begin{pmatrix*}[l]
         \frac{
            1- q^{g_z}_1 \times P(B_1=1 )
             }{
            P(B_1= 0)
             }
          \Bigg(
            \aleph^{g_z}_{B,1}
         \Bigg) 
         P(B_1=0 \mid   l_1)
    \end{pmatrix*}^{(1-\beth^{g_z}_{B,1})} \\
   \times  & P(H_1=1 \mid B_1=0,  l_1) P(L_1=l_1) \end{Bmatrix*} \\
        \end{split}
    \end{equation*}
}%

Noting that when $\beth^{g_z}_{B,1}=0$ and $\aleph^{g_z}_{B,1}=1$, then 

{\tiny 
\begin{equation*}
        \begin{split}
        \mathbb{E}\Bigg[W^{g_z}_{B, 1}H_1\Bigg] = & \sum_{l_1} \begin{Bmatrix*}[l]
        & \begin{pmatrix*}[l]
         \frac{
            1- q^{g_z}_1 \times P(B_1=1 )
             }{
            P(B_1= 0)
             }
         P(B_1=0 \mid  l_1)
    \end{pmatrix*} \\
   \times  & P(H_1=1 \mid B_1=0,  l_1)  P(L_1=l_1) \end{Bmatrix*} \\
        =  & \frac{
            1- q^{g_z}_1 \times P(B_1=1 )
             }{
            P(B_1= 0)
             } P(H_1=1) \\
        =  & \frac{
            1- q^{g_z}_1 \times P(B_1=1)
             }{
            P(B_1= 0)
             } P(H_1=1 \mid B=0 ) P(B_1= 0) \\
        =  & P(H_1=1 \mid B_1=0) (1- q^{g_z}_1 \times P(B_1=1))  \\
        =  & (1- q^{g_z}_1 \times )P(H_1=1 \mid B_1=0) + q^{g_z}_1 \times P(H_1=1)\\
        \end{split}
    \end{equation*}
}%

Noting that when $\beth^{g_z}_{B,1}=1$ and $\aleph^{g_z}_{B,1}=1$, then 
{\tiny\begin{equation*} 
   \begin{split}
    \mathbb{E}\Bigg[W^{g_z}_{B, 1}H_1\Bigg] = & \sum_{l_1}\Bigg\{  P(H_1=1 \mid B_1=0, L_1=l_1)\Bigg(1- q^{g_z}_1 \times P(B_1=1 \mid  l_1)\Bigg) P(L_1=l_1)\Bigg\}\\
    = & \begin{Bmatrix*}[l]
       & (1-q^{g_z}_1 \times )\sum_{l_1}\Bigg\{P(H_1=1 \mid B_1=0, L_1=l_1)P(L_1=l_1)\Bigg\} \\
       & + q^{g_z}_1 \times P(H_1=1) 
    \end{Bmatrix*}
 \end{split}
\end{equation*}}%

Therefore, 

\begin{itemize}
    \item [4.]  If $\beth^{g_z}_{B,1}=1$ and $\aleph^{g_z}_{B,1}=1$ then:
{\tiny\begin{equation*} \mathbb{E}\Bigg[W^{g_z}_{B, 1}H_1\Bigg]=\sum_{l_1}\begin{Bmatrix*}[l] & P(H_1=1 \mid B_1=0, L_1=l_1) \Bigg(1- q^{g_z}_1 \times P(B_1=1 \mid  l_1)\Bigg) P(L_1=l_1)\end{Bmatrix*}\end{equation*}}%
    \item [5.]  If $\beth^{g_z}_{B,1}=0$ and $\aleph^{g_z}_{B,1}=1$ then:
{\tiny\begin{equation*} \mathbb{E}\Bigg[W^{g_z}_{B, 1}H_1\Bigg]=P(H_1=1 \mid B_1=0) \big(1- q^{g_z}_1 \times P(B_1=1)\big)\end{equation*}}%
\item [6.]  If $\beth^{g_z}_{B,1}=0$ and $\aleph^{g_z}_{B,1}=0$ then:
{\tiny\begin{equation*} \mathbb{E}\Bigg[W^{g_z}_{B, 1}H_1\Bigg]=0\end{equation*}}%
\end{itemize}

\subsection{Suspected inferior treatment utilization in interval 1 under regime $g_z$, after intervention (i.e., $H_1^{g_z+}$)}

Substitute expression for weights, $\alpha_1(z, R_1)$ and $\beta_1(z, S_1)$. 

{\tiny 
\begin{equation*}
        \begin{split}
        &  \mathbb{E}\Bigg[W^{g_z}_{B, 1}W^{g_z}_{H, 1}H_1\Bigg]  \\
        = & \mathbb{E}\begin{bmatrix*}[l]
        & \begin{pmatrix*}[l]
         &  \begin{pmatrix*}[l] 
                 \Bigg[ \Bigg(
                    \frac{
                         q^{g_z}_1 \times P(B_1=1)
                          }{
                        P(B_1^{g_z}=1)
                          }
                \Bigg)^{\beth^{g_z}_{B,1}}                           
                \Bigg(
                    \frac{
                        1- 
                        \frac{
                            q^{g_z}_1 \times P(B_1=1)
                              }{
                            P(R_1^{g_z+}=1)
                              }
                          }{
                        P(B^{g_z}_1= 0\mid R^{g_z+}_1=1)
                          }
                \Bigg)^{(1- \beth^{g_z}_{B,1}) }      
                \Bigg(
                    \aleph^{g_z}_{B,1}
                \Bigg) \Bigg]^{R_1}
                P(B_1=1 \mid R_1,L_1)
           \end{pmatrix*}^{B_1}       \\
       \times 
          & \begin{pmatrix*}[l] 
             1- 
                \Bigg[ \Bigg(
                    \frac{
                         q^{g_z}_1 \times P(B_1=1)
                         }{
                        P(B_1^{g_z}=1)
                         }
                \Bigg)^{\beth^{g_z}_{B,1}}                           
                \Bigg(
                    \frac{
                        1- 
                        \frac{
                            q^{g_z}_1 \times P(B_1=1)
                              }{
                            P(R_1^{g_z+}=1)
                              }                         }{
                        P(B^{g_z}_1= 0\mid R^{g_z+}_1=1)
                         }
                \Bigg)^{(1- \beth^{g_z}_{B,1})   }     
                \Bigg(
                    \aleph^{g_z}_{B,1}
                \Bigg) \Bigg]^{R_1}
                P(B_1=1 \mid R_1,  L_1)
            \end{pmatrix*} ^{(1-B_1)}    \\
       \times & 
            \frac{
                   1
                 }{          
                P(B_1=1 \mid R_1,  L_1)^{B_1}
                P(B_1=0 \mid R_1,  L_1)^{(1-B_1)}}
        \end{pmatrix*}^{\beth^{g_z}_{B,1}} \\
 \times & \begin{pmatrix*}[l]
           & \begin{pmatrix*}[l]1- 
                \Bigg[\Bigg(
                    \frac{
                        P(B_1^{g_z}=1)
                          }{
                        P(B_1^{g_z}=1)
                          }
                 \Bigg)^{\beth^{g_z}_{B,1}}                         \times
                 \Bigg(
                    \frac{
                        1- 
                        \frac{
                            q^{g_z}_1 \times P(B_1=1)
                              }{
                            P(R_1^{g_z+}=1)
                              }                         }{
                        P(B^{g_z}_1= 0\mid R^{g_z+}_1=1)
                         }
                 \Bigg)^{(1- \beth^{g_z}_{B,1}) } \times  
                 \Bigg(
                    \aleph^{g_z}_{B,1}
                 \Bigg) \Bigg]^{R_1}
                 P(B_1=0 \mid R_1,  L_1)
            \end{pmatrix*}^{B_1} \\
    \times & \begin{pmatrix*}[l] 
                \Bigg[\Bigg(
                    \frac{
                         q^{g_z}_1 \times P(B_1=1)
                        }{
                        P(B_1^{g_z}=0)
                        }\Bigg)^{\beth^{g_z}_{B,1}}                 \times
                \Bigg(
                    \frac{
                        1- 
                        \frac{
                            q^{g_z}_1 \times P(B_1=1)
                              }{
                            P(R_1^{g_z+}=1)
                              }                         }{
                        P(B^{g_z}_1= 0\mid R^{g_z+}_1=1)
                         }
                \Bigg)^{(1- \beth^{g_z}_{B,1}) }   \times  
                \Bigg(
                    \aleph^{g_z}_{B,1}
                \Bigg) \Bigg]^{R_1}
                P(B_1=0 \mid R_1,  L_1)
            \end{pmatrix*}^{(1-B_1)} \\
    \times & \frac{
                    1
                 }{ 
                    P(B_1=1 \mid R_1,  L_1)^{B_1}
                    P(B_1=0 \mid R_1,  L_1)^{(1-B_1)}
                 }
    \end{pmatrix*}^{(1-\beth^{g_z}_{B,1})} \\
  \times  & \begin{pmatrix*}[l]
         &  \begin{pmatrix*}[l] 
                 \Bigg[ \Bigg(
                    \frac{
                         m^{g_z}_1 \times P(H_1=1)
                          }{
                        P(H_1^{g_z}=1)
                          }
                \Bigg)^{\beth^{g_z}_{H,1}}                           
                \Bigg(
                    \frac{
                        1- 
                        \frac{
                            m^{g_z}_1 \times P(H_1=1)
                              }{
                            P(R_1^{g_z+}=1)
                              }
                          }{
                        P(H^{g_z}_1= 0\mid S^{g_z+}_1=1)
                          }
                \Bigg)^{(1- \beth^{g_z}_{H,1})  }      
                \Bigg(
                    \aleph^{g_z}_{H,1}
                \Bigg) \Bigg]^{S_1}
                P(H_1=1 \mid S_1,L_1)
           \end{pmatrix*}^{H_1}       \\
       \times 
          & \begin{pmatrix*}[l] 
             1- 
                \Bigg[ \Bigg(
                    \frac{
                         m^{g_z}_1 \times P(H_1=1)
                         }{
                        P(H_1^{g_z}=1)
                         }
                \Bigg)^{\beth^{g_z}_{H,1}}                           
                \Bigg(
                    \frac{
                        1- 
                        \frac{
                            m^{g_z}_1 \times P(H_1=1)
                              }{
                            P(R_1^{g_z+}=1)
                              }                         }{
                        P(H^{g_z}_1= 0\mid R^{g_z+}_1=1)
                         }
                \Bigg)^{(1- \beth^{g_z}_{H,1})   }     
                \Bigg(
                    \aleph^{g_z}_{H,1}
                \Bigg) \Bigg]^{S_1}
                P(H_1=1 \mid S_1,  L_1)
            \end{pmatrix*} ^{(1-H_1)}    \\
       \times & 
            \frac{
                   1
                 }{          
                P(H_1=1 \mid S_1,  L_1)^{H_1}
                P(H_1=0 \mid S_1,  L_1)^{(1-H_1)}}
        \end{pmatrix*}^{\beth^{g_z}_{H,1}} \\
 \times & \begin{pmatrix*}[l]
           & \begin{pmatrix*}[l]1- 
                \Bigg[\Bigg(
                    \frac{
                        P(H_1^{g_z}=1)
                          }{
                        P(H_1^{g_z}=1)
                          }
                 \Bigg)^{\beth^{g_z}_{H,1}}                         \times
                 \Bigg(
                    \frac{
                        1- 
                        \frac{
                            m^{g_z}_1 \times P(H_1=1)
                              }{
                            P(R_1^{g_z+}=1)
                              }                         }{
                        P(H^{g_z}_1= 0\mid S^{g_z+}_1=1)
                         }
                 \Bigg)^{(1- \beth^{g_z}_{H,1})  } \times  
                 \Bigg(
                    \aleph^{g_z}_{H,1}
                 \Bigg) \Bigg]^{S_1}
                 P(H_1=0 \mid S_1,  L_1)
            \end{pmatrix*}^{H_1} \\
    \times & \begin{pmatrix*}[l] 
                \Bigg[\Bigg(
                    \frac{
                         m^{g_z}_1 \times P(H_1=1)
                        }{
                        P(H_1^{g_z}=0)
                        }\Bigg)^{\beth^{g_z}_{H,1}}                 \times
                \Bigg(
                    \frac{
                        1- 
                        \frac{
                            m^{g_z}_1 \times P(H_1=1)
                              }{
                            P(R_1^{g_z+}=1)
                              }                         }{
                        P(H^{g_z}_1= 0\mid S^{g_z+}_1=1)
                         }
                \Bigg)^{(1- \beth^{g_z}_{H,1}) }   \times  
                \Bigg(
                    \aleph^{g_z}_{H,1}
                \Bigg) \Bigg]^{S_1}
                P(H_1=0 \mid S_1,  L_1)
            \end{pmatrix*}^{(1-H_1)} \\
    \times & \frac{
                    1
                 }{ 
                    P(H_1=1 \mid S_1,  L_1)^{H_1}
                    P(H_1=0 \mid S_1,  L_1)^{(1-H_1)}
                 }
    \end{pmatrix*}^{(1-\beth^{g_z}_{H,1})} \\
  \times  &   H_1 \end{bmatrix*} \\
        \end{split}
    \end{equation*}
}%

Evaluating total expectation, conditional on  $B_1$, $R_1$ and $L_1$. Since $H_1$ is binary, and since $H_1=1$ implies $S_1 =1$, $B_1=0$ and $R_1=1$, then terms exponentiated by $(1-S_1)$ and by $B_1$ drop out of the sum, and terms exponentiated by $R_1$ and $S_1$ are retained and the $R_1$ and $S_1$ exponents are dropped. As in previous steps, terms exponentiated by indicator functions are removed or retained, as appropriate. Noting also that $B_1^{g_z}=B_1$ and $R_1^{g_z+}=R_1=1$, then,

{\tiny 
\begin{equation*}
        \begin{split}
        = & \sum_{l_1}\begin{Bmatrix*}[l]
        & \begin{pmatrix*}[l]
            \frac{
                   1-  \Bigg(
                    \frac{
                         q^{g_z}_1 \times P(B_1=1)
                         }{
                        P(B_1=1)
                         }
                \Bigg)
                P(B_1=1 \mid  l_1)
                 }{
                 P(B_1=0 \mid l_1)
                 }
        \end{pmatrix*}^{\beth^{g_z}_{B,1}} \\
 \times & \begin{pmatrix*}[l]
             \frac{
                \Bigg(
                    \frac{
                        1- q^{g_z}_1 \times P(B_1=1)                        }{
                        P(B_1=1)
                         }
                \Bigg)   \times  
                \Bigg(
                    \aleph^{g_z}_{B,1}
                \Bigg)
                P(B_1=0 \mid l_1)
                 }{ 
                    P(B_1=0 \mid l_1)
                 }
    \end{pmatrix*}^{(1-\beth^{g_z}_{B,1})} \\
  \times  & \begin{pmatrix*}[l]
            \frac{
                \Bigg(
                    \frac{
                         m^{g_z}_1 \times P(H_1=1)
                          }{
                        P(H_1^{g_z}=1)
                          }
                \Bigg)     
                P(H_1=1 \mid B_1=0,l_1)
                 }{          
                P(H_1=1 \mid B_1=0,  l_1)
                }
        \end{pmatrix*}^{\beth^{g_z}_{H,1}} \\
 \times & \begin{pmatrix*}[l]
         & \frac{1-
                 \Bigg(
                    \frac{
                        1- 
                        \frac{
                            m^{g_z}_1 \times P(H_1=1)
                              }{
                            P(B_1^{g_z}=1)
                              }                         }{
                        P(H^{g_z}_1= 0\mid B^{g_z}_1=0)
                         }
                 \Bigg) \times  
                 \Bigg(
                    \aleph^{g_z}_{H,1}
                 \Bigg) 
                 P(H_1=0 \mid B_1=0,  l_1)
                 }{ 
                    P(H_1=1 \mid B_1=0,  l_1)
                 }
    \end{pmatrix*}^{(1-\beth^{g_z}_{H,1})} \\
  \times  &  P(H_1=1 \mid B_1=0,  l_1) P(B_1=0 \mid  l_1) P(L_1=l_1) \end{Bmatrix*} \\
        \end{split}
    \end{equation*}
}%

Noting that $P(H_1=1 \mid B_1=0) = \frac{P(H_1=1)}{P(B_1=0)}$, and re-arranging terms and cancelling some terms as in previous steps.

{\tiny 
\begin{equation*}
        \begin{split}
        = & \sum_{l_1}\begin{Bmatrix*}[l]
        & \begin{pmatrix*}[l]
            \frac{
                   1-  q^{g_z}_1 \times P(B_1=1 \mid   l_1)
                 }{
                 P(B_1=0 \mid  l_1)
                 }
        \end{pmatrix*}^{\beth^{g_z}_{B,1}} \\
 \times & \begin{pmatrix*}[l]
                \Bigg(
                    \frac{
                        1- q^{g_z}_1 \times P(B_1=1))
                         }{
                        P(B_1= 0)
                         }
                \Bigg)   \times  
                \Bigg(
                    \aleph^{g_z}_{B,1}
                \Bigg)
    \end{pmatrix*}^{(1-\beth^{g_z}_{B,1})} \\
  \times  & \begin{pmatrix*}[l]
            \frac{
                \Bigg(
                    \frac{
                         m^{g_z}_1 \times P(H_1=1)
                          }{
                        P(H_1^{g_z}=1)
                          }
                \Bigg)     
                P(H_1=1 \mid B_1=0,l_1)
                 }{          
                P(H_1=1 \mid B_1=0,  l_1)
                }
        \end{pmatrix*}^{\beth^{g_z}_{H,1}} \\
 \times & \begin{pmatrix*}[l]
         & \frac{1-
                 \Bigg(
                    \frac{
                        1- 
                        \frac{
                            m^{g_z}_1 \times P(H_1=1)
                              }{
                            P(B_1^{g_z}=0)
                              }                         }{
                        1- 
                        \frac{
                            P(H_1^{g_z}=1)
                              }{
                            P(B_1^{g_z}=0)
                              }
                         }
                 \Bigg) \times  
                 \Bigg(
                    \aleph^{g_z}_{H,1}
                 \Bigg) 
                 P(H_1=0 \mid B_1=0,  l_1)
                 }{ 
                    P(H_1=1 \mid B_1=0,  l_1)
                 }
    \end{pmatrix*}^{(1-\beth^{g_z}_{H,1})} \\
  \times  &  P(H_1=1 \mid B_1=0, l_1) P(B_1=0 \mid  l_1) P(L_1=l_1) \end{Bmatrix*} \\
        \end{split}
    \end{equation*}
}%

Noting that when $\beth^{g_z}_{B,1}=1$ and $\beth^{g_z}_{H,1}=1$, then

{\tiny 
\begin{equation*}
        \begin{split}
      \mathbb{E}\Bigg[W^{g_z}_{B, 1}W^{g_z}_{H, 1}H_1\Bigg]  = & \sum_{l_1}\begin{Bmatrix*}[l]
        & \begin{pmatrix*}[l]
            \frac{
                   1-  q^{g_z}_1 \times P(B_1=1 \mid  l_1)
                 }{
                 P(B_1=0 \mid   l_1)
                 }
        \end{pmatrix*} \\
\times  & \begin{pmatrix*}[l]
                \frac{
                     m^{g_z}_1 \times P(H_1=1)
                      }{
                    \sum_{l_1}\begin{Bmatrix*}[l] & P(H_1=1 \mid  B_1=0, L_1=l_1)\Bigg(1- q^{g_z}_1 \times P(B_1=1 \mid  l_1)\Bigg) P(L_1=l_1)\end{Bmatrix*}
                      }
        \end{pmatrix*} \\
  \times  &  P(H_1=1 \mid  B_1=0, l_1) P(B_1=0 \mid l_1) P(L_1=l_1) \end{Bmatrix*} \\
        = & \sum_{l_1}\begin{Bmatrix*}[l]
        & \begin{pmatrix*}[l]
                \frac{
                     m^{g_z}_1 \times P(H_1=1)
                      }{
                    \sum_{l_1}\begin{Bmatrix*}[l] & P(H_1=1 \mid  B_1=0, L_1=l_1) \Bigg(1- q^{g_z}_1 \times P(B_1=1 \mid  l_1)\Bigg) P(L_1=l_1)\end{Bmatrix*}
                      }
        \end{pmatrix*} \\
  \times  &  P(H_1=1 \mid B_1=0, l_1) \Bigg(1- q^{g_z}_1 \times P(B_1=1 \mid  l_1)\Bigg)P(L_1=l_1)
  \end{Bmatrix*} \\
    =& m^{g_z}_1 \times P(H_1=1)
        \end{split}
    \end{equation*}
}%

Noting that when $\beth^{g_z}_{B,1}=1$ and $\beth^{g_z}_{H,1}=0$ and $\aleph^{g_z}_{H,1}=1$, then

{\tiny 
\begin{equation*}
        \begin{split}
              &      \frac{
                        1- 
                        \frac{
                            m^{g_z}_1 \times P(H_1=1)
                              }{
                            P(B_1^{g_z}=0)
                              }                         }{
                        1- 
                        \frac{
                            P(H_1^{g_z}=1)
                              }{
                            P(B_1^{g_z}=0)
                              }
                         } \\
= &  \frac{
                        1- 
                        \frac{
                            m^{g_z}_1 \times P(H_1=1)
                              }{
                            1-q^{g_z}_1 \times P(B_1=1)
                              }                         }{
                        1- 
                        \frac{
                            \sum_{l_1}\begin{Bmatrix*}[l] & P(H_1=1 \mid B_1=0, L_1=l_1) \Bigg(1- q^{g_z}_1 \times P(B_1=1 \mid l_1)\Bigg) P(L_1=l_1)\end{Bmatrix*}
                              }{
                            1-q^{g_z}_1 \times P(B_1=1)
                              }
                         }\\
= &  \frac{
                        1-q^{g_z}_1 \times P(B_1=1) -
                            m^{g_z}_1 \times P(H_1=1)
                        }{
                        1-q^{g_z}_1 \times P(B_1=1) -
                            (1-q)\sum_{l_1}\{ P(H_1=1 \mid B_1=0, L_1=l_1)P(L_1=l_1) \} - q^{g_z}_1 \times P(H_1=1)
                         }
        \end{split}
    \end{equation*}
}%

And thus:

{\tiny 
\begin{equation*}
        \begin{split}
   \mathbb{E}\Bigg[W^{g_z}_{B, 1}W^{g_z}_{H, 1}H_1\Bigg] = & \sum_{l_1}\begin{Bmatrix*}[l]
& \begin{pmatrix*}[l]
         & \frac{1-
                 \Bigg(
                    \frac{
                        1- 
                        \frac{
                            m^{g_z}_1 \times P(H_1=1)
                              }{
                            P(B_1^{g_z}=1)
                              }                         }{
                        1- 
                        \frac{
                            P(H_1^{g_z}=1)
                              }{
                            P(B_1^{g_z}=1)
                              }
                         }
                 \Bigg) \times  
                 P(H_1=0 \mid B_1=0,  l_1)
                 }{ 
                    P(H_1=1 \mid B_1=0,  l_1)
                 }
    \end{pmatrix*} \\
    \times  &  P(H_1=1 \mid  B_1=0, l_1) \Bigg(1- q^{g_z}_1 \times P(B_1=1 \mid  l_1)\Bigg) P(L_1=l_1)
  \end{Bmatrix*} \\
        \end{split}
    \end{equation*}
}%

{\tiny 
\begin{equation*}
        \begin{split}
    = & \sum_{l_1}\begin{Bmatrix*}[l]
& \begin{pmatrix*}[l]
         & 1-
                 \Bigg(
                    \frac{
                        1- 
                        \frac{
                            m^{g_z}_1 \times P(H_1=1)
                              }{
                            P(B_1^{g_z}=1)
                              }                         }{
                        1- 
                        \frac{
                            P(H_1^{g_z}=1)
                              }{
                            P(B_1^{g_z}=1)
                              }
                         }
                 \Bigg) \times  
                 P(H_1=0 \mid B_1=0,  l_1)
    \end{pmatrix*} \\
    \times  &  \Bigg(1- q^{g_z}_1 \times P(B_1=1 \mid  l_1)\Bigg) P(L_1=l_1)
  \end{Bmatrix*} \\
        \end{split}
    \end{equation*}
}%

{\tiny 
\begin{equation*}
        \begin{split}
    = & \sum_{l_1}\begin{Bmatrix*}[l]
    \times  &  \Bigg(1- q^{g_z}_1 \times P(B_1=1 \mid  l_1)\Bigg) P(L_1=l_1)
  \end{Bmatrix*} \\
 - & \sum_{l_1}\begin{Bmatrix*}[l]
        & \begin{pmatrix*}[l]
                    \frac{
                        1- 
                        \frac{
                            m^{g_z}_1 \times P(H_1=1)
                              }{
                            P(B_1^{g_z}=1)
                              }                         }{
                        1- 
                        \frac{
                            P(H_1^{g_z}=1)
                              }{
                            P(B_1^{g_z}=1)
                              }
                         }
            \end{pmatrix*} \\
    \times  &  \Bigg(1- P(H_1=1 \mid B_1=0,  l_1)\Bigg)\Bigg((1- q^{g_z}_1 \times ) + q^{g_z}_1 \times P(B_1=0 \mid  l_1)\Bigg) P(L_1=l_1)
  \end{Bmatrix*} \\
        \end{split}
    \end{equation*}
}%

{\tiny 
\begin{equation*}
        \begin{split}
    = & \Bigg(1 - q^{g_z}_1 \times P(B_1=1) \Bigg) \\
 - & \sum_{l_1}\begin{Bmatrix*}[l]
        & \begin{pmatrix*}[l]
            \frac{
                        1-q^{g_z}_1 \times P(B_1=1) -
                            m^{g_z}_1 \times P(H_1=1)
                        }{
                        1-q^{g_z}_1 \times P(B_1=1) -
                            (1-q)\sum_{l_1}\{ P(H_1=1 \mid B_1=0, L_1=l_1)P(L_1=l_1) \} - q^{g_z}_1 \times P(H_1=1)
                         }
            \end{pmatrix*} \\
    \times    &\begin{pmatrix*}[l] 
    & (1- q^{g_z}_1 \times P(B_1=1)) \\
-   & (1- q^{g_z}_1 \times )\sum_{l_1}P(H_1=1 \mid  B_1=0,  L_1=l_1)P(L_1=l_1) \\
-   & q^{g_z}_1 \times P(H_1=1)
        \end{pmatrix*}
  \end{Bmatrix*} \\
        \end{split}
    \end{equation*}
}%

{\tiny 
\begin{equation*}
        \begin{split}
    = & \Bigg(1 - q^{g_z}_1 \times P(B_1=1) \Bigg) \\
 - & \begin{pmatrix*}[l]
        1-q^{g_z}_1 \times P(B_1=1) - m^{g_z}_1 \times P(H_1=1)
  \end{pmatrix*} \\
  =& m^{g_z}_1 \times P(H_1=1)
        \end{split}
    \end{equation*}
}%

Noting that when $\beth^{g_z}_{B,1}=0$ and $\beth^{g_z}_{H,1}=0$ and $\aleph^{g_z}_{H,1}=1$, then

{\tiny 
\begin{equation*}
        \begin{split}
              &      \frac{
                        1- 
                        \frac{
                            m^{g_z}_1 \times P(H_1=1)
                              }{
                            P(B_1^{g_z}=0)
                              }                         }{
                        1- 
                        \frac{
                            P(H_1^{g_z}=1)
                              }{
                            P(B_1^{g_z}=0)
                              }
                         } \\
= &  \frac{
                        1- 
                        \frac{
                            m^{g_z}_1 \times P(H_1=1)
                              }{
                            1-q^{g_z}_1 \times P(B_1=1)
                              }                         }{
                        1- 
                        \frac{ P(H_1=1 \mid B_1=0) (1- q^{g_z}_1 \times P(B_1=1))
                              }{
                            1-q^{g_z}_1 \times P(B_1=1)
                              }
                         }\\
= &  \frac{
                        1-q^{g_z}_1 \times P(B_1=1) -
                            m^{g_z}_1 \times P(H_1=1)
                        }{
                        1-q^{g_z}_1 \times P(B_1=1) - (1-q^{g_z}_1 \times )P(H_1=1 \mid B_1=0) -q^{g_z}_1 \times P(H_1=1)
                         }
        \end{split}
    \end{equation*}
}%

{\tiny 
\begin{equation*}
        \begin{split}
= &    \frac{
                           1-q^{g_z}_1 \times P(B_1=1)- m^{g_z}_1 \times P(H_1=1)
                              }{
                           P(H_1=0 \mid B_1=0) (1-q^{g_z}_1 \times P(B_1=1))
                         }\\
        \end{split}
    \end{equation*}
}%
Thus:

{\tiny 
\begin{equation*}
        \begin{split}
      \mathbb{E}\Bigg[W^{g_z}_{B, 1}W^{g_z}_{H, 1}H_1\Bigg]  = & \sum_{l_1}\begin{Bmatrix*}[l]
        & \begin{pmatrix*}[l]
                \Bigg(
                    \frac{
                        1- q^{g_z}_1 \times P(B_1=1)
                         }{
                        P(B_1= 0)
                         }
                \Bigg)   \times  
    \end{pmatrix*} \\
 \times & \begin{pmatrix*}[l]
         & \frac{1-
                 \Bigg(
                    \frac{
                        1- 
                        \frac{
                            m^{g_z}_1 \times P(H_1=1)
                              }{
                            P(B_1^{g_z}=0)
                              }                         }{
                        1- 
                        \frac{
                            P(H_1^{g_z}=1)
                              }{
                            P(B_1^{g_z}=0)
                              }
                         }
                 \Bigg) \times  
                 P(H_1=0 \mid B_1=0,  l_1)
                 }{ 
                    P(H_1=1 \mid B_1=0,  l_1)
                 }
    \end{pmatrix*} \\
  \times  &  P(H_1=1 \mid B_1=0, l_1) P(B_1=0 \mid l_1)P(L_1=l_1) \end{Bmatrix*} \\
        \end{split}
    \end{equation*}
}%

{\tiny 
\begin{equation*}
        \begin{split}
        =         & \begin{pmatrix*}[l]
                    \frac{
                        1- q^{g_z}_1 \times P(B_1=1)
                         }{
                        P(B_1= 0)
                         }
    \end{pmatrix*} \\ 
    \times& \sum_{l_1}\begin{Bmatrix*}[l]
    & \begin{pmatrix*}[l]
         & 1-
                 \Bigg(
                    \frac{
                        1- 
                        \frac{
                            m^{g_z}_1 \times P(H_1=1)
                              }{
                            P(B_1^{g_z}=0)
                              }                         }{
                        1- 
                        \frac{
                            P(H_1^{g_z}=1)
                              }{
                            P(B_1^{g_z}=0)
                              }
                         }
                 \Bigg) \times  
                 P(H_1=0 \mid B_1=0,  l_1)
    \end{pmatrix*} \\
  \times  &  P(B_1=0 \mid l_1)P(L_1=l_1) \end{Bmatrix*} \\
        =         & \begin{pmatrix*}[l]
                    \frac{
                        1- q^{g_z}_1 \times P(B_1=1)
                         }{
                        P(B_1= 0)
                         }
    \end{pmatrix*} \\ 
    \times& \sum_{l_1}\begin{Bmatrix*}[l]
  & \begin{pmatrix*}[l]
         & P(B_1=0 \mid l_1)P(L_1=l_1)-
                 \Bigg(
                    \frac{
                        1 - 
                        \frac{
                            m^{g_z}_1 \times P(H_1=1)
                              }{
                            P(B_1^{g_z}=0)
                              }                         }{
                        1- 
                        \frac{
                            P(H_1^{g_z}=1)
                              }{
                            P(B_1^{g_z}=0)
                              }
                         }
                 \Bigg) \times  
                 P(H_1=0 \mid B_1=0,  l_1)P(B_1=0 \mid l_1)P(L_1=l_1)
    \end{pmatrix*}    \end{Bmatrix*} \\
        =         & \begin{pmatrix*}[l]
                    \frac{
                        1- q^{g_z}_1 \times P(B_1=1)
                         }{
                        P(B_1= 0)
                         }
    \end{pmatrix*} \\ 
    \times& \begin{pmatrix*}[l]
         & P(B_1=0) -
                 P(H_1=0)\frac{
                           1-q^{g_z}_1 \times P(B_1=1)- m^{g_z}_1 \times P(H_1=1)
                              }{
                           P(H_1=0 \mid B_1=0) (1-q^{g_z}_1 \times P(B_1=1))
                         }
    \end{pmatrix*} \\
        = &\begin{pmatrix*}[l]
         & 1- q^{g_z}_1 \times P(B_1=1) -
                 (1- q^{g_z}_1 \times P(B_1=1))P(H_1=0)\frac{
                           1-q^{g_z}_1 \times P(B_1=1)- m^{g_z}_1 \times P(H_1=1)
                              }{
                           P(H_1=0 \mid B_1=0)P(B_1= 0) (1-q^{g_z}_1 \times P(B_1=1))
                         }
    \end{pmatrix*} \\
        = &\begin{pmatrix*}[l]
         & 1- q^{g_z}_1 \times P(B_1=1) -
                 (1- q^{g_z}_1 \times P(B_1=1))P(H_1=0)\frac{
                           1-q^{g_z}_1 \times P(B_1=1)- m^{g_z}_1 \times P(H_1=1)
                              }{
                           P(H_1=0 ) (1-q^{g_z}_1 \times P(B_1=1))
                         }
    \end{pmatrix*} \\
        = &\begin{pmatrix*}[l]
         & 1- q^{g_z}_1 \times P(B_1=1) - \Big(1-q^{g_z}_1 \times P(B_1=1)- m^{g_z}_1 \times P(H_1=1)\Big)
    \end{pmatrix*} \\
= & m^{g_z}_1 \times P(H_1=1)
        \end{split}
    \end{equation*}
}%

Noting that when $\beth^{g_z}_{B,1}=0$ and $\beth^{g_z}_{H,1}=1$ and $\aleph^{g_z}_{H,1}=1$, then

{\tiny 
\begin{equation*}
        \begin{split}
       \mathbb{E}\Bigg[W^{g_z}_{B, 1}W^{g_z}_{H, 1}H_1\Bigg] = & \sum_{l_1}\begin{Bmatrix*}[l]
         & \begin{pmatrix*}[l]
                    \frac{
                        1- q^{g_z}_1 \times P(B_1=1 \mid R_1=1))
                         }{
                        P(B_1= 0\mid R_1=1)
                         }
    \end{pmatrix*} \\
  \times  & \begin{pmatrix*}[l]
                    \frac{
                         m^{g_z}_1 \times P(H_1=1)
                          }{
                        P(H_1^{g_z}=1)
                          }
        \end{pmatrix*} \\
  \times  &  P(H_1=1 \mid B_1=0, l_1) P(B_1=0 \mid  l_1) P(L_1=l_1) \end{Bmatrix*}\\
        = &\begin{pmatrix*}[l]
         & \begin{pmatrix*}[l]
                    \frac{
                        1- q^{g_z}_1 \times P(B_1=1)
                         }{
                        P(B_1= 0)
                         }
    \end{pmatrix*} \\
  \times  & \begin{pmatrix*}[l]
                    \frac{
                         m^{g_z}_1 \times P(H_1=1)
                          }{
                        P(H_1^{g_z}=1)
                          }
        \end{pmatrix*} \\
  \times  &  P(H_1=1) \end{pmatrix*} \\
        = &\begin{pmatrix*}[l]
         & \begin{pmatrix*}[l]
                    \frac{
                        1- q^{g_z}_1 \times P(B_1=1)
                         }{
                        P(B_1= 0)
                         }
    \end{pmatrix*} \\
  \times  & \begin{pmatrix*}[l]
                    \frac{
                         m^{g_z}_1 \times P(H_1=1)
                          }{
                        P(H_1=1 \mid B_1=0)\big(1- q^{g_z}_1 \times P(B_1=1)\big)
                          }
        \end{pmatrix*} \\
  \times  &  P(H_1=1) \end{pmatrix*} \\
        = &\begin{pmatrix*}[l]
         & \begin{pmatrix*}[l]
                    1- q^{g_z}_1 \times P(B_1=1)
    \end{pmatrix*} \\
  \times  & \begin{pmatrix*}[l]
                    \frac{
                         m^{g_z}_1 \times P(H_1=1)
                          }{
                        P(H_1=1)\big(1- q^{g_z}_1 \times P(B_1=1)\big)
                          }
        \end{pmatrix*} \\
  \times  &  P(H_1=1) \end{pmatrix*} \\
 = & m^{g_z}_1 \times P(H_1=1)
        \end{split}
    \end{equation*}
}%

Therefore: 

\begin{itemize}
    \item [1.]  If $\beth^{g_z}_{B,1}=1$ and $\beth_1^H=1$ then $\mathbb{E}\Bigg[W^{g_z}_{B, 1}W^{g_z}_{H, 1}H_1\Bigg]=m^{g_z}_1 \times \mathbb{E}[H_1]$ and 
    \item [2.]  If $\beth^{g_z}_{B,1}=0$ and $\beth_1^H=1$ then $\mathbb{E}\Bigg[W^{g_z}_{B, 1}W^{g_z}_{H, 1}H_1\Bigg]=m^{g_z}_1 \times \mathbb{E}[H_1]$ and 
    \item [3.]  If $\beth^{g_z}_{B,1}=1$ and $\beth_1^H=0$ and $\aleph^{g_z}_{H,1}=1$ then $\mathbb{E}\Bigg[W^{g_z}_{B, 1}W^{g_z}_{H, 1}H_1\Bigg]=m^{g_z}_1 \times \mathbb{E}[H_1]$ and 
    \item [4.]  If $\beth^{g_z}_{B,1}=0$ and and $\aleph^{g_z}_{B,1}=1$ $\beth^{g_z}_{B,1}=0$ then $\mathbb{E}\Bigg[W^{g_z}_{B, 1}W^{g_z}_{H, 1}H_1\Bigg]=m^{g_z}_1 \times \mathbb{E}[H_1]$ and 
    \item [5.]  If $\aleph^{g_z}_{B,1}=0$ then $\mathbb{E}\Bigg[W^{g_z}_{B, 1}W^{g_z}_{H, 1}H_1\Bigg]=0$ and 
    \item [6.]  If $\aleph^{g_z}_{H,1}=0$ then $\mathbb{E}\Bigg[W^{g_z}_{B, 1}W^{g_z}_{H, 1}H_1\Bigg]=P(B_1^{g_z}=0)=\mathbb{E}(S_1^{g_z}=1)$ 
\end{itemize}

And so the constraint is satisfied.

\clearpage

\section{Appendix C: Relationship between proportionally-representative interventions and deterministic regimes} 

Section 2 of Young, et. al (2014) \cite{young2014identification} provides a succinct review of the relationship between deterministic and random (i.e. stochastic) regimes. Specifically, a regime $g_z$ can be considered deterministic for some treatment $B_k$ with support $\mathcal{B}$, if the following condition holds for all $k$:

\begin{align}
    & \exists \ b_k \in \mathcal{B} \ s.t. \ f_{B_k^{g_z+} \mid \overline{L}_k^{g_z}, \overline{R}_{k}^{g_z+}}(a_k \mid \overline{L}_k, \overline{R}_{k}) = 1,  w.p. 1. \nonumber
\end{align}

In words, a regime $g_z$ can be considered deterministic if every individual, possibly conditional on their treatment and covariate history, receives some treatment with certainty under the regime. Otherwise, a regime is stochastic, when at least one individual, possibly conditional on their treatment and covariate history, could possibly receive more than one treatment level under the regime. 

Obviously, proportionally representative interventions are always stochastic, unless one the following condition holds for all $k$: 

\begin{align}
    & q_k^{g_z} \times P(B_k=1) > P(R_k^{g_z+}=1),
\end{align}

or

\begin{align}
    & q_k^{g_z} \times P(H_k=1) = 0.
\end{align}

In words the former condition means that there are more treatment resources available than their are treatment eligible individuals, and the latter condition means that there are no treatment resources available under regime $g_z$. If the former condition holds, for a particular $k$, then all eligible individuals will receive the treatment with certainty. We refer to this condition as one in which treatment resources are `practically unlimited'. If the latter condition holds, for a particular $k$, then all individuals will receive no treatment with certainty. We refer to this condition as one in which treatment resources are `abolished'. Under these conditions, proportionally representative interventions are static deterministic interventions, according to the definitions in Young, et. al (2014) \cite{young2014identification}. As such, an expected potential outcome under an arbitrary deterministic regime can be understood as a proportionally representative intervention in which treatment resources are either abolished or made practically unlimited for each subgroup of the target population. As a consequence, we can appreciate that the average outcome observed in one arm of an ideal randomized controlled trial, where participants are deterministically assigned treatment $A=a$, identifies the average potential outcome under a proportionally representative intervention in which treatment level $A=a$ is made to be practically unlimited for all subgroups. When such an intervention on treatment resources is not feasible, then that estimand will not be relevant for policy-making. 
\clearpage

\section{Appendix D: G-formula proof}

Here we provide a proof for the identification formula for expected potential outcomes under proportionally-representative interventions for limited resources, under the 'natural course' - that is, under an additional hypothetical intervention to prevent censoring in all individuals (i.e. $\overline{c}_K=0$). Thus, our proof also covers settings in which censoring can be present, and so we generalize the identifiability conditions from the main text to allow for censoring. Let $C_k$ be an indicator for censoring in interval $k$. By definition, all individuals are uncensored in interval 0, so $C_0=0$, and individuals that had previously been censored, stay censored, such that if $C_k=1$ then $\overline{C}_k=1$. We define a topological order within each interval that includes censoring as $\Big(L_k, B_k, H_k, C_k, Y_k\Big)$. Further, we redefine all regimes $g_z$, $z \in \mathcal{Z}$ to involve some proportionally-representative intervention, \textit{and} an intervention on $\overline{C}^{g_z}_K$ such that $\overline{C}^{g_z+}_K=0$ for all individuals.

Finally, for the sake of the proof, we consider the stochastic regime $g_z$ for $B_k$, defined by the intervention density $f_{B_k^{g_z+} \mid \overline{L}_k^{g_z}, R_{k}^{g_z+}, C_{k-1}^{g_z+}}(\cdot \mid \cdot)$ to be produced by some deterministic regime $g^-_{z}$, so that $B_k^{g_Z+}=g^-_{z, B, k}(\overline{L}_k^{g_z}, R_{k}^{g_z+}, C_{k-1}^{g_z+}, V^{g_z}_{B,k})$, and thus: 

\begin{align}
    & f_{B_k^{g_z+} \mid \overline{L}_k^{g_z}, R_{k}^{g_z+}, C_{k-1}^{g_z+}, V^{g_z}_{B,k}}(B_k^{g_z+} \mid \overline{L}_k^{g_z}, R_{k}^{g_z+}, C_{k-1}^{g_z+}, V^{g_z}_{B,k}) =  \nonumber \\
    & I(B_k^{g_z+} = g^-_{z, B, k}(\overline{L}_k^{g_z}, R_{k}^{g_z+}, C_{k-1}^{g_z+}, V^{g_z}_{B,k})  \nonumber,
\end{align}

and where $V_k^{g_z}$ is completely exogenous with respect to all other variables in the observed data, and $g^-_{z, k}(\cdot)$ is specified precisely so that:

\begin{align}
    & \int_{v_{B,k}}I(B_k^{g_z+} = g^-_{z,B, k}(\overline{L}_k^{g_z}, R_{k}^{g_z+}, C_{k-1}^{g_z+}, v_{B,k}))f_{V^{g_z}_{B,k}}(v_{B,k}) =  \label{eq; stochastic def}\\
    & f_{B_k^{g_z+} \mid \overline{L}_k^{g_z}, R_{k}^{g_z+}, C_{k-1}^{g_z+}}(B_k^{g_z+} \mid \overline{L}_k^{g_z}, R_{k}^{g_z+}, C_{k-1}^{g_z+}) \nonumber .
\end{align}

That is, under regime $g_z$, the suspected inferior treatment, $B_k^{g_Z+}$ is assigned a value as a deterministic function of $\{\overline{L}_k^{g_z}, R_{k}^{g_z+}, C_{k-1}^{g_z+}, V^{g_z}_{B,k}\}$ according to $g^-_{z, k}(\cdot)$, and $g^-_{z, k}(\cdot)$ is precisely specified such that marginalizaing over the joint distribution of $B_k^{g_Z+}$ and  $V^{g_z}_{B,k}$, conditional on past treatment and covariate history, equals the defining intervention distribution of the stochastic intervention. Likewise, there exist $V^{g_z}_{B,k}$ and $V^{g_z}_{H,k}$ with these properties for all $k=0,\dots,K$.

\subsection{Exchangeability 1}

\begin{align}
    \underline{Y}^{g_z}_t \independent I(B_t^{g_z}=b_t) \mid \overline{L}_t^{g_z}=\overline{l}_t, \overline{C}_{t-1}^{g_z}=\overline{Y}_{t-1}^{g_z}=0,  \overline{H}_{t-1}^{g_z}=\overline{h}_{t-1}, \overline{B}_{t-1}^{g_z}=\overline{b}_{t-1}, \label{Cex1B}
\end{align}

for $\{\overline{b}_{t}, \overline{l}_t, \overline{h}_{t-1}  \mid P(  \overline{B}^{g_z+}=\overline{b}_{t}, 
\overline{L}_{t}^{g_z}=\overline{l}_{t}, 
\overline{C}_{t-1}^{g_z+}=\overline{Y}_{t-1}^{g_z}=0, 
\overline{H}_{t-1}^{g_z+}=\overline{h}_{t-1})>0\}$,  $t\in\{1,\dots,k\}$, and:
\\

\begin{align}
    \underline{Y}^{g_z}_t \independent I(H_t^{g_z}=h_t) \mid \overline{B}_{t}^{g_z}=\overline{b}_{t}, \overline{L}_t^{g_z}=\overline{l}_t, \overline{C}_{t-1}^{g_z}=\overline{Y}_{t-1}^{g_z}=0, \overline{H}_{t-1}^{g_z}=\overline{h}_{t-1}, \label{Cex1H}
\end{align}

for $\{\overline{h}_{t}, \overline{b}_{t}, \overline{l}_{t}  \mid P(  \overline{H}_{t}^{g_z+}=\overline{h}_{t}, 
\overline{B}_{t}^{g_z+}=\overline{b}_{t}, 
\overline{L}_{t}^{g_z}=\overline{l}_{t}, 
\overline{C}_{t-1}^{g_z+}=\overline{Y}_{t-1}^{g_z}=0)>0\}$, $t\in\{1,\dots,k\}$, and:

\begin{align}
    \underline{Y}^{g_z}_t \independent I(C_{t}^{g_z}=0) \mid \overline{H}_{t}^{g_z}=\overline{h}_{t}, \overline{B}_{t}^{g_z}=\overline{b}_{t}, \overline{L}_{t}^{g_z}=\overline{l}_{t}, \overline{Y}_{t-1}^{g_z}=\overline{C}_{t-1}^{g_z}=0,  \label{Cex1C}
\end{align}

for $\{\overline{h}_{t}, \overline{b}_{t}, \overline{l}_{t}  \mid P(  \overline{H}_{t}^{g_z+}=\overline{h}_{t}, 
\overline{B}_{t}^{g_z+}=\overline{b}_{t}, 
\overline{L}_{t}^{g_z}=\overline{l}_{t}, 
\overline{C}_{t}^{g_z+}=\overline{Y}_{t-1}^{g_z}=0)>0\}$, $t\in\{1,\dots,k\}$.

Sequential exchangeability conditions for the outcomes $Y_t$, with respect to past treatment, in the main text are extended to include sequential exchangeability with respect to censoring. 

\subsection{Exchangeability 2}

\begin{align}
    \underline{B}^{g_z}_t \independent I(B_{t-1}^{g_z}=b_{t-1}) \mid \overline{L}_{t-1}^{g_z}=\overline{l}_{t-1}, \overline{C}_{t-2}^{g_z}=\overline{Y}_{t-2}^{g_z}=0,  \overline{H}_{t-2}^{g_z}=\overline{h}_{t-2}, \overline{B}_{t-2}^{g_z}=\overline{b}_{t-2}, \label{Cex2BB}
\end{align}

for $\{\overline{b}_{t-1}, \overline{l}_{t-1}, \overline{h}_{t-2}  \mid P(  \overline{B}_{t-1}^{g_z+}=\overline{b}_{t-1}, 
\overline{L}_{t-1}^{g_z}=\overline{l}_{t-1}, 
\overline{C}_{t-2}^{g_z+}=\overline{Y}_{t-2}^{g_z}=0, 
\overline{H}_{t-2}^{g_z+}=\overline{h}_{t-2})>0\}$,  $t\in\{1,\dots,k\}$, and:
\\

\begin{align}
    \underline{B}^{g_z}_t \independent I(H_{t-1}^{g_z}=h_{t-1}) \mid \overline{B}_{t-1}^{g_z}=\overline{b}_{t-1}, \overline{L}_{t-1}^{g_z}=\overline{l}_{t-1}, \overline{C}_{t-2}^{g_z+}=\overline{Y}_{t-2}^{g_z}=0, \overline{H}_{t-2}^{g_z}=\overline{h}_{t-2}, \label{Cex2BH}
\end{align}

for $\{\overline{h}_{t-1}, \overline{b}_{t-1}, \overline{l}_{t-1}  \mid P(  \overline{H}_{t-1}^{g_z+}=\overline{h}_{t-1}, 
\overline{B}_{t-1}^{g_z+}=\overline{b}_{t-1}, 
\overline{L}_{t-1}^{g_z}=\overline{l}_{t-1}, 
\overline{C}_{t-2}^{g_z+}=\overline{Y}_{t-2}^{g_z}=0)>0\}$,  $t\in\{1,\dots,k\}$, and:
\\

\begin{align}
    \underline{B}^{g_z}_t \independent  I(C_{t-1}^{g_z}=0) \mid  \overline{H}_{t-1}^{g_z}=\overline{h}_{t-1}, \overline{B}_{t-1}^{g_z}=\overline{b}_{t-1}, \overline{L}_{t-1}^{g_z}=\overline{l}_{t-1}, \overline{Y}_{t-2}^{g_z}=\overline{C}_{t-2}^{g_z}=0, \label{Cex2BC}
\end{align}

for $\{\overline{h}_{t-1}, \overline{b}_{t-1}, \overline{l}_{t-1}  \mid P(  \overline{H}_{t-1}^{g_z+}=\overline{h}_{t-1}, 
\overline{B}_{t-1}^{g_z+}=\overline{b}_{t-1}, 
\overline{L}_{t-1}^{g_z}=\overline{l}_{t-1}, 
\overline{C}_{t-1}^{g_z+}=\overline{Y}_{t-2}^{g_z}=0)>0\}$, $t\in\{1,\dots,k\}$, and:
\\

\begin{align}
    \underline{H}^{g_z}_t \independent I(B_{t}^{g_z}=b_{t}) \mid \overline{L}_{t}^{g_z}=\overline{l}_{t}, \overline{C}_{t-1}^{g_z}=\overline{Y}_{t-1}^{g_z}=0, \overline{H}_{t-1}^{g_z}=\overline{h}_{t-1}, \overline{B}_{t-1}^{g_z}=\overline{b}_{t-1}, \label{Cex2HB}
\end{align}

for $\{\overline{b}_{t}, \overline{l}_{t}, \overline{h}_{t-1}  \mid P(  \overline{B}_{t}^{g_z+}=\overline{b}_{t}, 
\overline{L}_{t}^{g_z}=\overline{l}_{t}, 
\overline{C}_{t-1}^{g_z+}=\overline{Y}_{t-1}^{g_z}=0, 
\overline{H}_{t-1}^{g_z+}=\overline{h}_{t-1})>0\}$,  $t\in\{1,\dots,k\}$, and:
\\

\begin{align}
    \underline{H}^{g_z}_t \independent I(H_{t-1}^{g_z}=h_{t-1}) \mid  \overline{B}_{t-1}^{g_z}=\overline{b}_{t-1}, \overline{L}_{t-1}^{g_z}=\overline{l}_{t-1}, \overline{C}_{t-2}^{g_z}=\overline{Y}_{t-2}^{g_z}=0, \overline{H}_{t-2}^{g_z}=\overline{h}_{t-2},\label{Cex2HH}
\end{align}

for $\{\overline{h}_{t-1},  \overline{b}_{t-1},  \overline{l}_{t-1}  \mid P(  \overline{H}_{t-1}^{g_z+}=\overline{h}_{t-1}, 
\overline{B}_{t-1}^{g_z+}=\overline{b}_{t-1}, 
\overline{L}_{t-1}^{g_z}=\overline{l}_{t-1}, 
\overline{C}_{t-2}^{g_z+}=\overline{Y}_{t-2}^{g_z}=0)>0\}$,  $t\in\{1,\dots,k\}$, and:
\\

\begin{align}
    \underline{H}^{g_z}_t \independent  I(C_{t-1}^{g_z}=0) \mid \overline{H}_{t-1}^{g_z}=\overline{h}_{t-1}, \overline{B}_{t-1}^{g_z}=\overline{b}_{t-1}, \overline{L}_{t-1}^{g_z}=\overline{l}_{t-1}, \overline{C}_{t-2}^{g_z}=\overline{Y}_{t-2}^{g_z}=0, \label{Cex2HC}
\end{align}

for $\{\overline{h}_{t-1}, \overline{b}_{t-1}, \overline{l}_{t-1}  \mid P(
\overline{H}_{t-1}^{g_z+}=\overline{h}_{t-1}, 
\overline{B}_{t-1}^{g_z+}=\overline{b}_{t-1}, 
\overline{L}_{t-1}^{g_z}=\overline{l}_{t-1}, 
\overline{C}_{t-1}^{g_z+}=\overline{Y}_{t-2}^{g_z}=0)>0\}$, $t\in\{1,\dots,k\}$.
\\

Sequential exchangeability conditions for the treatment resources, with respect to past treatment, in the main text are extended to include sequential exchangeability with respect to censoring.

\subsection{Consistency}

\begin{align}
    & \text{if } \overline{B}_{t}=\overline{B}_{t}^{g_z+} \text{, }  \overline{H}_{t}=\overline{H}_{t}^{g_z+}  \text{ and } \overline{C}_{t}=0 \nonumber\\
    & \text{then } Y_t = Y_t^{g_z}, L_{t+1} = L_{t+1}^{g_z}, and \text{, and } B_{t+1} = B^{g_z}_{t+1},
    \label{Cass: consistency B and Y}
\end{align} 

and,

\begin{align}
    & \text{if } \overline{H}_{t}=\overline{H}_{t}^{g_z+} \text{, } \overline{C}_{t}=0 \text{, and }  \overline{B}_{t+1}=\overline{B}_{t+1}^{g_z+}   \nonumber\\
    & \text{then } H_{t+1} = H^{g_z}_{t+1},
    \label{Cass: consistency H1}
\end{align}

and,

\begin{align}
    & \text{if } \overline{C}_{t-1}=0 \text{, } \overline{B}_{t}=\overline{B}_{t}^{g_z+} \text{, and }  \overline{H}_{t}=\overline{H}_{t}^{g_z+}      \nonumber\\
    & \text{then } C_{k+1} = C^{g_z}_{t},
    \label{Cass: consistency H}
\end{align}

for all $t\in\{0,\dots,K\}$. 

Consistency statements are updated to reflect the additional hypothetical intervention to prevent censoring.

\subsection{Positivity}

\begin{align}
    & f_{R_t^{g_z+}, \overline{L}_t^{g_z}, C_{t-1}^{g_z+}}(1, \overline{L}_t, 0)>0\text{ and } f_{B_t^{g_z+} \mid R_t^{g_z+}, \overline{L}_t^{g_z}, C_{t-1}^{g_z+}}(B_t \mid 1, \overline{L}_t, 0)>0 \implies \nonumber  \\
    & \quad f_{B_t \mid R_t, \overline{L}_t, C_{t-1}}(B_t \mid 1, \overline{L}_t, 0)>0\text{, w.p.1} \label{eq: CpositivityB},
\end{align}

and: 

\begin{align}
    & f_{S_t^{g_z+}, \overline{L}_t^{g_z}, C_{t-1}^{g_z+}}(1, \overline{L}_t, 0)>0 \text{ and } f_{H_t^{g_z+} \mid S_t^{g_z+}, \overline{L}_t^{g_z}, C_{t-1}^{g_z+}}(H_t \mid 1, \overline{L}_t, 0)>0 \implies \nonumber  \\
    &   \quad f_{H_t \mid S_t, \overline{L}_t, C_{t-1}}(H_t \mid 1, \overline{L}_t, 0)>0\text{, w.p.1},  \label{eq: CpositivityH} 
\end{align}

and:

\begin{align}
    & f_{S_t^{g_z+}, \overline{L}_t^{g_z}, C_{t-1}^{g_z+}}(1, \overline{L}_t, 0)>0  \implies \nonumber  \\
    &   \quad f_{C_t \mid S_t, \overline{L}_t, C_{t-1}}(0 \mid 1, \overline{L}_t, 0)>0\text{, w.p.1},  \label{eq: CpositivityC} 
\end{align}

Positivity conditions are extended so that if there exists in interval $t$ some censoring-eligible individuals (alive, untreated, and uncensored) with covariate history $\overline{l}_t$ who are uncensored under regime ${g_z+}$, then there must be some such individuals in the unintervened world.

\begin{theorem}
If conditions \eqref{Cex1B}-\eqref{eq: CpositivityC} hold, then $\mathbb{E}( Y^{g_z}_K )$ is identified from the non-extended g-formula of Robins (1986) for $Y_K$, $f^{g_z}_{Y_K}(1)$: equal to

\begin{align}
  & \sum_{\overline{l}_K} \sum_{\overline{h}_K} \sum_{\overline{b}_K} \sum_{k=1}^K P(Y_k=1 \mid C_k=0, \overline{H}_k=\overline{h}_k, \overline{B}_k=\overline{b}_k, \overline{L}_k=\overline{l}_k, Y_{k-1}=0) \label{eq; CgformY} \\
    & \times \prod_{j=1}^{k} \Big\{f_{H_j^{g_z+} \mid \overline{B}_{j}^{g_z+}, \overline{L}_j^{g_z}, Y_{j-1}^{g_z}, C_{j-1}^{g_z+}, \overline{H}_{j-1}^{g_z+}}(h_j \mid \overline{b}_{j}, \overline{l}_j, 0, 0, \overline{h}_{j-1})\nonumber \\
    & \times f_{B_j^{g_z+} \mid \overline{L}_j^{g_z}, Y_{j-1}^{g_z}, C_{j-1}^{g_z+}, \overline{H}_{j-1}^{g_z+}, \overline{B}_{j-1}^{g_z+}}(b_j \mid \overline{l}_j, 0, 0, \overline{h}_{j-1}, \overline{b}_{j-1})\nonumber \\
    & \times P(L_j=l_j \mid Y_{j-1}=0, C_{j-1}=0, \overline{H}_{j-1}=\overline{h}_{j-1}, \overline{B}_{j-1}=\overline{b}_{j-1}, \overline{L}_{j-1}=\overline{l}_{j-1}) \nonumber\\
    & \times P(Y_{j-1}=0 \mid C_{j-1}=0, \overline{H}_{j-1}=\overline{h}_{j-1}, \overline{B}_{j-1}=\overline{b}_{j-1}, \overline{L}_{j-1}=\overline{l}_{j-1}, Y_{j-2}=0)\Big\} \nonumber,
\end{align}

where  

\begin{align}
    & f_{B_j^{g_z+} \mid \overline{L}_j^{g_z}, Y_{j-1}^{g_z}, C_{j-1}^{g_z+},  \overline{H}_{j-1}^{g_z+}, \overline{B}_{j-1}^{g_z+}}(b_j \mid \overline{l}_j, 0, 0, \overline{h}_{j-1}, \overline{b}_{j-1}) = \\
    & \Big(\alpha_j(z) \times f_{B_j \mid \overline{L}_j, Y_{j-1}, C_{j-1}, \overline{H}_{j-1}, \overline{B}_{j-1}}(1 \mid \overline{l}_j, 0, 0, \overline{h}_{j-1}, \overline{b}_{j-1})\Big)^{b_j} \nonumber \\
    \times &  \Big(1-\alpha_j(z) \times f_{B_j \mid \overline{L}_j, Y_{j-1}, C_{j-1}, \overline{H}_{j-1}, \overline{B}_{j-1}}(0 \mid \overline{l}_j, 0, 0, \overline{h}_{j-1}, \overline{b}_{j-1})\Big)^{1-b_j} \nonumber.
\end{align}, and

\begin{align}
    & f_{H_j^{g_z+} \mid \overline{B}_{j}^{g_z+}, \overline{L}_j^{g_z}, Y_{j-1}^{g_z}, C_{j-1}^{g_z+}, \overline{H}_{j-1}^{g_z+}}(h_j \mid \overline{b}_{j}, \overline{l}_j, 0, 0, \overline{h}_{j-1}) = \\
    & \Big(\beta_j(z) \times f_{H_j \mid \overline{B}_{j}, \overline{L}_j, Y_{j-1}, C_{j-1}, \overline{H}_{j-1}}(1 \mid \overline{b}_{j}, \overline{l}_j, 0, 0, \overline{h}_{j-1})\Big)^{h_j} \nonumber \\
    \times &  \Big(1-\beta_j(z) \times f_{H_j \mid \overline{B}_{j}, \overline{L}_j, Y_{j-1}, C_{j-1}, \overline{H}_{j-1}}(0 \mid \overline{b}_{j}, \overline{l}_j, 0, 0, \overline{h}_{j-1})\Big)^{1-h_j} \nonumber 
\end{align}, 

and $\alpha_1(z, 1)$ is identified by noting the equality $P(B_k^{g_z}=1) = P(B_k=1)$, $\alpha_k(z, R_k)$ and $\alpha_k(z, S_k)$, are identified by the functionals in expressions \eqref{eq; alphanew} and  \eqref{eq; betanew}, and $\aleph$ and $\beth$ indicator functions are identified by the functionals in expressions \eqref{eq; alephB} - \eqref{eq; bethH}, except replacing $P(B_k^{g_z}=1)$ with $f^{g_z}_{B_k}(1)$, $P(H_k^{g_z}=1)$ with $f^{g_z}_{H_k}(1)$, and $P(R_k^{g_z+}=1)$ with $f^{g_z}_{R_k}(1)$ and $P(S_k^{g_z+}=1)$ with $f^{g_z}_{S_k}(1)$, where: 

\begin{align}
    f^{g_z}_{B_j}(1) = 
    & \sum_{\overline{l}_j} P(B_j=1 \mid \overline{L}_j=\overline{l}_j, R_j=1, C_{j-1}=0) \label{eq; CgformB} \\
    & \times \prod_{m=1}^{j} \Big\{  P(L_m=l_m \mid R_m=1, C_{m-1}=0, \overline{L}_{m-1}=\overline{l}_{m-1}) \nonumber\\
    & \times P(Y_{m-1}=0 \mid C_{m-1}=0, H_{m-1}=0, S_{m-1}=1, \overline{L}_{m-1}=\overline{l}_{m-1}) \nonumber \\
    & \times  f_{H_{m-1}^{g_z+} \mid \overline{B}_{m-1}^{g_z+}, \overline{L}_{m-1}^{g_z}, Y_{m-2}^{g_z}, C_{m-2}^{g_z+}, \overline{H}_{m-2}^{g_z+}}(0 \mid \overline{b}_{m-1}, \overline{l}_{m-1}, 0, 0, \overline{h}_{m-2})  \nonumber \\
    & \times  f_{B_{m-1}^{g_z+} \mid \overline{L}_{m-1}^{g_z}, Y_{m-2}^{g_z}, C_{m-2}^{g_z+},  \overline{H}_{m-2}^{g_z+}, \overline{B}_{m-2}^{g_z+}}(0 \mid \overline{l}_{m-1}, 0, 0, \overline{h}_{m-2}, \overline{b}_{m-2}) \Big\} \nonumber,  
\end{align}

and:

\begin{align}
    f^{g_z}_{H_j}(1)=
    & \sum_{\overline{l}_j} P(H_j=1 \mid \overline{L}_j=\overline{l}_j, S_j=1, C_{j-1}=0) \label{eq; CgformH} \\
    & \times \prod_{m=1}^{j} \Big\{ f_{B_m^{g_z+} \mid \overline{L}_m^{g_z}, Y_{m-1}^{g_z}, C_{m-1}^{g_z+},  \overline{H}_{j-m}^{g_z+}, \overline{B}_{m-1}^{g_z+}}(0 \mid \overline{l}_j, 0, 0, \overline{h}_{m-1}, \overline{b}_{m-1})  \nonumber \\
    & \times  P(L_m=l_m \mid R_m=1, C_{m-1}=0, \overline{L}_{m-1}=\overline{l}_{m-1}) \nonumber\\
    & \times P(Y_{m-1}=0 \mid C_{m-1}=0, H_{m-1}=0, S_{m-1}=1, \overline{L}_{m-1}=\overline{l}_{m-1}) \nonumber \\
    & \times  f_{H_{m-1}^{g_z+} \mid \overline{B}_{m-1}^{g_z+}, \overline{L}_{m-1}^{g_z}, Y_{m-2}^{g_z}, C_{m-2}^{g_z+}, \overline{H}_{m-2}^{g_z+}}(0 \mid \overline{b}_{m-1}, \overline{l}_{m-1}, 0, 0, \overline{h}_{m-2})  \Big\}\nonumber,
\end{align}

and:

\begin{align}
    f^{g_z}_{R_k}(1) = 
    & \sum_{\overline{l}_{k-1}} P(Y_{k-1}=0 \mid C_{k-1}=0,  \overline{H}_{k-1}=0, \overline{B}_{k-1}=0, \overline{L}_{k-1}=\overline{l}_{k-1}, Y_{k-2}=0) \label{eq; CgformR} \\
    & \times  \prod_{m=1}^{k-1} \Big\{f_{H_m^{g_z+} \mid \overline{B}_{m}^{g_z+}, \overline{L}_m^{g_z}, Y_{m-1}^{g_z}, C_{m-1}^{g_z+}, \overline{H}_{m-1}^{g_z+}}(0 \mid \overline{b}_{m}, \overline{l}_m, 0, 0, \overline{h}_{m-1})  \nonumber \\
    & \times  f_{B_m^{g_z+} \mid \overline{L}_m^{g_z}, Y_{m-1}^{g_z}, C_{m-1}^{g_z+},  \overline{H}_{j-m}^{g_z+}, \overline{B}_{m-1}^{g_z+}}(0 \mid \overline{l}_j, 0, 0, \overline{h}_{m-1}, \overline{b}_{m-1})  \nonumber \\
    & \times P(L_m=l_m \mid R_m=1, C_{m-1}=0, \overline{L}_{m-1}=\overline{l}_{m-1}) \nonumber\\
    & \times P(Y_{m-1}=0 \mid C_{m-1}=0, H_{m-1}=0, S_{m-1}=1, \overline{L}_{m-1}=\overline{l}_{m-1}) \Big\}\nonumber,
\end{align}

and:

\begin{align}
    f^{g_z}_{S_k}(1) = 
    & \sum_{\overline{l}_{k}} f_{B_k^{g_z+} \mid \overline{L}_k^{g_z}, Y_{k-1}^{g_z}, C_{k-1}^{g_z+},  \overline{H}_{k-1}^{g_z+}, \overline{B}_{k-1}^{g_z+}}(0 \mid \overline{l}_k, 0, 0, \overline{h}_{k-1}, \overline{b}_{k-1}) \label{eq; CgformS} \\
    & \times P(L_k=l_k \mid R_k=1, C_{k-1}=0, \overline{L}_{k-1}=\overline{l}_{k-1}) \\
    & \times \prod_{m=1}^{k-1} \Big\{ P(Y_{m}=0 \mid C_m=0, H_{m}=0, S_{m}=1, \overline{L}_{m}=\overline{l}_{m}) \nonumber \\
    & \times  f_{H_m^{g_z+} \mid \overline{B}_{m}^{g_z+}, \overline{L}_m^{g_z}, Y_{m-1}^{g_z}, C_{m-1}^{g_z+}, \overline{H}_{m-1}^{g_z+}}(0 \mid \overline{b}_{m}, \overline{l}_m, 0, 0, \overline{h}_{m-1}) \nonumber \\
    & \times  f_{B_m^{g_z+} \mid \overline{L}_m^{g_z}, Y_{m-1}^{g_z}, C_{m-1}^{g_z+},  \overline{H}_{j-m}^{g_z+}, \overline{B}_{m-1}^{g_z+}}(0 \mid \overline{l}_j, 0, 0, \overline{h}_{m-1}, \overline{b}_{m-1}) \nonumber \\
    & \times   P(L_m=l_m \mid R_m=1, C_{m-1}=0, \overline{L}_{m-1}=\overline{l}_{m-1}) \Big\} \nonumber.
\end{align}

\label{theorem: identification g formula with censoring}
\end{theorem}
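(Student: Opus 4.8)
The plan is to prove the identity by the standard sequential-exchangeability (iterated-expectations) argument, organized as a strong induction on the interval index, with the scaling functions $\alpha_j$, $\beta_j$ and the indicators $\aleph$, $\beth$ identified recursively along the way. First I would use the representation already set up above: each stochastic proportionally-representative intervention on $B_k$ (respectively $H_k$) is realized as a deterministic rule $g^-_{z,B,k}(\cdot)$ applied to $\{\overline{L}_k^{g_z}, R_k^{g_z+}, C_{k-1}^{g_z+}, V^{g_z}_{B,k}\}$ together with a completely exogenous randomizer $V^{g_z}_{B,k}$ satisfying \eqref{eq; stochastic def}. This reduces the problem to identification under a \emph{deterministic} regime in the augmented model, where the usual FFRCISTG/SWIG factorization of the counterfactual joint law holds; the exogeneity of the randomizers together with the stated Exchangeability conditions \eqref{Cex1B}--\eqref{Cex2HC} supply exactly the conditional independences invoked at each peeling step, and at the end one marginalizes over the $V$'s to recover the intervention densities displayed in the theorem.

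With that in place, I would write $\mathbb{E}(Y_K^{g_z}) = \sum_{k=1}^K P(Y_k^{g_z}=1,\ \overline{Y}_{k-1}^{g_z}=0,\ \overline{C}_k^{g_z+}=0)$, using that the censoring intervention forces $\overline{C}_k^{g_z+}=0$ with probability one and that death is absorbing, and then factorize each summand along the within-interval topological order $(L_j,B_j,H_j,C_j,Y_j)$, $j=1,\dots,k$. Peeling node by node: the covariate factors $P(L_j^{g_z}=l_j\mid\cdot)$ become observed conditional densities by consistency \eqref{Cass: consistency B and Y}--\eqref{Cass: consistency H} once the conditioned history has been shown to be natural-course-consistent; the factors for $B_j^{g_z+}$ and $H_j^{g_z+}$ are replaced by the regime's defining densities \eqref{eq; newintB}--\eqref{eq; newintH}, with Exchangeability~2 for the natural treatment values (\eqref{Cex2BB}--\eqref{Cex2BC}, \eqref{Cex2HB}--\eqref{Cex2HC}) licensing the identification of the counterfactual conditional law of the natural value with the observed conditional density appearing there; the intermediate survival factors and the terminal hazard $P(Y_k^{g_z}=1\mid\cdot)$ are matched to $P(Y_k=1\mid C_k=0,\overline{H}_k,\overline{B}_k,\overline{L}_k,Y_{k-1}=0)$ and the analogous $Y_{j-1}$ factors by Exchangeability~1 \eqref{Cex1B}--\eqref{Cex1C} and consistency; and positivity \eqref{eq: CpositivityB}--\eqref{eq: CpositivityC} ensures every conditioning event invoked lies in the relevant support. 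Collecting the factors reproduces \eqref{eq; CgformY}, \emph{modulo} the values of $\alpha_j(z,R_j)$, $\beta_j(z,S_j)$ and the $\aleph$, $\beth$ indicators.

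It then remains to identify these quantities by the same induction. For the base case $j=1$, consistency gives $B_1^{g_z}=B_1$ (no prior intervention) and $R_1^{g_z+}=1$, so $\alpha_1(z,1)=P(B_1^{g_z+}=1)/P(B_1=1)$, after which $\beth^{g_z}_{B,1},\aleph^{g_z}_{B,1}$, then $f^{g_z}_{S_1}(1)$, $f^{g_z}_{H_1}(1)$, $\beta_1(z,S_1)$, $\beth^{g_z}_{H,1},\aleph^{g_z}_{H,1}$ follow in order. For the inductive step, I would note that $f^{g_z}_{R_j}(1)$ and $f^{g_z}_{B_j}(1)$ of \eqref{eq; CgformR},\eqref{eq; CgformB} are exactly the g-formula \eqref{eq; CgformY} \emph{truncated} at time $j$ with $R_j$, respectively $B_j$ (the natural value of treatment), as the outcome; hence they are identified by rerunning the peeling argument of the previous paragraph for those earlier outcomes, and they involve only $\alpha_m(z),\beta_m(z)$ with $m\le j-1$. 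This identifies $\alpha_j(z,R_j)=P(B_j^{g_z+}=1)/f^{g_z}_{B_j}(1)$ and then $\aleph^{g_z}_{B,j},\beth^{g_z}_{B,j}$ via \eqref{eq; alephB}--\eqref{eq; bethH}; feeding $\alpha_j$ into \eqref{eq; CgformS},\eqref{eq; CgformH} identifies $f^{g_z}_{S_j}(1)$, $f^{g_z}_{H_j}(1)$, hence $\beta_j(z,S_j)$ and $\aleph^{g_z}_{H,j},\beth^{g_z}_{H,j}$. Because $P(B_j^{g_z+}=1)$ and $P(H_j^{g_z+}=1)$ are pinned down within the recursion by the user-specified constraints \eqref{eq; czB}--\eqref{eq; czH} (whose $\aleph$-truncation uses the already-identified $f^{g_z}_{R_j}(1), f^{g_z}_{S_j}(1)$), the recursion closes and every component of \eqref{eq; CgformY} is identified.

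The main obstacle is the bookkeeping that makes this induction well founded and the exchangeability invocations precise: one must verify that each truncated g-formula genuinely involves only strictly-earlier scaling functions, so that no circularity arises between $\alpha_j$ and $\beta_j$ (note that $f^{g_z}_{H_j}(1)$ legitimately uses $\alpha_j$ but not $\beta_j$, and $f^{g_z}_{B_j}(1)$ uses neither); that the two branches of \eqref{eq; newintB}--\eqref{eq; newintH} indexed by $\beth^{g_z}_{B,k}$ (and the degenerate case $\aleph^{g_z}_{B,k}=0$) are treated uniformly in the peeling step, which is the abstract counterpart of the explicit interval-$1$ case analysis carried out in Appendix~B; and that at each step the conditioning event one needs has positive probability under $g_z$, which is exactly the purpose of the support restrictions attached to \eqref{Cex1B}--\eqref{Cex2HC} and of the positivity conditions \eqref{eq: CpositivityB}--\eqref{eq: CpositivityC}.
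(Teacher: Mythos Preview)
Your proposal is correct and follows essentially the same approach as the paper's own proof: both realize the stochastic regime via exogenous randomizers $V^{g_z}_{B,k},V^{g_z}_{H,k}$ to reduce to a deterministic rule, decompose $\mathbb{E}(Y_K^{g_z})$ as $\sum_{k} P(Y_k^{g_z}=1,\overline{Y}_{k-1}^{g_z}=0)$, and then peel each summand along the within-interval order using Exchangeability~1 and consistency for the outcome factor and Exchangeability~2 for the natural treatment values that feed the intervention densities, with the scaling functions identified recursively from the truncated g-formulae. The paper carries out only the $k=1$ step explicitly and leaves the remaining intervals and the recursive identification of $\alpha_j,\beta_j$ to the reader; your write-up is in fact more explicit than the paper's on the well-foundedness of that recursion (your observation that $f^{g_z}_{B_j}(1)$ uses only $\alpha_m,\beta_m$ with $m\le j-1$ and $f^{g_z}_{H_j}(1)$ uses $\alpha_j$ but not $\beta_j$ is exactly the acyclicity check the paper omits), but the underlying argument is the same.
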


\begin{proof}
Assume that conditions \eqref{Cex1B}-\eqref{eq: CpositivityC} hold. Using laws of probability:

\begin{align}
\mathbb{E}( Y^{g_z}_K ) = 
    & \sum_{k=1}^K P(Y_k^{g_z}=1 \mid  Y_{k-1}^{g_z}=0) \prod_{j=1}^{k} \Big\{P(Y_{j-1}^{g_z}=0 \mid Y_{j-2}^{g_z}=0)\Big\} \nonumber \\
     = &  \sum_{k=1}^K P(Y_k^{g_z}=1, \overline{Y}_{k-1}^{g_z}=0)  \nonumber
\end{align}

Then $\mathbb{E}( Y^{g_z}_K )$ is identified if each element of the sum is identified. We provide the proof for $k=1$, and leave the rest for the reader. 

Since $\overline{Y}_{0}^{g_z}=0$ for all individuals,

\begin{align}
& P(Y_1^{g_z}=1, \overline{Y}_{0}^{g_z}=0) =  P(Y_1^{g_z}=1). \nonumber
\end{align}

Using laws of probability,

{\tiny 
\begin{equation*}
        \begin{split}
    & = \sum_{l_1} \sum_{h_1} \sum_{b_1}P(Y_1^{g_z}=1 \mid C_1^{g_z+}=0, H^{g_z+}_1=h_1, B_1^{g_z+}=b_1, L_1^{g_z}=l_1) \nonumber\\
    & \times  f_{H_1^{g_z+} \mid B_{1}^{g_z+}, L_1^{g_z}}(h_1 \mid b_{1}, l_1)\nonumber \\
    & \times f_{B_1^{g_z+} \mid L_1^{g_z}}(b_1 \mid l_1)\nonumber \\
    & \times P(L_1^{g_z}=l_1).\nonumber
        \end{split}
    \end{equation*}
}%

Using the complete exogeneity of $\overline{V}^{g_z}_{B,1}$, and $\overline{V}^{g_z}_{H,1}$, we find that

{\tiny 
\begin{equation*}
        \begin{split}
    & = \int_{v_{H,1}}\int_{v_{B,1}} \sum_{l_1} \sum_{h_1} \sum_{b_1} P(Y_1^{g_z}=1 \mid H^{g_z+}_1=h_1, B_1^{g_z+}=b_1, L_1^{g_z}=l_1, V^{g_z}_{H,1}=v_{H,1}, V^{g_z}_{B,1}=v_{B,1}) \nonumber\\
    & \times  f_{H_1^{g_z+} \mid B_{1}^{g_z+}, L_1^{g_z}, V^{g_z}_{H,1}}(h_1 \mid b_{1}, l_1, v_{H,1})f(v_{H,1})\nonumber \\
    & \times f_{B_1^{g_z+} \mid L_1^{g_z}, V^{g_z}_{B,1}}(b_1 \mid l_1, v_{B,1})f(v_{B,1})\nonumber \\
    & \times P(L_1^{g_z}=l_1) \nonumber
        \end{split}
    \end{equation*}
}%

Noting that $B_1^{g_z+}$ and $H_1^{g_z+}$ are constants conditional on treatment, and covariate histories and on $V^{g_z}_{B,1}$ and $V^{g_z}_{H,1}$, respectively,

{\tiny 
\begin{equation*}
        \begin{split}
    & = \int_{v_{H,1}}\int_{v_{B,1}}\sum_{l_1} \sum_{h_1} \sum_{b_1} P(Y_1^{g_z}=1 \mid L_1^{g_z}=l_1, V^{g_z}_{H,1}=v_{H,1}, V^{g_z}_{B,1}=v_{B,1}) \nonumber\\
    & \times  f_{H_1^{g_z+} \mid B_{1}^{g_z+}, L_1^{g_z}, V^{g_z}_{H,1}}(h_1 \mid b_{1}, l_1, v_{H,1})f(v_{H,1})\nonumber \\
    & \times f_{B_1^{g_z+} \mid L_1^{g_z}, V^{g_z}_{B,1}}(b_1 \mid l_1, v_{B,1})f(v_{B,1})\nonumber \\
    & \times P(L_1^{g_z}=l_1) \nonumber
        \end{split}
    \end{equation*}
}%

Using the complete exogeneity of $V^{g_z}_{B,1}$, and $V^{g_z}_{H,1}$ again, 

{\tiny 
\begin{equation*}
        \begin{split}
    & = \int_{v_{B,1}}  \int_{v_{H,1}}\sum_{l_1} \sum_{h_1} \sum_{b_1} P(Y_1^{g_z}=1 \mid L_1^{g_z}=l_1) \nonumber\\
    & \times  f_{H_1^{g_z+} \mid B_{1}^{g_z+}, L_1^{g_z}, V^{g_z}_{H,1}}(h_1 \mid b_{1}, l_1, v_{H,1})f(v_{H,1})\nonumber \\
    & \times f_{B_1^{g_z+} \mid L_1^{g_z}, V^{g_z}_{B,1}}(b_1 \mid l_1, v_{B,1})f(v_{B,1})\nonumber \\
    & \times P(L_1^{g_z}=l_1)
        \end{split}
    \end{equation*}
}%

Defining $\mathcal{B}^{g_z}_{1,pos} \times \mathcal{H}^{g_z}_{1,pos} \times \mathcal{L}^{g_z}_{1,pos} = \{h_{1}, b_{1}, l_1  \mid P(H_1^{g_z+}=h_1, B_{1}^{g_z+}=b_{1}, L_{1}^{g_z}=l_1)>0\}$, that is, the support of $B_1^{g_z+}$, $H_1^{g_z+}$, and $L_1^{g_z}$ under regime $g_z$, then

{\tiny 
\begin{equation*}
        \begin{split}
    & = \int_{v_{B,1}}  \int_{v_{H,1}}\sum_{\mathcal{B}^{g_z}_{1,pos} \times \mathcal{H}^{g_z}_{1,pos} \times \mathcal{L}^{g_z}_{1,pos}} P(Y_1^{g_z}=1 \mid L_1^{g_z}=l_1) \nonumber\\
    & \times  f_{H_1^{g_z+} \mid B_{1}^{g_z+}, L_1^{g_z}, V^{g_z}_{H,1}}(h_1 \mid b_{1}, l_1, v_{H,1})f(v_{H,1})\nonumber \\
    & \times f_{B_1^{g_z+} \mid L_1^{g_z}, V^{g_z}_{B,1}}(b_1 \mid l_1, v_{B,1})f(v_{B,1})\nonumber \\
    & \times P(L_1^{g_z}=l_1) \nonumber
        \end{split}
    \end{equation*}
}%

Sequentially using the weaker exchangeability conditions with respect to $Y_k^{g_z}$ for $B_k$, $H_k$, and $C_k$, of expressions \eqref{Cex1B} - \eqref{Cex1C}, respectively, and noting that all individuals have $\overline{C}_K^{g_z+}=0$,

{\tiny 
\begin{equation*}
        \begin{split}
    & =  \int_{v_{B,1}}  \int_{v_{H,1}}\sum_{\mathcal{B}^{g_z}_{1,pos} \times \mathcal{H}^{g_z}_{1,pos} \times \mathcal{L}^{g_z}_{1,pos}} P(Y_1^{g_z}=1 \mid C^{g_z}_1=0, H_1^{g_z} = g^-_{z, H, 1}(l_1, v_{B,1}, v_{H,1}), B_1^{g_z} = g^-_{z, B, 1}(l_1, v_{B,1}), L_1^{g_z}=l_1) \nonumber\\
    & \times f_{H_1^{g_z+} \mid B_{1}^{g_z+}, L_1^{g_z}, V^{g_z}_{H,1}, C^{g_z+}_0}(h_1 \mid b_{1}, l_1, v_{H,1}, 0)f(v_{H,1})\nonumber \\
    & \times f_{B_1^{g_z+} \mid L_1^{g_z}, V^{g_z}_{B,1}, C^{g_z+}_0}(b_1 \mid l_1, v_{B,1},  0)f(v_{B,1})\nonumber \\
    & \times P(L_1^{g_z}=l_1) \nonumber
        \end{split}
    \end{equation*}
}%

Sequentially using the consistency conditions for $L_k$, $B_k$, $H_k$, $C_k$, and $Y_k$ of expressions \eqref{Cass: consistency B and Y}  and \eqref{Cass: consistency H}, respectively,

{\tiny 
\begin{equation*}
        \begin{split}
    & =  \int_{v_{B,1}}  \int_{v_{H,1}}\sum_{l_1} \sum_{h_1} \sum_{b_1} P(Y_1=1 \mid C_1=0, H_1 = g^-_{z, H, 1}(l_1, v_{B,1}, v_{H,1}), B_1 = g^-_{z, B, 1}(l_1, v_{B,1}), L_1=l_1) \nonumber\\
   & \times f_{H_1^{g_z+} \mid B_{1}^{g_z+}, L_1^{g_z}, V^{g_z}_{H,1}, C^{g_z+}_0}(h_1 \mid b_{1}, l_1, v_{H,1}, 0)f(v_{H,1})\nonumber \\
    & \times f_{B_1^{g_z+} \mid L_1^{g_z}, V^{g_z}_{B,1}, C^{g_z+}_0}(b_1 \mid l_1, v_{B,1},  0)f(v_{B,1})\nonumber \\
    & \times P(L_1=l_1) 
        \end{split}
    \end{equation*}
}%

Noting that $f_{H_1^{g_z+} \mid B_{1}^{g_z+}, L_1^{g_z}, V^{g_z}_{H,1}, C^{g_z+}_0}(h_1 \mid v_{H,1}, b_{1}, l_1, 0) = I(h_1 = g^-_{z, H, 1}(l_1, v_{B,1}, v_{H,1}))$, and $f_{B_1^{g_z+} \mid L_1^{g_z}, V^{g_z}_{B,1}, C^{g_z+}_0}(b_1 \mid v_{B,1}, l_1, 0) = I(b_1 = g^-_{z, B, 1}(l_1, v_{B,1}))$, so: 

{\tiny 
\begin{equation*}
        \begin{split}
    & =   \sum_{l_1} \sum_{h_1} \sum_{b_1} P(Y_1=1 \mid C_1=0, H_1 = h_1, B_1 = b_1, L_1=l_1) \nonumber\\
    & \times \int_{v_{H,1}} f_{H_1^{g_z+} \mid B_{1}^{g_z+}, L_1^{g_z}, V^{g_z}_{H,1}, C^{g_z+}_0}(h_1 \mid b_{1}, l_1, v_{H,1}, 0)f(v_{H,1})\nonumber \\
    & \times \int_{v_{B,1}} f_{B_1^{g_z+} \mid L_1^{g_z}, V^{g_z}_{B,1}, C^{g_z+}_0}(b_1 \mid l_1, v_{B,1},  0)f(v_{B,1})\nonumber \\
    & \times P(L_1=l_1) \nonumber
        \end{split}
    \end{equation*}
}%

And using the definition of stochastic regimes in expression \eqref{eq; stochastic def},

{\tiny 
\begin{equation*}
        \begin{split}
    & =  \int_{v_{B,1}}  \int_{v_{H,1}}\sum_{l_1} \sum_{h_1} \sum_{b_1} P(Y_1=1 \mid C_1=0, H_1 = h_1, B_1 = b_1, L_1=l_1) \nonumber\\
    & \times f_{H_1^{g_z+} \mid B_{1}^{g_z+}, L_1^{g_z}, C^{g_z+}_0}(h_1 \mid b_{1}, l_1, 0)\nonumber \\
    & \times f_{B_1^{g_z+} \mid L_1^{g_z}, C^{g_z+}_0}(b_1 \mid l_1, v_{B,1},  0)\nonumber \\
    & \times P(L_1=l_1) \nonumber
        \end{split}
    \end{equation*}
}%

Now, all that is left is identifying the intervention distributions $f_{B_1^{g_z+} \mid L_1^{g_z}, C^{g_z+}_0}(\cdot \mid \cdot)$ and $f_{H_1^{g_z+} \mid B_{1}^{g_z+}, L_1^{g_z}, C^{g_z+}_0}(\cdot \mid \cdot)$, which are defined in terms of factual distributions and user-specified constraints, with the exception of marginal distributions of the natural values of treatment and of treatment eligibility under regime $g_z$. Both of these terms are identified using steps analogous to the above, particularly employing weaker exchangeability conditions with respect to $B_k^{g_z}$ for $B_{k-1}$, $H_{k-1}$, and $C_{k-1}$, and with respect to $H_k^{g_z}$ for $B_k$, $H_{k-1}$, and $C_{k-1}$ of expressions \eqref{Cex2BB} - \eqref{Cex2HC}, respectively, as needed. 

Then, we repeat for intervals $k=2,\dots, K$ and combine terms to yield $\mathbb{E}( Y^{g_z}_K ) = f^{g_z}_{Y_K}(1)$, with $P(B_j^{g_z}=1) = f^{g_z}_{B_j}(1)$, $P(H_j^{g_z}=1) = f^{g_z}_{H_j}(1)$, $P(R_j^{g_z}=1) = f^{g_z}_{R_j}(1)$, and $P(S_j^{g_z}=1) = f^{g_z}_{S_j}(1)$. 

\end{proof}

\clearpage

\section{Appendix E: Equivalence of alternative g-formula representation} 

In this section we prove that the g-formula representations of expressions \eqref{eq; gformY} and \eqref{eq; altgformY} are equivalent, under the positivity conditions in expressions \eqref{eq: positivityB} and \eqref{eq: positivityH}. We will rely on the following Lemma.

\begin{lemma}

\begin{align*}
& \mathbb{E}[(1-Y_{k-1})W^{g_z}_{H, k}W^{g_z}_{B, k} \mid V] =  \\
& \mathbb{E}[(1-Y_{k-2})W^{g_z}_{H, k-1}W^{g_z}_{B, k-1} \mid V] -\mathbb{E}[Y_{k-1}(1-Y_{k-2})W^{g_z}_{H, k-1}W^{g_z}_{B, k-1} \mid V]
\end{align*}

 where $W^{g_z}_{B, k-1}$ and  $W^{g_z}_{H, k-1}$ are defined as in expressions \eqref{eq; weight B} and \eqref{eq; weight H}, respectively.
\label{lemma; gformequivlemma}
\end{lemma}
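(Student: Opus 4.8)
The plan is to establish the stated recursion by examining the ratio of the two weight products at consecutive time points and using the fact that the one-step treatment-density ratios in $W^{g_z}_{B,k}$ and $W^{g_z}_{H,k}$ integrate to one conditionally. First I would note that by the definitions in expressions \eqref{eq; weight B} and \eqref{eq; weight H},
\begin{align*}
W^{g_z}_{B,k} &= W^{g_z}_{B,k-1}\cdot \rho_{B,k}, \qquad W^{g_z}_{H,k} = W^{g_z}_{H,k-1}\cdot \rho_{H,k},
\end{align*}
where $\rho_{B,k}$ and $\rho_{H,k}$ denote the $k$-th factors in the respective products. The key observation is that, conditional on $(R_k, \overline{L}_k)$ and on $(S_k, \overline{L}_k)$ respectively, each of $\rho_{B,k}$ and $\rho_{H,k}$ has conditional expectation $1$ over the distribution of $B_k$ (resp.\ $H_k$): summing the numerator over $B_k \in \{0,1\}$ gives $1$ (it is a valid density by the positivity conditions, which ensure the denominator is positive whenever the numerator is), while the denominator is exactly $f_{B_k \mid R_k,\overline{L}_k}(B_k \mid R_k,\overline{L}_k)$, so the ratio averages to $1$. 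On the event $R_k = 0$ (previously treated or deceased), $B_k = 0$ deterministically and the $k$-th factor equals $1$ anyway, so the statement is unaffected; similarly for $S_k = 0$.

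Next I would condition the left-hand side $\mathbb{E}[(1-Y_{k-1})W^{g_z}_{H,k}W^{g_z}_{B,k}\mid V]$ on the history through $(\overline{H}_k, \overline{B}_{k-1}, \overline{L}_k, Y_{k-1})$ and then through one more treatment variable, peeling off first $\rho_{B,k}$ (note that $W^{g_z}_{H,k-1}$, $W^{g_z}_{B,k-1}$, $Y_{k-1}$, $R_k$, and $\overline{L}_k$ are all fixed given the history prior to $B_k$) and then $\rho_{H,k}$ (fixed given the history prior to $H_k$, which includes $B_k$). Iterated expectation together with the conditional-mean-one property gives
\begin{align*}
\mathbb{E}[(1-Y_{k-1})W^{g_z}_{H,k}W^{g_z}_{B,k}\mid V] &= \mathbb{E}[(1-Y_{k-1})W^{g_z}_{H,k-1}W^{g_z}_{B,k-1}\mid V].
\end{align*}
Finally, writing $1 - Y_{k-1} = (1-Y_{k-2}) - Y_{k-1}(1-Y_{k-2})$ (valid since $Y_{k-2}=1 \Rightarrow Y_{k-1}=1$ by the monotonicity of the survival indicator, so $(1-Y_{k-1})(1-Y_{k-2}) = 1-Y_{k-1}$) and substituting yields exactly the claimed identity.

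The main obstacle is the bookkeeping in the conditioning argument: one must be careful about the topological order $(L_k, B_k, H_k, Y_k)$ so that when peeling off $\rho_{B,k}$ the conditioning set contains everything except $B_k$ itself, and when peeling off $\rho_{H,k}$ it contains $B_k$ but not $H_k$; and one must handle the $R_k=0$ and $S_k=0$ cases (where the relevant factor is identically $1$ by the convention that previously-treated or deceased individuals are untreated with probability one) so that the "conditional expectation equals one" step is literally correct on all of the probability space. Once the measurability of the $W^{g_z}_{\cdot,k-1}$ and $Y_{k-1}$ factors with respect to the appropriate histories is checked, the rest is a routine application of the tower property and the positivity assumption, which is precisely what guarantees the one-step ratios are well defined.
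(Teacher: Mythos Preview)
Your approach is essentially identical to the paper's: use the tower property and the fact that each one-step ratio $\rho_{B,k}$, $\rho_{H,k}$ has conditional mean one given its own past to reduce $W^{g_z}_{H,k}W^{g_z}_{B,k}$ to $W^{g_z}_{H,k-1}W^{g_z}_{B,k-1}$, then use the monotonicity $(1-Y_{k-1})=(1-Y_{k-1})(1-Y_{k-2})$ to split the expectation. One small slip: given the topological order $(L_k,B_k,H_k,Y_k)$ you must peel off $\rho_{H,k}$ \emph{first} (conditioning on a set that includes $B_k$) and $\rho_{B,k}$ \emph{second}; in the order you wrote, $\rho_{H,k}$ is still inside the expectation when you try to integrate out $B_k$, and since $\rho_{H,k}$ depends on $H_k$ (hence on $B_k$ through $S_k$), it is not measurable with respect to the history prior to $B_k$. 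The paper's proof does exactly this, conditioning the $H_k$-ratio on $(\overline{B}_k,\overline{L}_k,\overline{H}_{k-1})$ and then the $B_k$-ratio on $(\overline{L}_k,\overline{H}_{k-1},\overline{B}_{k-1})$.
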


\begin{proof}

First, given the positivity conditions \eqref{eq: positivityB} and \eqref{eq: positivityH}, we have that

\begin{align*}
& \mathbb{E}[(1-Y_{k-1})W^{g_z}_{H, k}W^{g_z}_{B, k} \mid V] \\
= & \mathbb{E}
\begin{bmatrix*}[l]
   & (1-Y_{k-1}) \\
   & \times W^{g_z}_{H, k-1} \mathbb{E}\begin{bmatrix*}[l]\frac{f_{H_k^{g_z+} \mid \overline{B}_{k}^{g_z+}, \overline{L}_k^{g_z}, Y_{k-1}^{g_z}, \overline{H}_{k-1}^{g_z+}}(H_k \mid \overline{B}_{k}, \overline{L}_k, 0, \overline{H}_{k-1})}{f_{H_k \mid \overline{B}_{k}, \overline{L}_k, Y_{k-1}, \overline{H}_{k-1}}(H_k \mid \overline{B}_{k}, \overline{L}_k, 0, \overline{H}_{k-1})} \mid \overline{B}_{k}, \overline{L}_k, \overline{H}_{k-1}\end{bmatrix*} \\
   & \times W^{g_z}_{B, k-1} \mathbb{E}\begin{bmatrix*}[l]\frac{f_{B_k^{g_z+} \mid \overline{L}_k^{g_z}, Y_{k-1}^{g_z}, \overline{H}_{k-1}^{g_z+}, \overline{B}_{k-1}^{g_z+}}(B_k \mid \overline{L}_k, 0, \overline{H}_{k-1}, \overline{B}_{k-1})}{f_{B_k \mid \overline{L}_k, Y_{k-1}, \overline{H}_{k-1}, \overline{B}_{k-1}}(B_k \mid \overline{L}_k, 0, \overline{H}_{k-1}, \overline{B}_{k-1})} \mid \overline{L}_k, \overline{H}_{k-1}, \overline{B}_{k-1}\end{bmatrix*} \\
   & \mid V, 
    \end{bmatrix*} \\
= & \mathbb{E}[(1-Y_{k-1})W^{g_z}_{H, k-1}W^{g_z}_{B, k-1} \mid V]
\end{align*}

Since event $Y_{k-1}=0$ implies joint event $(Y_{k-1}=0, Y_{k-2}=0)$, then:

\begin{align*}
  & \mathbb{E}[(1-Y_{k-1})W^{g_z}_{H, k-1}W^{g_z}_{B, k-1} \mid V] \\
= & \mathbb{E}[(1-Y_{k-1})(1-Y_{k-2})W^{g_z}_{H, k-1}W^{g_z}_{B, k-1} \mid V] \\
= & \mathbb{E}[(1-Y_{k-2})W^{g_z}_{H, k-1}W^{g_z}_{B, k-1} -  Y_{k-1}(1-Y_{k-2})W^{g_z}_{H, k-1}W^{g_z}_{B, k-1}\mid V] \\
= & \mathbb{E}[(1-Y_{k-2})W^{g_z}_{H, k-1}W^{g_z}_{B, k-1}\mid V]  -         
    \mathbb{E}[Y_{k-1}(1-Y_{k-2})W^{g_z}_{H, k-1}W^{g_z}_{B, k-1}\mid V] 
\end{align*}

\end{proof}

\begin{theorem}

Define $\lambda_{Y,k}^{g_z}(V)$ as in expression \eqref{eq; lambda}. It follows that the g-formula of expression \eqref{eq; gformY}

\begin{align*}
    & \sum_{\overline{l}_K} \sum_{\overline{h}_K} \sum_{\overline{b}_K} \sum_{k=1}^K P(Y_k=1 \mid \overline{H}_k=\overline{h}_k, \overline{B}_k=\overline{b}_k, \overline{L}_k=\overline{l}_k, Y_{k-1}=0) \\
    & \times \prod_{j=1}^{k} \Big\{f_{H_j^{g_z+} \mid \overline{B}_{j}^{g_z+}, \overline{L}_j^{g_z}, Y_{j-1}^{g_z}, \overline{H}_{j-1}^{g_z+}}(h_j \mid \overline{b}_{j}, \overline{l}_j, 0, \overline{h}_{j-1})\nonumber \\
    & \times f_{B_j^{g_z+} \mid \overline{L}_j^{g_z}, Y_{j-1}^{g_z}, \overline{H}_{j-1}^{g_z+}, \overline{B}_{j-1}^{g_z+}}(b_j \mid \overline{l}_j, 0, \overline{h}_{j-1}, \overline{b}_{j-1})\nonumber \\
    & \times P(L_j=l_j \mid Y_{j-1}=0, \overline{H}_{j-1}=\overline{h}_{j-1}, \overline{B}_{j-1}=\overline{b}_{j-1}, \overline{L}_{j-1}=\overline{l}_{j-1}) \nonumber\\
    & \times P(Y_{j-1}=0 \mid \overline{H}_{j-1}=\overline{h}_{j-1}, \overline{B}_{j-1}=\overline{b}_{j-1}, \overline{L}_{j-1}=\overline{l}_{j-1}, Y_{j-2}=0)\Big\} \nonumber,
\end{align*}

is equivalent to the g-formula of expression \eqref{eq; altgformY}

\begin{align*}
   & \sum_{v}\sum_{k=1}^K \lambda_{Y,k}^{g_z}(v) \prod_{j=1}^{k-1}[1-\lambda_{Y,j}^{g_z}(v)]f(v), 
\end{align*}

\label{theorem; gformequivtheorem}
\end{theorem}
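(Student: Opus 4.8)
The plan is to show that the inverse-probability-weighted hazard $\lambda_{Y,k}^{g_z}(v)$ of expression \eqref{eq; lambda} equals the standard regime-specific discrete hazard
\[
h_k^{g_z}(v) = P^{g_z}(Y_k=1 \mid \overline{Y}_{k-1}=0, V=v),
\]
and that $\prod_{j=1}^{k-1}[1-\lambda_{Y,j}^{g_z}(v)]$ equals the regime-specific survival $P^{g_z}(\overline{Y}_{k-1}=0\mid V=v)$; the telescoping/iterated-expectation structure of the g-formula in \eqref{eq; gformY} then immediately collapses to the sum in \eqref{eq; altgformY}. First I would note that the g-formula \eqref{eq; gformY}, when grouped by $V=v$ (recall $V\subseteq L_1$), can be rewritten as $\sum_v f(v)\sum_{k=1}^K h_k^{g_z}(v)\prod_{j=1}^{k-1}[1-h_j^{g_z}(v)]$, simply by recognizing the inner sums over $\overline{l}_k,\overline{h}_k,\overline{b}_k$ as the g-formula expression for the regime-specific joint law of $(Y_k, \overline{Y}_{k-1})$ given $V$; this is a bookkeeping rearrangement using only the factorization already displayed.

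The substantive step is the IPW identity
\[
\mathbb{E}\big[Y_k(1-Y_{k-1})W^{g_z}_{H,k}W^{g_z}_{B,k}\mid V=v\big] = P^{g_z}(Y_k=1,\overline{Y}_{k-1}=0\mid V=v),
\]
and similarly with $Y_k$ replaced by $1$ on the left. This is a standard change-of-measure argument: conditioning sequentially on $(\overline{L}_k,\overline{B}_k,\overline{H}_k)$ and taking iterated expectations, each ratio in $W^{g_z}_{B,k}$ and $W^{g_z}_{H,k}$ reweights the observed conditional treatment density to the intervention density \eqref{eq: int Bk}--\eqref{eq: int Hk}, which under positivity \eqref{eq: positivityB}--\eqref{eq: positivityH} is well-defined; the result is exactly the g-formula for $(Y_k,\overline{Y}_{k-1})$ under $g_z$. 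I would prove this by induction on $k$, and here Lemma \ref{lemma; gformequivlemma} is the engine: it establishes precisely the recursion $\mathbb{E}[(1-Y_{k-1})W^{g_z}_{H,k}W^{g_z}_{B,k}\mid V] = \mathbb{E}[(1-Y_{k-2})W^{g_z}_{H,k-1}W^{g_z}_{B,k-1}\mid V] - \mathbb{E}[Y_{k-1}(1-Y_{k-2})W^{g_z}_{H,k-1}W^{g_z}_{B,k-1}\mid V]$, so denominators of consecutive $\lambda$'s telescope. Dividing, $\lambda^{g_z}_{Y,k}(v) = h^{g_z}_k(v)$, and an easy induction gives $\prod_{j<k}[1-\lambda^{g_z}_{Y,j}(v)] = \mathbb{E}[(1-Y_{k-1})W^{g_z}_{H,k-1}W^{g_z}_{B,k-1}\mid V=v] = P^{g_z}(\overline{Y}_{k-1}=0\mid V=v)$, using the last display of the proof of Lemma \ref{lemma; gformequivlemma} (the weight in the ``all alive'' numerator integrates to $1$ over its own treatment arguments).

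Combining, the summand $\lambda^{g_z}_{Y,k}(v)\prod_{j<k}[1-\lambda^{g_z}_{Y,j}(v)]f(v)$ equals $h^{g_z}_k(v)\,P^{g_z}(\overline{Y}_{k-1}=0\mid V=v)\,f(v) = P^{g_z}(Y_k=1,\overline{Y}_{k-1}=0\mid V=v)f(v)$, and summing over $k$ and $v$ reproduces \eqref{eq; gformY} via the rearrangement of the first step. The main obstacle is the careful handling of the change-of-measure step when the intervention acts jointly on $B_k$ and $H_k$ within the interval with the within-interval ordering $B_k$ before $H_k$: one must peel off the $H_k$ ratio first (conditioning on $\overline{B}_k,\overline{L}_k,\overline{H}_{k-1}$) and then the $B_k$ ratio, matching the factorization order of \eqref{eq; gformY}, and must track that the deterministic relations $R_k, S_k$ (no weight contribution once previously treated or dead, since those densities are degenerate at $0$) make the ratios well-defined exactly on the positivity support. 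Everything else is routine iterated expectation.
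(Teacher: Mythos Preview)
Your proposal is correct and follows essentially the same route as the paper: both use Lemma~\ref{lemma; gformequivlemma} to telescope the product $\prod_{j<k}[1-\lambda_{Y,j}^{g_z}(v)]$ against the denominator of $\lambda_{Y,k}^{g_z}(v)$, reducing the alternate representation to $\sum_v\sum_k \mathbb{E}[Y_k(1-Y_{k-1})W_{H,k}^{g_z}W_{B,k}^{g_z}\mid V=v]f(v)$, and then invoke the standard IPW change-of-measure (iterated expectation under positivity) to identify this weighted expectation with the g-formula summand. The only organizational difference is that you route the argument through the intermediate regime-specific hazard $h_k^{g_z}(v)$ and show $\lambda_{Y,k}^{g_z}=h_k^{g_z}$ explicitly, whereas the paper telescopes first and then appeals in one line to ``laws of probability and the positivity conditions'' for the final equivalence; your version is more explicit about exactly that step, but the substance is identical.
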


\begin{proof}
By definition of $\lambda_{Y,k}^{g_z}(V)$, we can re-write the alternate g-formula expression of \eqref{eq; altgformY} as

\begin{align*}
   \sum_{v}\sum_{k=1}^K & \mathbb{E}[Y_k(1-Y_{k-1})W^{g_z}_{H, k}W^{g_z}_{B, k} \mid V=v] \\
   & \times \prod_{j=1}^{k}
        \frac{\mathbb{E}[(1-Y_{j-2})W^{g_z}_{H, j-1}W^{g_z}_{B, j-1}\mid V=v]  -
              \mathbb{E}[Y_{j-1}(1-Y_{j-2})W^{g_z}_{H, j-1}W^{g_z}_{B, j-1}\mid V=v]}
             {\mathbb{E}[(1-Y_{j-1})W^{g_z}_{H, j}W^{g_z}_{B, j}\mid V=v]}f(v),
\end{align*}

which by Lemma \ref{lemma; gformequivlemma}

\begin{align*}
   =\sum_{v}\sum_{k=1}^K & \mathbb{E}[Y_k(1-Y_{k-1})W^{g_z}_{H, k}W^{g_z}_{B, k} \mid V=v] f(v).
\end{align*}

By laws of probability, and the positivity conditions \eqref{eq: positivityB} and \eqref{eq: positivityH}, the last expression is equivalent to the g-formula  of \eqref{eq; gformY}.
\end{proof}

\clearpage

\section{Appendix F: Extension to censoring} 

In Appendix D, we provide identification results in the general setting in which censoring is present and in which our target estimands are expected potential outcomes under proportionally-representative interventions for limited resources, under the 'natural course'; that is, under an additional hypothetical intervention to prevent censoring in all individuals (i.e. $\overline{c}_K=0$). As in Appendix D consider $C_k$ to be an indicator for censoring in interval $k$, and a topological order within each interval of $\Big(L_k, B_k, H_k, C_k, Y_k\Big)$. Also, as in Appendix D redefine all regimes $g_z$, $z \in \mathcal{Z}$, to involve some proportionally-representative intervention, \textit{and} an intervention on $\overline{C}^{g_z}_K$ such that $\overline{C}^{g_z+}_K=0$ for all individuals.

In this setting, the constraints are specified as in expressions \eqref{eq; czB} and \eqref{eq; czH}, and these constraints motivate stochastic intervention distributions  as in \eqref{eq; newintB} and \eqref{eq; newintH}, except conditional on being uncensored:

\begin{align}
    f_{B_k^{g_z+} \mid R_k^{g_z+}, C_k^{g_z+}, \overline{L}_k^{g_z}}(1 \mid R_k, 0, \overline{L}_k) 
        =      & \big(\alpha_{k}(z, R_k) \times f_{B_k \mid R_k, C_k, \overline{L}_k}(1 \mid R_k, 0, \overline{L}_k)\big)
                    ^{\beth^{g_z}_{B,k}} \label{eq; CnewintB} \\
        \times & \big(1- \alpha_{k}(z, R_k) \times f_{B_k \mid R_k, C_k, \overline{L}_k}(0 \mid R_k, 0, \overline{L}_k)\big)
                    ^{1-\beth^{g_z}_{B,k}} \nonumber,
\end{align}

and

\begin{align}
    f_{H_k^{g_z+} \mid S_k^{g_z+}, C_k^{g_z+}, \overline{L}_k^{g_z}}(1 \mid S_k, 0, \overline{L}_k) 
        =      & \big(\beta_{k}(z, S_k) \times f_{H_k \mid S_k, \overline{L}_k}(1 \mid S_k, 0, \overline{L}_k)\big)
                    ^{\beth^{g_z}_{H,k}} \label{eq; CnewintH}\\
        \times & \big(1- \beta_{k}(z, S_k) \times f_{H_k \mid S_k, C_k, \overline{L}_k}(0 \mid S_k, 0, \overline{L}_k)\big)
                    ^{1-\beth^{g_z}_{H,k}} \nonumber,
\end{align}

with $\alpha_k(z, R_k)$ and $\beta_k(z, S_k)$ defined identically.

\subsection{Identification}

Identification conditions and subsequent g-formulae identification results are provided in Appendix D.

\subsubsection{Alternative g-formulae representation}

As in the main text, the g-formula can be represented as in expression \eqref{eq; altgformY}, except $\lambda_{Y,k}^{g_z}(V)$ is redefined to be

\begin{align}
    & \lambda_{Y,k}^{g_z}(V) = \frac{
        \mathbb{E}\big[Y_k(1-Y_{k-1})W_{H,k}^{g_z}W_{B,k}^{g_z}W_{C,k}^{g_z} \mid V\big]
                    }{
        \mathbb{E}\big[(1-Y_{k-1})W_{H,k}^{g_z}W_{B,k}^{g_z}W_{C,k}^{g_z}\mid V \big]           
                    }, \label{eq; lambdaC}
\end{align}

where $W^{g_z}_{B,k}$ and $W^{g_z}_{H,k}$ are redefined conditional on censoring, as in:

\begin{align}
        W_{B,k}(Z)=    \prod_{j=1}^{k}
            \begin{bmatrix*}[l]
            &  \begin{pmatrix*}[l]  \frac{
                     \Big( \alpha_j(Z, R_j) \times f_{B_{j} \mid R_{j}, C_{j-1}, \overline{L}_{j}}(1 \mid R_{j}, C_{j-1}, \overline{L}_{j})\Big)^{B_{j}}  
                    \times \Big(1- \alpha_j(Z, R_j) \times f_{B_{j} \mid R_{j}, C_{j-1}, \overline{L}_{j}}(1 \mid R_{j}, C_{j-1},  \overline{L}_{j})\Big)^{1-B_j}
                }{
                f_{B_j \mid R_j, C_{j-1}, \overline{L}_j}(B_j \mid R_j, C_{j-1}, \overline{L}_j)
                } 
                \end{pmatrix*}^{\beth_{B,k}(Z)} \\
    \times &  \begin{pmatrix*}[l]  \frac{
                     \Big(1 - \alpha_j(Z, R_j) \times f_{B_{j} \mid R_{j}, C_{j-1}, \overline{L}_{j}}(0 \mid R_{j}, C_{j-1}, \overline{L}_{j})\Big)^{B_{j}}  
                    \times \Big(\alpha_j(Z, R_j) \times f_{B_{j} \mid R_{j}, C_{j-1}, \overline{L}_{j}}(0 \mid R_{j}, C_{j-1}, \overline{L}_{j})\Big)^{1-B_j}
                }{
                f_{B_j \mid R_j, C_{j-1}, \overline{L}_j}(B_j \mid R_j, C_{j-1}, \overline{L}_j)
                } 
                \end{pmatrix*}^{(1-\beth_{B,k}(Z))} \\
            \end{bmatrix*} \label{eq; Cnewweight B}.
\end{align}

and 

\begin{align}
        W_{H,k}(Z)=    \prod_{j=1}^{k}
            \begin{bmatrix*}[l]
            &  \begin{pmatrix*}[l]  \frac{
                     \Big( \beta_j(Z, S_j) \times f_{H_{j} \mid S_{j}, C_{j-1}, \overline{L}_{j}}(1 \mid S_{j}, C_{j-1}, \overline{L}_{j})\Big)^{H_{j}}  
                    \times \Big(1- \beta_j(Z, S_j) \times f_{H_{j} \mid S_{j}, C_{j-1}, \overline{L}_{j}}(1 \mid S_{j}, C_{j-1}, \overline{L}_{j})\Big)^{1-H_j}
                }{
                f_{H_j \mid S_j, C_{j-1}, \overline{L}_j}(H_j \mid S_j, C_{j-1}, \overline{L}_j)
                } 
                \end{pmatrix*}^{\beth_{H,k}(Z)} \\
    \times &  \begin{pmatrix*}[l]  \frac{
                     \Big(1 - \beta_j(Z, S_j) \times f_{H_{j} \mid S_{j}, C_{j-1}, \overline{L}_{j}}(0 \mid S_{j}, C_{j-1}, \overline{L}_{j})\Big)^{H_{j}}  
                    \times \Big(\beta_j(Z, S_j) \times f_{H_{j} \mid S_{j}, C_{j-1}, \overline{L}_{j}}(0 \mid S_{j}, C_{j-1}, \overline{L}_{j})\Big)^{1-H_j}
                }{
                f_{H_j \mid S_j, C_{j-1}, \overline{L}_j}(H_j \mid S_j, C_{j-1}, \overline{L}_j)
                } 
                \end{pmatrix*}^{(1-\beth_{H,k}(Z))} \\
            \end{bmatrix*} \label{eq; Cnewweight H}.
\end{align}

and $W_{C,k}^{g_z}$ is defined as

\begin{align}
        W_{C,k}^{g_z}=    \prod_{j=1}^{k}\frac{
                     I(C_j=0)}{
                f_{C_j \mid H_j, S_j, L_j, C_{j-1}}(C_j \mid H_j, S_j, L_j, C_{j-1})
                }.
                 \label{eq; weight C}
\end{align}

The g-formulae for $B_k$ and $H_k$ are expressed identically as in \eqref{eq; altgformB} and \eqref{eq; altgformH}, and the additional g-formulae for $R_k$ and $S_k$ as in \eqref{eq; altgformR} and \eqref{eq; altgformS}, except all densities condition on being uncensored.

\subsection{Inverse Probability Weighted Estimation of Risk under Proportionally Representative Interventions}
\subsubsection{Marginal Structural Models}

Consider a cloned dataset as in the subsection \ref{subsec; AppAMSM} of Appendix A. Then $\lambda_{Y,k}^{g_z}(V)$, is analagously expressed as in \eqref{eq; altgformYZ}, with $W_{B,k}(Z)$ and $W_{H,k}(Z)$ analagously redefined as in \eqref{eq; newweight B} and \eqref{eq; newweight H}. The intervention to abolish censoring is invariant across regimes indexed by $z$, so the weight is not redefined. The MSM is specified identically as in Appendix A.

\subsubsection{Inverse Probability Weighted Estimation}

Let $\hat{\psi}$ be the solution to the estimating equation

\begin{align}
   & \sum_{i=1}^n\sum_{z}\sum_{k=1}^K U_{i,k}(\psi, \hat{\eta}_B, \hat{\eta}_H, \hat{\eta}_C), \label{eq; Cesteq}
\end{align},

with respect to $\psi$, where 

\begin{align}
   U_k(\psi, \hat{\eta}_B, \hat{\eta}_H, \hat{\eta}_C) = & [Y_k-h\{\gamma(k,Z,V; \psi)\}] \\
                           & \times (1-Y_{k-1})W_{B,k}(Z,\hat{\eta}_B)W_{H,k}(Z, \hat{\eta}_H)W^{g_z}_{C,k}(\hat{\eta}_C). \nonumber \label{eq; CesteqU}
\end{align}

Estimated weights $W_{B,k}(Z,\hat{\eta}_B)$,  $W_{H,k}(Z, \hat{\eta}_H)$, and $W^{g_z}_{C,k}(\hat{\eta}_C)$, and their components are defined to those in Appendix A, swapping parameters models of conditional treatment and censoring with there estimated counterparts. Estimators for $\hat{\pi}_{B,j}(Z, \hat{\eta})$ and $\hat{\pi}_{H,j}(Z, \hat{\eta})$, $\hat{\pi}_{R,j}(Z, \hat{\eta})$, and $\hat{\pi}_{S,j}(Z, \hat{\eta})$ are defined analogously  as in Section \ref{subsec: IPWarb}, except by additionally weighting by estimated censoring weights, For example,

\begin{align}
    & \hat{\pi}_{B,j}(Z, \hat{\eta}) = \frac{1}{n}\sum_{i=1}^n\big[B_{i,j}W_{H,i,j-1}(Z_i, \hat{\eta})W_{B, i, j-1}(Z_i, \hat{\eta})W_{C,i,j-1}^{g_z}(\hat{\eta})\big].
\end{align}

Then, as in Section \ref{subsec; AppAMSM}, if (i) the MSM is correctly specified; and (ii), the models $f_{B_j \mid R_j, C_{j-1}, \overline{L}_j}(B_j \mid R_j, C_{j-1}, \overline{L}_j; \eta_B)$, $f_{H_j \mid S_j, C_{j-1}, \overline{L}_j}(H_j \mid S_j, C_{j-1}, \overline{L}_j; \eta_H)$, and $f_{C_j \mid H_j, S_j, L_j, C_{j-1}}(C_j \mid H_j, S_j, L_j, C_{j-1}; \eta_C)$ are correctly specified, then we have 

\begin{align}
   \mathbb{E}[U_k(\psi^*, \eta_B^*, \eta_H^*, \eta_C^*)] = 0 
\end{align}

for all $k$, with $\eta_B^*$, $\eta_H^*$, and $\eta_C^*$ the true values of $\eta_B$, $\eta_H$, and $\eta_C$ and the IPW estimator $\hat{\psi}$ consistent and asymptotically normal for $\psi^*$.

Here, we assume pooled logistic models for $f_{B_k \mid R_k, C_{k-1}, \overline{L}_k}(B_k \mid 1, 0, \overline{L}_k; \eta_B)$ and $f_{H_k \mid S_k, C_{k-1}, \overline{L}_k}(H_k \mid 1, 0, \overline{L}_k; \eta_H)$, and $f_{C_k \mid H_k, S_k, C_{k-1}, \overline{L}_k}(C_k \mid 0, 1, 0, \overline{L}_k; \eta_C)$, that is,

\begin{align}
   f_{B_k \mid R_k, C_{k-1}, \overline{L}_k}(B_k \mid 1, 0, \overline{L}_k; \eta_B) = \text{expit}\{\phi_B(k, \overline{L}_k; \eta_B)\} \label{eq; CpooledlogitB}
\end{align}

and

\begin{align}
   f_{H_k \mid S_k, C_{k-1}, \overline{L}_k}(H_k \mid 1, 0, \overline{L}_k; \eta_H) = \text{expit}\{\phi_H(k, \overline{L}_k; \eta_H)\} \label{eq; CpooledlogitH}
\end{align}

and 

\begin{align}
   f_{C_k \mid H_k, S_k, C_{k-1}, \overline{L}_k}(C_k \mid 0, 1, 0, \overline{L}_k; \eta_C) = \text{expit}\{\phi_C(k, \overline{L}_k; \eta_C)\} \label{eq; CpooledlogitC}
\end{align}

with $\phi_B$, $\phi_H$, and $\phi_C$ specified functions of $(k, \overline{L}_k)$, differentiable with respect to $\eta_B$, $\eta_H$, and $\eta_C$. Note that $f_{B_k \mid R_k, C_{k-1}, \overline{L}_j}(0 \mid 0, 0, \overline{L}_k; \eta_B) = f_{H_k \mid S_k, C_{k-1}, \overline{L}_j}(0 \mid 0, 0, \overline{L}_k; \eta_H)=1$ by definition (that is, previously treated or deceased uncensored individuals will be untreated in interval $k$, with probability 1). Further, note that $f_{B_k \mid R_k, C_{k-1}, \overline{L}_j}(0 \mid R_k, 1, \overline{L}_k; \eta_B) = f_{H_k \mid S_k, C_{k-1}, \overline{L}_j}(0 \mid S_k, 1, \overline{L}_k; \eta_B) = f_{C_k \mid H_k, S_k, C_{k-1}, \overline{L}_k}(1 \mid 0, S_k, 1, \overline{L}_k; \eta_C)=1$ by definition (that is, censored individuals will be untreated and stay censored in interval $k$ with probability 1). Finally, note that $f_{C_k \mid H_k, S_k, C_{k-1}, \overline{L}_k}(0 \mid 0, 0, 0, \overline{L}_k; \eta_C)=1$ in this particular setting (that is, previously treated or deceased uncensored individuals will remain uncensored in interval $k$, with probability 1). This last quality of the conditional censoring density is particular to the study in the applied example, where previously treated individuals will never be treated again, and where the only other time-varying variable after treatment is death, which is measured with 100\% reliability.

Assuming the same model for $h\{\gamma(k,Z,V; \psi)\}$ as in the main text, we can solve the estimating equation with the following generalized algorithm, applied to a cloned subject-interval dataset, constructed as in section \ref{subsubsec: IPW alg arb}:

\textbf{Generalized IPW estimation algorithm for $\psi$}

\begin{enumerate}
    \item [1.] Using subject-interval records with $Z=1$ and $R_k=1$ and $C_k=0$, obtain $\hat{\eta}_B$ by fitting pooled logistic regression model \eqref{eq; CpooledlogitB} with dependent variable $B_k$ and independent variables a specified function of $k=0,\dots, K$ and $\overline{L}_k$, corresponding to the choice of $\phi_B(\cdot)$.
    
    \item [2.] Using subject-interval records with $Z=1$ and $S_k=1$ and $C_k=0$, obtain $\hat{\eta}_H$ by fitting a pooled logistic regression model \eqref{eq; CpooledlogitH} with dependent variable $H_k$ and independent variables a specified function of $k=0,\dots, K$ and $\overline{L}_k$, corresponding to the choice of $\phi_H(\cdot)$.
    
    \item [3.] Using subject-interval records with $Z=1$ and $S_k=1$ and $H_k=0$ and $C_{k-1}=0$, obtain $\hat{\eta}_C$ by fitting a pooled logistic regression model \eqref{eq; CpooledlogitC} with dependent variable $C_k$ and independent variables a specified function of $k=0,\dots, K$ and $\overline{L}_k$, corresponding to the choice of $\phi_C(\cdot)$.
    
    \item [4.] For each subject's line $k$, attach the suspected-superior treatment weight, $W_{C,k}$, calculated as:

$$
            \prod_{j=1}^k\frac{1}{1-\text{expit}\{\phi_C(j, \overline{L}_j; \hat{\eta_C})\}}
$$

    \item[5.] For all $z \in \mathcal{Z}$, $r_1 \in \{0, 1\}$, set $\alpha_{0}(z, r_0, \hat{\eta})$ and $\beta_{0}(z, s_0, \hat{\eta})$ to 1. Obtain $\alpha_1(z, r_1, \hat{\eta})$, $\aleph_{B,1}(z)$, and $\beth_{B,1}(z)$ by evaluating the estimated analogues of expression \eqref{eq; alphanew}, \eqref{eq; alephB}, and \eqref{eq; bethB}, noting that $P(B_1^{g_z+}=1)$ is defined by the intervention, and taking $\hat{\pi}_{B,1}(z, \hat{\eta})$ to be the proportion of individuals with $B_1=1$, $\frac{1}{n}\sum_{i=1}^nB_{i,1}$, and noting that $\hat{\pi}_{R,1}(z, \hat{\eta})=1$ by definition.
    
    \item[6.] For each subject's line 1, attach the suspected-superior treatment weight, $W_{B,1}$, calculated as: 
    
    $$ 
       \begin{bmatrix*}[l]
            &  \begin{pmatrix*}[l]  \frac{
                     \Big( \alpha_1(Z, R_1) \times \text{expit}\{\phi_B(1, \overline{L}_1; \hat{\eta_B})\}\Big)^{B_{1}}  
                    \times \Big(1- \alpha_j(Z, R_1) \times \text{expit}\{\phi_B(1, \overline{L}_1; \hat{\eta_B})\}\Big)^{1-B_1}
                }{
                \Big( \text{expit}\{\phi_B(1, \overline{L}_1; \hat{\eta_B})\}\Big)^{B_{1}}  
                    \times \Big(1- \text{expit}\{\phi_B(1, \overline{L}_1; \hat{\eta_B})\}\Big)^{1-B_1}
                } 
                \end{pmatrix*}^{\beth_{B,1}(Z)} \\
    \times &  \begin{pmatrix*}[l]  \frac{
                     \Big(1 - \alpha_1(Z, R_1) \times (1-\text{expit}\{\phi_B(1, \overline{L}_1; \hat{\eta_B})\})\Big)^{B_{1}}  
                    \times \Big(\alpha_j(Z, R_1) \times (1-\text{expit}\{\phi_B(1, \overline{L}_1; \hat{\eta_B})\})\Big)^{1-B_1}
                }{
                \Big( \text{expit}\{\phi_B(1, \overline{L}_1; \hat{\eta_B})\}\Big)^{B_{1}}  
                    \times \Big(1- \text{expit}\{\phi_B(1, \overline{L}_1; \hat{\eta_B})\}\Big)^{1-B_1}
                } 
                \end{pmatrix*}^{(1-\beth_{B,1}(Z))} \\
            \end{bmatrix*}.
$$

    \item[7.] For all $z \in \mathcal{Z}$, $s_1 \in \{0, 1\}$, then obtain $\beta_1(z, s_1, \hat{\eta})$, $\aleph_{H,1}(z)$, and $\beth_{H,1}(z)$.
    \item[8.] For each subject's line 1, attach the suspected-inferior treatment weight, $W_{H,1}$, calculated as: 
    
    $$
       \begin{bmatrix*}[l]
            &  \begin{pmatrix*}[l]  \frac{
                     \Big( \beta_1(Z, S_1) \times \text{expit}\{\phi_H(1, \overline{L}_1; \hat{\eta_H})\}\Big)^{H_{1}}  
                    \times \Big(1- \beta_1(Z, S_1) \times \text{expit}\{\phi_H(1, \overline{L}_1; \hat{\eta_H})\}\Big)^{1-H_1}
                }{
                \Big( \text{expit}\{\phi_H(1, \overline{L}_1; 
                    \hat{\eta_H})\}Big)^{H_{1}}  
                    \times \Big(1- \text{expit}\{\phi_H(1, \overline{L}_1; \hat{\eta_H})\}\Big)^{1-H_1}
                } 
                \end{pmatrix*}^{\beth_{H,1}(Z)} \\
    \times &  \begin{pmatrix*}[l]  \frac{
                     \Big(1 - \beta_1(Z, S_1) \times (1-\text{expit}\{\phi_H(1, \overline{L}_1; \hat{\eta_H})\})\Big)^{B_{1}}  
                    \times \Big(\beta_1(Z, S_1) \times (1-\text{expit}\{\phi_H(1, \overline{L}_1; \hat{\eta_H})\})\Big)^{1-H_1}
                }{
                \Big( \text{expit}\{\phi_H(1, \overline{L}_1; \hat{\eta_H})\}\Big)^{H_{1}}  
                    \times \Big(1- \text{expit}\{\phi_H(1, \overline{L}_1; \hat{\eta_H})\}\Big)^{1-H_1}
                } 
                \end{pmatrix*}^{(1-\beth_{H,1}(Z))} \\
            \end{bmatrix*}.
$$
    
    \item[9.] Iterate from $k=2,\dots K$:
    \begin{enumerate}
       \item [9.1.] For all $z \in \mathcal{Z}$, $r_k \in \{0, 1\}$, obtain $\alpha_{k}(z, r_k, \hat{\eta})$, $\aleph_{B,k}(z)$, and $\beth_{B,k}(z)$.
       \item [9.2.] Using subject-interval records on line $k$, attach the suspected-superior treatment weight, $W_{B,k}$, calculated as: 
       
    $$ \prod_{j=1}^k
       \begin{bmatrix*}[l]
            &  \begin{pmatrix*}[l]  \frac{
                     \Big( \alpha_j(Z, R_j) \times \text{expit}\{\phi_B(j, \overline{L}_j; \hat{\eta_B})\}\Big)^{B_{j}}  
                    \times \Big(1- \alpha_j(Z, R_j) \times \text{expit}\{\phi_B(j, \overline{L}_j; \hat{\eta_B})\}\Big)^{1-B_j}
                }{
                \Big( \text{expit}\{\phi_B(j, \overline{L}_j; \hat{\eta_B})\}\Big)^{B_{j}}  
                    \times \Big(1- \text{expit}\{\phi_B(j, \overline{L}_j; \hat{\eta_B})\}\Big)^{1-B_j}
                } 
                \end{pmatrix*}^{\beth_{B,j}(Z)} \\
    \times &  \begin{pmatrix*}[l]  \frac{
                     \Big(j - \alpha_j(Z, R_j) \times (1-\text{expit}\{\phi_B(j, \overline{L}_j; \hat{\eta_B})\})\Big)^{B_{j}}  
                    \times \Big(\alpha_j(Z, R_j) \times (1-\text{expit}\{\phi_B(j, \overline{L}_j; \hat{\eta_B})\})\Big)^{1-B_j}
                }{
                \Big( \text{expit}\{\phi_B(j, \overline{L}_j; \hat{\eta_B})\}\Big)^{B_{j}}  
                    \times \Big(1- \text{expit}\{\phi_B(j, \overline{L}_j; \hat{\eta_B})\}\Big)^{1-B_j}
                } 
                \end{pmatrix*}^{(1-\beth_{B,j}(Z))} \\
            \end{bmatrix*}.
$$
                
        \item [9.3.] For all $z \in \mathcal{Z}$, $s_k \in \{0, 1\}$, obtain $\beta_{k}(z, s_k, \hat{\eta})$, $\aleph_{H,k}(z)$, and $\beth_{H,k}(z)$.
        \item [9.4.] Using subject-interval records on line $k$ with $Z=z$, attach the suspected-superior treatment weight, $W_{H,k}$, calculated as:  
    
    $$
    \prod_{j=1}^k
       \begin{bmatrix*}[l]
            &  \begin{pmatrix*}[l]  \frac{
                     \Big( \beta_j(Z, S_j) \times \text{expit}\{\phi_H(j, \overline{L}_j; \hat{\eta_H})\}\Big)^{H_{j}}  
                    \times \Big(1- \beta_j(Z, S_j) \times \text{expit}\{\phi_H(j, \overline{L}_j; \hat{\eta_H})\}\Big)^{1-H_j}
                }{
                \Big( \text{expit}\{\phi_H(j, \overline{L}_j; 
                    \hat{\eta_H})\}Big)^{H_{j}}  
                    \times \Big(1- \text{expit}\{\phi_H(j, \overline{L}_j; \hat{\eta_H})\}\Big)^{1-H_j}
                } 
                \end{pmatrix*}^{\beth_{H,j}(Z)} \\
    \times &  \begin{pmatrix*}[l]  \frac{
                     \Big(j - \beta_j(Z, S_j) \times (1-\text{expit}\{\phi_H(j, \overline{L}_j; \hat{\eta_H})\})\Big)^{B_{j}}  
                    \times \Big(\beta_j(Z, S_j) \times (1-\text{expit}\{\phi_H(j, \overline{L}_j; \hat{\eta_H})\})\Big)^{1-H_j}
                }{
                \Big( \text{expit}\{\phi_H(j, \overline{L}_j; \hat{\eta_H})\}\Big)^{H_{j}}  
                    \times \Big(1- \text{expit}\{\phi_H(j, \overline{L}_j; \hat{\eta_H})\}\Big)^{1-H_j}
                } 
                \end{pmatrix*}^{(1-\beth_{H,j}(Z))} \\
            \end{bmatrix*}.
$$
        
    \end{enumerate}
\item[10.] Using all subject-interval records in the cloned dataset, obtain $\hat{\psi}$ by fitting a \textit{weighted} pooled logistic regression model, with weights $W_{B,k}$ and $W_{H,k}$ and $W_{C,k}$ defined in the previous steps, dependent variable $Y_k$ and independent variables a specified function of $k=1,\dots,K$ and $(Z, V)$ corresponding to the choice of $\gamma(\cdot)$.
\end{enumerate}\
\\

Our final IPW estimate of the g-formula for the risk of death by $K$ under regime $g_z$, $f^{g_z}_{Y_K}(1)$ defined by the arbitrary proportionally-representative interventions that constrain resources under the natural course can then be obtained by the plug-in estimator of expression \eqref{eq; plugin} in the main text.

\clearpage

\section{Appendix G: Model specifications in Data Analysis}

In constructing the denominator of the weights, we assumed that

\begin{align}
    \phi_B(k, \overline{L}_k; \eta_B) = \eta_{B,0}+\eta_{B, 1}^{T}g(k) + \phi^{\sim}(k, \overline{L}_k: \eta_B^{\sim}),
\end{align}

where $g(k)$ is specified as a natural cubic spline function with internal knots at $\xi_I=(2,4,12,24,54)$, boundary knots at $\xi_B=(1, 120)$, and truncated power basis functions $t_I(k, \xi_I)= \begin{Bmatrix*}[l] & (k-\xi_I)^3 \text{ if } k>\xi_I \\ & 0 \text{ otherwise.} \end{Bmatrix*}$, and $t_B(k, \xi_B)= \begin{Bmatrix*}[l] & (k-\xi_B) \text{ if } k>\xi_B \\ & 0 \text{ otherwise.} \end{Bmatrix*}$: 

\begin{align}
    g(k) = \{k, k^2, K^3, t_I(k, \xi_{I, 1}),\dots, t_I(k, \xi_{I, 5}), t_B(k, \xi_{B, 1}), t_B(k, \xi_{B, 2}) \},
\end{align}

and assumed

\begin{align}
    \phi_H(k, \overline{L}_k; \eta_H) = \eta_{H,0}+\eta_{H, 1}^Tg(k) + \phi^{\sim}(k, \overline{L}_k: \eta_H^{\sim}).
\end{align}.

Further, since all regimes involved interventions to abolish censoring, we constructed censoring weights, and, as such, specified censoring models, where 

\begin{align}
    \phi_C(k, \overline{L}_k; \eta_C) = \eta_{C,0}+\eta_{C, 1}^Tg(k) + \phi^{\sim}(k, \overline{L}_k: \eta_C^{\sim}).
\end{align}

The functional form of $\phi^{\sim}(\cdot)$ is common across treatment and censoring models and consists of the product of a model-specific vector of parameters with a vector of features defined by $L_k$. When $k=1$, these features are defined in Table \ref{table; feature table}, where categorical features are transformed into a subvector of dummy variables, and where continuous features are specified as natural cubic splines with internal knots at their 35$^{th}$ and 65$^{th}$ percentiles, and boundary knots at their 5$^{th}$ and 95$^{th}$ percentiles, parameterized analagous to $g(k)$ above. Time varying features include MELD score, specified as a natural cubic spline function identically to other continuous variables, and an interaction between Baseline MELD exception status and $g(k)$.

\clearpage
\tiny
\begin{longtable}{lrr}
\caption{Table of Baseline Covariates}\\
  \hline
  \specialcell{Variable} & \specialcell{Values} \\ 
  \hline
  Baseline MELD &  Continuous  \\ 
  Baseline MELD exception & Yes/No \\
  Status 1 & Yes/No \\
  Gender & Male/Female \\
  Race (categorical) & Pacific Islander \\
       & Hispanic or Latino\\
       & Asian\\
       & Black or African American\\
       & Native American\\
       & White\\
       & Multi-Racial\\
  Year of Listing (categorical) & 2005-2015 \\
  Age (years) & Continuous \\
  Height (cm) & Continuous \\
  Weight (kg) & Continuous \\
  Willingness to... &  \\
  \hskip .5cm     Accept Incompatible Blood Type & Yes/No \\
  \hskip .5cm     Accept Extra Corporeal Liver & Yes/No \\
  \hskip .5cm     Accept Liver Segment & Yes/No \\
  \hskip .5cm     Accept HBV Positive Donor & Yes/No \\
  \hskip .5cm     Accept HCV Positive Donor & Yes/No \\
  Patient On Life Support & Yes/No \\
  Functional Status (categorical) & Requires No Assistance  \\
                    & Some assistance\\
                    & Total assistance\\
  Primary Diagnosis (categorical) & Cholestatic \\
                    & Fulminant Hepatic Failure\\
                    & Malignant Neoplasm\\
                    & Metabolic\\
                    & Non-cholestatic/Other\\
  Spontaneous Bacterial Peritonitis & Yes/No \\
  History of Portal Vein Thrombosis & Yes/No \\
  History of TIPSS & Yes/No \\
   \hline
\hline \label{table; feature table}
\end{longtable}
\normalsize

\clearpage

\end{document}